\documentclass[journal,onecolumn,draftclsnofoot,12pt]{IEEEtran}
\usepackage{bm}
\usepackage{amsmath,amssymb,amsfonts}
\usepackage{algorithmic}
\usepackage{graphicx}
\usepackage{textcomp}
\usepackage{xcolor}
\usepackage{cleveref}
\usepackage[backend=biber,style=ieee]{biblatex}
\usepackage{amsthm}
\usepackage{algorithm}
\usepackage{subfigure}
\usepackage{multirow}
\usepackage{verbatim}
\usepackage{enumitem}
\usepackage{url}
\usepackage{tikz}
\usetikzlibrary{matrix}
\usepackage{nicematrix}
\usepackage{mathtools}
\usepackage{stackengine}
\newtheorem{theorem}{Theorem}
\newtheorem{lemma}{Lemma}
\newtheorem{proposition}{Proposition}

\newtheorem{remark}{Remark}
\newtheorem{example}{Example}
\newtheorem{corollary}{Corollary}
\newtheorem{cor}{Corollary}

\newcommand{\Adn}{\mathbf{A}_{n,d}}

\newcommand{\Ex}{\mathbb{E}}
\newcommand{\Prob}{\mathbb{P}r}

\title{Fundamental Limits of Hypergraph Edge Partitioning under Independent Edge Sampling}

\author{Javad Maheri, K.~K.~Krishnan Namboodiri, and Petros Elia%
\thanks{The authors are with the Communication Systems Department, EURECOM, Sophia Antipolis, France (e-mail: \{maheri, karakkad, elia\}@eurecom.fr).}
\thanks{This work was supported by the France 2030 ANR program ``PEPR Networks of the Future'' (ref. ANR-22-PEFT-0010). 
}}

\begin{document}

\maketitle
\begin{abstract}
Hypergraph edge partitioning is a central problem in theoretical and applied computer science, with broad impact on distributed computation, communications, optimization, and machine learning. In this setting, one is given a collection of hyperedges -- each consisting of up to $d$ vertices from a ground set of size $n$ -- and seeks to assign these hyperedges across $N$ partitions so as to minimize, for example, the vertex footprint, i.e., the maximum number of vertices that appear in any partition.  
We here identify the fundamental limits of hypergraph edge partitioning  -- optimized over all conceivable algorithms -- for a broad class of probabilistic hypergraph models where each hyperedge may appear independently with \emph{its own} probability; a model sufficiently general to encompass well-known models such as the Degree-Corrected or Mixed-Membership models, the Hypergraph Stochastic Block model, the Latent-Space/Geometric or Kernel Models, and others. 
By pairing our deterministic partitioner with a new converse, we first show that, for any $n,d$, and under the very mild condition of $N \leq \binom{\lfloor\sqrt{\frac{nd}{2}}\rfloor}{d}$, as long as the hyperedge set $\mathbf{X}$ satisfies \textcolor{black}{$|\mathbf{X}| \geq \frac{13}{2}nN\left(\log N +\frac{12}{13}  \frac{\log (2n^z)}{n}\right)$}, then with probability at least \textcolor{black}{$1-2/3n^z$}, no algorithm can provide a footprint $\pi_{\mathbf{X}}$ less than $$\pi^{\bigstar}_{\mathbf{X}} = \textcolor{black}{\frac{1}{2\sqrt{2}}}\frac{n}{N^{1/d}}. $$ We then show that our hypergraph partitioner comes to within a small constant factor from $\pi^{\bigstar}_{\mathbf{X}}$, for \emph{each and every hypergraph} $\mathbf{X}$. This optimality captures dense and sparse hypergraphs alike (with sizes down to linear in $n$), and it additionally entails a near-optimally balanced allocation of hyperedges across partitions. The result is not asymptotic, and rather surprisingly, this near-optimal performance can be achieved by a blind and efficient deterministic scheme (the overall complexity is $O(|\mathbf{X}|)$) with \emph{fixed} vertex placement independent of $\mathbf{X}$. This is perhaps the most surprising consequence of our results; that under our assumptions, near-optimal footprint minimization can be universally achieved by a partitioning scheme whose vertex placement is completely independent of the realized hypergraph.
\end{abstract}
\maketitle

\thispagestyle{empty}

\section{Introduction} 

Hypergraph partitioning is a fundamental combinatorial optimization problem that arises in a wide range of applications spanning engineering, scientific computing, machine learning, data analytics, and VLSI design. Classical hypergraph partitioning seeks to divide a hypergraph into balanced blocks while optimizing objectives such as cut size, communication volume, or load balance \cite{KarypisIrregular,aykanat2008multi,BUI1992153,kernighan1970efficient,fischer2009approximate,Chekuri2018MinimumCA,catalyurek2019profile,li2018fiedler}. By employing hypergraphs -- where hyperedges connect arbitrary subsets of vertices -- such formulations can naturally capture higher-order dependencies that arise in complex systems.  

\nocite{schlag2023high}
As a consequence, hypergraph partitioning has indeed been successful in tackling many challenges in a broad range of domains, partly because hypergraphic models provide a natural representation of multi-way interactions in various settings. Beyond high-performance scientific computing, and VLSI design~\cite{MultiLevelVLSI,1676460,devine2006parallel}, hypergraph partitioning formulations are also widely used in distributed databases, large-scale data analytics, social network analysis, and route planning \cite{karypis1999multilevel,heuer2021multilevel,catalyurek2023survey,schlag2023high}, where hypergraph models facilitate scalable storage, computation, and communication. Very pertinent are also its applications in sparse matrix computations, where hypergraph partitioning  captures communication requirements in operations such as sparse matrix--vector multiplication, enabling efficient distribution of data and computation across processors \cite{catalyurek1999hypergraph,devine2006parallel,buluc2016parallel,ballard2016hypergraph,DEVECI201569,catalyurek:hal-00763563}. 

In practice, hypergraph partitioning has been dominated by the multilevel paradigm, consisting of coarsening, initial partitioning, and refinement stages. This philosophy underlies influential systems such as hMETIS, PaToH, and KaHyPar \cite{MultiLevelVLSI,schlag2023high,heuer2021multilevel}. Modern developments include deterministic parallel algorithms such as BlasPart and Mt-KaHyPar, which target reproducibility and scalability on massive instances \cite{tong2025blaspart,gottesburen2022deterministic}, while local-search heuristics, particularly the Fiduccia--Mattheyses (FM) algorithm, remain central to partition refinement through iterative improvement procedures \cite{fiduccia1988linear,schlag2023high}. Similar ideas have also been adopted in distributed computing applications; for example, \cite{song2022hypergraph} proposed an online scheduling framework based on hypergraph partitioning and FM-inspired refinement, while \cite{ronzani2026hypergraph} studied GPU-based multilevel hypergraph partitioning under hard resource constraints, while the increasing scale of modern hypergraphs has motivated the development of numerous parallel partitioning frameworks \cite{1372057,Lars2021,Kabiljo2017,MalekiABP21}.  A nice survey of recent advances can be found here \cite{catalyurek2023survey}.

\paragraph{Hypergraph edge partitioning} In this work, we study the specific problem of \emph{hypergraph edge partitioning}, where the objective is to assign the hyperedges of a hypergraph to \(N\) groups while controlling the replication of vertices induced by this assignment. Unlike the more classical vertex-partitioning formulation, where vertices are distributed across blocks while minimizing cut-related objectives, edge partitioning allocates the hyperedges themselves and seeks to limit various vertex-replication costs. 

Such edge-partitioning formulations arise naturally whenever hyperedges represent tasks, computations, transactions, communication patterns, or data dependencies that must be assigned across distributed resources. Such formulations have emerged in a variety of large-scale systems. In distributed graph analytics, assigning edges rather than vertices has been shown to significantly reduce communication bottlenecks and improve scalability on highly irregular graphs, motivating systems such as \textcolor{black}{PowerGraph and subsequent edge-partitioning frameworks \cite{gonzalez2012powergraph,Zhang2017GraphEdgePartitioning,mayer20202pshighqualityedgepartitioning,Bourse2014}. Similar ideas have been employed in large-scale graph-processing infrastructures operating at the trillion-edge scale, where edge partitioning provides a practical mechanism for balancing computation while controlling vertex replication and communication overhead \cite{TrillionEdges,Dynamic}.} Hypergraph edge partitioning also arises naturally in distributed storage systems, database sharding, parallel data analytics, and distributed task allocation, where hyperedges represent groups of data items or computational tasks that must be co-located while minimizing replication costs. 

From an algorithmic perspective, existing approaches include exact and optimization-based formulations \cite{SATPart,mykhailenko2017simulated}, hybrid approaches~\cite{Mayer2021Hybrid}, and balanced hyperedge-partitioning algorithms for large-scale data applications \cite{HEPart}. 
For example, 
\textcolor{black}{
one recent line of research has focused on scalable edge-partitioning frameworks for distributed graph analytics, where again the objective is to balance edge assignments while minimizing vertex replication and communication overhead. Representative examples include PowerGraph and subsequent scalable edge-partitioning systems, which offer additional evidence that assigning edges rather than vertices can substantially improve the processing of highly irregular networks and massive social graphs \cite{gonzalez2012powergraph,Zhang2017GraphEdgePartitioning,mayer20202pshighqualityedgepartitioning,schlag2018scalable}. Another direction has pursued partitioners based on projective planes~\cite{ProjectivePlane} or highly scalable hash-based partitioners, while 
a substantial body of work has focused on streaming graph edge partitioning, where edges arrive sequentially and must be assigned online while always balancing load and limiting vertex replication. Representative examples can be found here~\cite{Petroni2015HDRF,Mayer2018ADWISE,Hoang2019CUSP}, together with other related replication-aware streaming partitioners \cite{Sajjad2016Boosting}, which trade structural awareness for speed and scalability, making them attractive for dynamic and large-scale settings~\cite{PartitioningTrillionEdges,chhabra_et_alV1,Streaming}. 
}

\paragraph{Formalization}
To formalize the hypergraph edge partitioning problem, we first consider a hypergraph $\mathcal{H}(\mathcal{V}, \mathbf{X})$, where $\mathcal{V}$ denotes the vertex set, and $\mathbf{X} \subseteq \binom{\mathcal{V}}{d}$ denotes the hyperedge set. An edge partitioning scheme, by assigning each hyperedge to one of $N$ groups, results in a partition $\{\mathbf{\Phi}_b\}_{b=1}^N$ of $\mathbf{X}$. Labeling $\mathcal{V}$ by $[n] = \{1,2,\dots,n\}$, each group $b$ then `sees' a vertex footprint 
\begin{equation}
\label{eq: alpha_phi}
    \alpha(\mathbf{\Phi}_b)\triangleq\{i \in [n] \mid \exists \mathbf{e} \in \mathbf{\Phi}_b, \quad i \in \mathbf{e} \}
\end{equation}
which brings to the fore the maximum vertex footprint (MVF)\footnote{
There is a variety of such metrics. Another common variant -- which, after normalization, serves as a lower bound on the MVF -- is the average replication factor (ARF), measuring the average number of groups/partitions to which a vertex is assigned. One advantage of the MVF metric is that it admits a meaningful characterization independently of the balance parameter \(\delta_{\mathbf{X}}\), whereas the ARF depends strongly on it. Furthermore, the MVF captures a worst-case quantity, thus naturally reflecting communication and storage bottlenecks in distributed settings~\cite{ajwani2023open}.
} defined as
\begin{equation}
    \label{eq:pi_X }
    \pi_\mathbf{X} \triangleq \max_{b \in [N]} |\alpha(\mathbf{\Phi}_b)|
\end{equation}
which measures the size of the largest induced vertex set $\alpha(\mathbf{\Phi}_b)$ across all groups. Additionally, the load-balance constraint is captured by the parameter $\delta_\mathbf{X}$, defined, for example, as the ratio between the maximum group size and the ideal group size. 

Despite the aforementioned  algorithmic advances, the theoretical understanding of hypergraph edge partitioning -- whether this is for the above footprint metric, or related metrics -- remains incomplete. While graph partitioning \((d=2)\) enjoys some performance guarantees 
\cite{Dynamic,TrillionEdges,Zhang2017GraphEdgePartitioning}, comparatively little is known about the fundamental limits of hypergraph edge partitioning for general hypergraphs \((d\ge 3)\). The present work provides explicit characterizations of the achievable and unavoidable footprint in a broad class of probabilistic hypergraph models.

\paragraph{The IC-Design}
\textcolor{black}{The recent work in~\cite{IC,maheri2026order} proposed a combinatorial construction, termed the Interweaved-Cliques (IC) design, that leverages carefully structured interweaving sets to allow for hypergraph edge partitioning. Clique-based set structures also arise in several problems in coding and information theory, including coded caching and coded distributed computing~\cite{MaN,MKR,PNR2,Li2018CodedMapReduce,BrEl2,BrEl}. In contrast, the IC design leverages such structures for hypergraph edge partitioning, with the objective of minimizing the vertex footprint.} Focusing on the setting of $d$-uniform hypergraphs, the work was able to show the order-optimality of the design, doing so only for the extreme case of \textit{very dense and homogeneous hypergraphs} -- where $\mathbf{X}$ is drawn uniformly and densely from the different neighborhoods/subsets of the complete $d$-uniform hypergraph $\Adn$.  
From a probabilistic standpoint, the model in~\cite{IC} can be interpreted as capturing a high-entropy, dense regime, corresponding to a worst-case instance of an i.i.d. thinning model where $\mathbf{X}$ samples uniformly and densely from $\Adn$. As such, this prior approach does not account for heterogeneity, structural dependencies, or sparsity in the hyperedge set.

\paragraph{Going beyond maximal entropy hypergraphs -- capturing heterogeneity, structural dependencies, and sparsity} In contrast, the present work considers a substantially more general model in which each potential hyperedge $\mathbf{e} \in \Adn$ is included independently with its own prescribed probability $\varphi_{\mathbf{e}}$, allowing for sparsity, broad heterogeneity and differently-structured neighborhoods in $\mathbf{X}$. This generalization enables us to optimally capture various structure regimes and a very broad sparsity range. In particular, while prior results~\cite{IC} were confined to dense instances where $|\mathbf{X}|$ is within a constant factor from the size $|\Adn|$ of the complete hypergraph $\Adn$, we here establish tight fundamental limits under the much broader condition $|\mathbf{X}| \gtrsim n N \log N$, thereby extending optimality down to very sparse $\mathbf{X}$. For example, under a large-$n$ interpretation, our presented result here extends down to \textit{hypergraphs whose edge set size can be of minimal order, as low as linear in $n$}.

We will treat both the $d$-uniform case and the general case of edge sets
\[
\mathbf{X}\subseteq \mathbf{A}_{n,1}\cup\mathbf{A}_{n,2}\cup\cdots\cup\mathbf{A}_{n,d}
\]
which may contain singleton hyperedges, pairwise hyperedges, and, more generally, hyperedges of size up to $d$.

\paragraph{Positioning this work}
Our work is related to a growing body of theory-oriented research on hypergraph optimization and partitioning, including approximation algorithms for hypergraph partitioning \cite{fischer2009approximate}, structural studies of hypergraph cuts and sparsification \cite{Chekuri2018MinimumCA}, hypergraph-based formulations for combinatorial optimization objectives \cite{catalyurek2019profile}, and spectral approaches to hypergraph partitioning based on Laplacian tensors \cite{li2018fiedler}. 
Complementary to many of the existing works in hypergraph partitioning, our emphasis is less on designing of improved partitioning algorithms, and more on identifying the fundamental limits of hypergraph edge partitioning under broad probabilistic hypergraph models. Specifically, we derive converse bounds on the minimum achievable maximum vertex footprint and show that an explicit, universal and deterministic construction achieves these limits to within small constant factors. 

\nocite{maheri2025constructing}
\section{Problem Definition and Setup} \label{sec:ProbDef}

We begin with the hypergraph model. 

\subsection{Hypergraph Model}
For given positive integers ($n$, $d$, $N$), we consider a hypergraph $\mathcal{H}(\mathcal{V}, \mathbf{X})$, where $\mathcal{V}=[n] \triangleq \{1, 2, \ldots, n\}$ is the vertex set, and where the hyperedge set $\mathbf{X}$ is a subset of the ambient space $\mathbf{A}$. In the $d$-uniform case, the ambient space takes the form $\mathbf{A}=\Adn$, while in the general case, it takes the form of $\mathbf{A}=\bigcup_{k=1}^{d}\mathbf{A}_{n,k}$; the collection of all $k$-element subsets of $[n]$, from $k=1$ to $k=d$. 

An edge partitioning of $\mathbf{X}$ into $N$ groups is a map from $\mathbf{X}$ to a collection $\{\mathbf{\Phi}_1, \mathbf{\Phi}_2, \ldots, \mathbf{\Phi}_N\}$ consisting of $N$ groups that form a partition of $\mathbf{X}$, where each hyperedge of $\mathbf{X}$ is assigned to exactly one group, the groups are pairwise disjoint, and their union equals $\mathbf{X}$. We denote by $\boldsymbol{\mathcal{S}}_{\mathbf{X}}$ a specific edge partitioning scheme that produces such a partition.

For each group $b \in [N]$, $\alpha(\mathbf{\Phi}_b)$ is the set of distinct vertices appearing in at least one edge of $\mathbf{\Phi}_b$, i.e.,
\[
    \alpha(\mathbf{\Phi}_b)=\{i \in [n] \mid \exists \mathbf{e} \in \mathbf{\Phi}_b, \quad i \in \mathbf{e} \}.
\]
This represents the \emph{vertex footprint} of group $b$. 

\subsection{Metrics and Objective}

We evaluate partitioning schemes using two primary metrics that capture different aspects. First we have the aforementioned maximum vertex footprint (MVF),
\[
  \pi_{\mathbf{X}} = \max_{b \in [N]} |\alpha(\mathbf{\Phi}_b)|
\]
which represents the maximum number of vertices that appear (across the edges in) a group $\mathbf{\Phi}_b$, maximized across all groups \(b \in [N]\). Then we have the load balance factor, which is defined as \begin{equation}
\label{eq: delta-def}
    \delta_{\mathbf{X}} \triangleq \frac{\max_{b \in [N]} |\mathbf{\Phi}_b|}{\lceil|\mathbf{X}|/N\rceil}
\end{equation}
and which will here measure the imbalance between the maximum group size and the ideal group size. Perfect balance corresponds to $\delta_{\mathbf{X}} = 1$, while larger values indicate increasing imbalance. 
Our primary objective is to characterize the minimum achievable maximum vertex footprint over all partitioning schemes. 

We present below the statistical model, which comes into the picture only for  the converse. 

\subsection{Statistical Model}
We consider a general probabilistic model for \emph{independently} generating hyperedges from the ambient space $\mathbf{A}$. 
Let us denote the corresponding probabilistic generating vector by
\begin{equation}
 \boldsymbol{\varphi}\triangleq (\varphi_{\mathbf{e}})_{\mathbf{e}\in \mathbf{A}}
\end{equation}
where each hyperedge $\mathbf{e}$ may appear with its own probability $\varphi_{\mathbf{e}}$ that can vary arbitrarily, but which, first for the $d$-uniform case, must satisfy
\begin{equation} \label{eq:phi1}
 \varphi_{\mathbf{e}}\ge \varphi_{\mathrm{min}} = \varphi_{\mathrm{min}}^{u}  \triangleq \min\left\{\frac{15nN(\log N+\frac{4}{5}\frac{\log (2n^z)}{n})}{{n \choose d}},1 \right\}
 \end{equation}
 \textcolor{black}{while for the non-uniform case, it must satisfy 
  \begin{equation} \label{eq:phi2}
 \varphi_{\mathbf{e}}\ge \varphi_{\mathrm{min}}= \varphi_{\mathrm{min}}^{n} \triangleq \min\left\{\frac{13nN (\log N+\frac{12}{13}\frac{\log(2n^z)}{n})}{{\sum_{k=1}^{d}\binom{n}{k}}},1\right\}
\end{equation}} 
to mainly account for the larger size of the non-uniform ambient space. The above parameter $z\geq 1$ calibrates, as we will see, the probability that the derived performance guarantees apply. To facilitate a unified presentation of the results, we are using $\varphi_{\min}$ to mean $\varphi_{\min}^{u}$ in the $d$-uniform setting and to mean $\varphi_{\min}^{n}$ in the non-uniform setting. 

\begin{remark}
The above model entails hypergraphs whose cardinality ranges from the dense or even complete case of $|\mathbf{X}| \approx |\mathbf A|$, to an opposite scenario where edge-set sizes can go down to 
$\varphi_{\min}|\mathbf A|
\approx
nN\log N, 
$
that is linear in $n$. 
\end{remark}

\textcolor{black}{\paragraph{Model generality}
The above formulation does not impose any specific structural form on
the probabilities $\varphi_{\mathbf{e}}$. Instead, the vector
$\boldsymbol{\varphi}$ may represent an arbitrary statistical mechanism
that satisfies the edge-independence assumption, while still inducing a
broad range of structures across the observed hypergraph.
Consequently, many commonly studied hypergraph models are encompassed by
our framework, either directly or conditionally on the realization of
their latent variables\footnote{Models with latent variables (e.g., stochastic block, mixed-membership, latent-space, or geometric models) are encompassed conditionally on the realization of those latent variables, which induces edge-specific probabilities $\varphi_{\mathbf e}$ and independent edge sampling.  In more specific terms: a model, such as a geometric model, may shape arbitrarily the probabilities $\varphi_{\mathbf{e}}$ with which each hyperedge $\mathbf{e} \in \mathbf{A}$ appears, and after that, the sampling is done independently, edge after edge.}. Representative examples include:
\begin{itemize}[leftmargin=10pt]
\item \textbf{Uniform iid model:} $\varphi_{\mathbf{e}}$ is the same for all
$\mathbf{e}$.
\item \textbf{Hypergraph stochastic block model (HSBM):}
$\varphi_{\mathbf{e}}$ depends on the community labels of the vertices in
$\mathbf{e}$, with probabilities determined by block-interaction
parameters.
\item \textbf{Degree-corrected or mixed-membership models:}
$\varphi_{\mathbf{e}}$ depends on vertex-specific parameters or
community-membership mixtures.
\item \textbf{Latent-space/geometric or kernel models:}
conditional on the latent variables associated with the vertices,
$\varphi_{\mathbf{e}}$ is determined by a kernel or similarity function
of those latent variables.
\end{itemize}
}




\subsection{Paper Organization}

After Section~\ref{sec:ProbDef} introduced the hypergraph edge-partitioning problem, the performance metrics of interest, and the probabilistic hypergraph model considered throughout the paper, Section~\ref{sec:MainResults} will present the main theoretical contributions. In this same section, we first establish a converse bound on the minimum achievable maximum vertex footprint through Lemma~\ref{lem:prob_pi}. We then present Theorem~\ref{thm:main}, which characterizes the optimality region of hypergraph edge partitioning under the considered independent edge sampling model. The section further includes Corollary~\ref{cor:fasterZ-cardinality}, which specializes the result to hypergraphs of a given cardinality, as well as Corollaries~\ref{cor:ModifiedConverse} and~\ref{cor:reducedGaps2} which strengthen the main characterization by deriving tighter converse bounds and tighter achievable gaps to optimality. Furthermore, Proposition~\ref{prop: conc on group size} establishes that the proposed construction simultaneously achieves a near-optimal load-balance factor with high probability. Then, Section~\ref{sec:examplesAndRemarks} provides examples, interpretations, and additional discussion that help illustrate the implications of the theoretical results and the resulting gaps to optimality.

Subsequently, Section~\ref{sec:achievability} presents the interweaved-cliques design. We first construct partitions of the complete ambient spaces $\mathbf{A}_{n,d}$ and $\bigcup_{k=1}^{d}\mathbf{A}_{n,k}$, then analyze their footprint properties, extend the construction to the desired number of groups, and finally refine the resulting partition to accommodate an arbitrary realized hyperedge set $\mathbf{X}$.

Finally, Section~\ref{sec:Conclusions} concludes the paper. The proofs of the main results are deferred to Appendix~\ref{appendix:MainProofs}, while additional technical arguments and auxiliary results are provided in Appendix~\ref{appendix:AdditionalProofs}.

We now proceed to present the main results.

\section{Main Results} \label{sec:MainResults}
We here derive the fundamental limits of our edge partitioning problem. We present the results for both the case of $d$-uniform hypergraphs and the general case of non-uniform hypergraphs together; however, the proofs for these cases are developed separately when necessary. Please recall that the ambient space $\mathbf{A}$ will correspond to $\Adn$ and $\bigcup_{k=1}^{d}\mathbf{A}_{n,k}$ in the cases of $d$-uniform and non-uniform hypergraphs, respectively. We now present 
lower and upper bounds on the optimal maximum vertex footprint $\pi^\star_{\mathbf{X}}$,
\begin{equation}
\label{eq: opt regime}
\pi^\star_{\mathbf{X}} \triangleq\min_{\boldsymbol{\mathcal{S}}_{\mathbf{X}}} \left( \max_{b \in [N]} |\alpha(\mathbf{\Phi}_b)| \right)  
\end{equation}
where the optimal (minimum) is taken over all\footnote{By all schemes, we refer to all edge partitioning schemes that assign each edge to exactly one of the $N$ groups.} $N$-edge partitioning schemes \(\boldsymbol{\mathcal{S}}_{\mathbf{X}}\).  

For the following converse, $\mathbf{X}$ is viewed as a realization of an independent random thinning of $\mathbf{A}$, where each edge $\mathbf{e}\in \mathbf{A}$ is retained independently with its own probability $\varphi_{\mathbf{e}}$, satisfying~\eqref{eq:phi1},~\eqref{eq:phi2}. 

Let us now proceed with the converse on the optimal maximum vertex footprint, $\pi_{\mathbf{X}}^{\star}$. 

\begin{lemma}[Converse bound]
\label{lem:prob_pi}
With probability at least $1-\frac{1}{n^z}$, $\mathbf X$ will be such that 
the optimal maximum vertex footprint over all $N$-edge partitioning schemes, satisfies
\[
\pi_{\mathbf X}^{\star}
\ge
\frac{1}{2}\frac{n}{N^{1/d}} \ \ \ \ \ \ \text{and} \ \ \ \ \ \ \ \ \pi_{\mathbf X}^{\star}
\ge
\textcolor{black}{\frac{n}{2\sqrt{2}N^{1/d}}}
\]
in the uniform and non-uniform cases, respectively. 
\end{lemma}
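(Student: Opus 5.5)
Our plan is a union bound over all small vertex sets, combined with a counting (pigeonhole) argument. Fix a threshold $t$: $t=\lfloor \frac{n}{2N^{1/d}}\rfloor$ in the uniform case and $t=\lfloor\frac{n}{2\sqrt2 N^{1/d}}\rfloor$ in the non-uniform case. Suppose some scheme attains $\pi_{\mathbf X}<t$. Then every group $\mathbf{\Phi}_b$ lies inside the hyperedges spanned by a vertex set $S_b$ with $|S_b|=t$; pad $\alpha(\mathbf{\Phi}_b)$ arbitrarily up to size $t$ to get $S_b$. Hence $\mathbf X\subseteq\bigcup_{b=1}^N \mathbf{A}(S_b)$, where $\mathbf{A}(S)$ denotes the hyperedges of the ambient space contained in $S$. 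Equivalently, $\mathbf X$ misses every edge of the complement set $\mathbf{C}(S_1,\dots,S_N)\triangleq \mathbf A\setminus\bigcup_b \mathbf{A}(S_b)$. The converse event therefore reduces to showing that, with high probability, no $N$-tuple of $t$-sets has all of its complement edges absent from $\mathbf X$.

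\textbf{Step 1: size of the complement.} In the uniform case, $|\bigcup_b\mathbf A(S_b)|\le N\binom{t}{d}\le N (t/n)^d\binom nd\cdot c$, using $\binom{t}{d}/\binom nd\le (t/n)^d$. With $t\le n/(2N^{1/d})$ this is at most $2^{-d}\binom nd\le\frac12\binom nd$, so $|\mathbf C|\ge\frac12\binom nd$. In the non-uniform case, we bound $\sum_{k\le d}\binom tk$ against $\sum_{k\le d}\binom nk$. The ratio is handled termwise by $(t/n)^k$. The bad terms are the small $k$, where $(t/n)^k$ is large, and this is why the constant degrades to $2\sqrt2$. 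We will show that $N\sum_k\binom tk\le\frac12\sum_k\binom nk$, possibly using $N\le\binom{\lfloor\sqrt{nd/2}\rfloor}{d}$ so that the top terms dominate the sum.

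\textbf{Step 2: probability for a fixed tuple.} By independence and $\varphi_{\mathbf e}\ge\varphi_{\min}$,
\[
\Prob\big(\mathbf X\cap\mathbf C=\emptyset\big)=\prod_{\mathbf e\in\mathbf C}(1-\varphi_{\mathbf e})\le e^{-\varphi_{\min}|\mathbf C|}\le e^{-\varphi_{\min}|\mathbf A|/2}.
\]
If $\varphi_{\min}=1$, then $\mathbf X=\mathbf A$ deterministically and the bound is immediate, since $\mathbf C$ is nonempty. Otherwise the exponent is about $\frac{15}{2}nN\log N+6\log(2n^z)$ in the uniform case, and analogously with the constants $13$ and $12/13$ in the non-uniform case.

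\textbf{Step 3: union bound.} The number of tuples is at most $\binom nt^N\le 2^{nN}$; the sharper estimate $(en/t)^{tN}$ is also available. Multiplying by the bound from Step 2 gives a failure probability at most $\exp(nN\log2-\frac{15}{2}nN\log N-6\log(2n^z))\le \frac{1}{n^z}$ whenever $N\ge2$. The $N=1$ case is trivial, because then $t\le n/2$ and the argument still goes through with the $\log(2n^z)$ term absorbing the count. The most delicate point is this counting: for $N=1$ we have $\log N=0$, so the $2^{n}$ tuples must be beaten by $\frac{4}{5}\log(2n^z)\cdot\frac{15}{2}$ alone, and that fails. We therefore expect to need the $\log N$ factor, and for small $N$ to use the stronger complement bound $|\mathbf C|\ge(1-2^{-d})|\mathbf A|$ or a count restricted to $\binom nt$ with $t\le n/2$. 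For $N=1$ and $t<n$, a single $t$-set omits some vertex, every hyperedge containing that vertex is in $\mathbf C$, and a direct per-vertex union bound over $n$ vertices then suffices.

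We expect the main obstacle to be Step 1 in the non-uniform case, together with balancing the constants in Step 3 so that exactly $1-n^{-z}$ emerges.
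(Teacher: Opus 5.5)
Your argument takes a genuinely different route from the paper's, and it works in the uniform case for $N\ge2$; there are two gaps, at $N=1$ and in Step~1 of the non-uniform case.

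\textbf{Comparison with the paper.} The paper applies a lower-tail Chernoff bound to get $|\mathbf X|\ge\frac12\varphi_{\min}|\mathbf A|$. It then pigeonholes one group with at least $E_{\mathrm{th}}=\varphi_{\min}|\mathbf A|/(2N)$ edges, and union-bounds an upper-tail Chernoff estimate for $Y_{\mathcal S}=|\mathbf X[\mathcal S]|$ over only $\binom nt$ sets. You instead union-bound over all $N$-tuples of $t$-sets and use the avoidance probability $\prod_{\mathbf e\in\mathbf C}(1-\varphi_{\mathbf e})\le e^{-\varphi_{\min}|\mathbf C|}$. The larger count $\binom nt^N\le 2^{nN}$ is offset by an exponent $N$ times larger. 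This buys you two things. You need no concentration of $|\mathbf X|$. You also handle heterogeneous $\varphi_{\mathbf e}$ exactly, whereas the paper's replacement of $Y_{\mathcal S}$ by a $\mathrm{Binomial}(\binom td,\varphi_{\min})$ variable is not a worst case for an upper-tail event and needs a monotone coupling to the homogeneous model. For the uniform case with $N\ge2$, your Steps 1--3 are complete. To get $\pi^\star_{\mathbf X}\ge n/(2N^{1/d})$ rather than its floor, start from $\pi_{\mathbf X}<n/(2N^{1/d})$; this forces every footprint to have size at most $t=\lfloor n/(2N^{1/d})\rfloor$, and then you pad.

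\textbf{The case $N=1$.} Your treatment contradicts itself, and no fix can succeed, because the statement fails there. For $N=1$ the threshold gives $\varphi_{\min}\binom nd=12\log(2n^z)$. So if every $\varphi_{\mathbf e}=\varphi_{\min}$, then with high probability $|\mathbf X|=O(\log n)$ and $\pi^\star_{\mathbf X}=|\alpha(\mathbf X)|\le d|\mathbf X|<n/2$ for large $n$. Your per-vertex bound only yields $n\exp(-12d\log(2n^z)/n)$, which is useless. The paper tacitly assumes $N\ge2$ (its simplifications in Lemmas~\ref{lem:K1} and~\ref{lem:K2} require it), and you should do the same.

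\textbf{Step 1 in the non-uniform case.} This is the more serious gap: you only announce it, yet it is the crux. A termwise comparison fails, since for $k=1$ the ratio $N\binom t1/\binom n1=N^{1-1/d}/(2\sqrt2)$ already exceeds $1$ once $N$ is moderately large.

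\begin{itemize}
\item You need $|\mathbf C|\ge\sum_k\bigl(\binom nk-N\binom tk\bigr)^+$, plus a proof that the low layers carry only a small share of $\sum_k\binom nk$. Given the slack in Step~3, a complement fraction of $1/10$ already suffices.
\item For $d\le(n+1)/4$ this is immediate. From $\binom nk\le\bigl(\frac{d}{n-d+1}\bigr)^{d-k}\binom nd$ you get $\sum_{k<d}\binom nk\le\frac12\binom nd$. The top layer alone then gives $|\mathbf C|\ge(1-8^{-d/2})\binom nd\ge\frac7{12}\sum_k\binom nk$.
\item When $d$ is a constant fraction of $n$, you must genuinely use $N\le\binom{\lfloor\sqrt{nd/2}\rfloor}{d}$.
\end{itemize}

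The paper needs the same inequality in the stronger form $N\sum_k\binom tk\le\frac14\sum_k\binom nk$; this is its requirement $E_{\mathrm{th}}\ge 2\mu_{\mathcal S}$. It derives this from $\sum_{k\le d}\binom tk\le d\binom td$ and $\sum_k\binom nk\ge\binom nd$, which give the factor $d\,8^{-d/2}\le\frac14$. You can import that, but it requires $t\ge 2d-1$, which the hypotheses do not guarantee. For example, $n=25$, $d=10$, $N=11$ satisfies them but gives $t<7$. That regime still needs its own argument.
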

\begin{proof}
    The proof of Lemma~\ref{lem:prob_pi} will treat the cases $\mathbf A=\Adn$ and $\mathbf A=\bigcup_{k=1}^{d}\mathbf A_{n,k}$ separately, the first in Appendix~\ref{appendix: proof of prob_pi}, and the second in Appendix~\ref{appendix: proof of prob_pi_non}. The proof proceeds by contradiction, showing that if there existed a partitioning scheme that offered $\pi_{\mathbf{X}} < \frac{1}{2}\frac{n}{N^{1/d}}$ (or $\pi_{\mathbf{X}} < \frac{n}{2\sqrt{2}N^{1/d}}$ in the non-uniform case), then all groups would need to contain a very large number of edges within a small vertex set, which in turn could only occur with a probability less than $\frac{1}{n^z}$. 
\end{proof}

Let us now view our main theorem, under the same model where each hyperedge $\mathbf{e} \in \mathbf{A}$ is included independently with probability $\varphi_{\mathbf{e}}$ (cf.~\eqref{eq:phi1}, \eqref{eq:phi2}), and where $\mathbf{X}$ denotes the resulting hyperedge set.  We remind the reader that this model captures hypergraphs with various structural properties, and effectively entails cardinalities 
\(
|\mathbf{X}| \gtrsim n N \log N
\)
which can be as low as linear in $n$. Finally, we note that any constructive result entails an assumption of $d\leq n/2$. For the trivial case of $d>n/2$, the fundamental limits are clearly known, to within a factor of $2$ automatically, since naturally, $\pi \in [d,n] \subseteq (\frac{n}{2},n] $.

\begin{theorem}[Optimality region of hypergraph edge partitioning]\label{thm:main}
For any 
$n,d$, $N \leq \binom{\left\lfloor\sqrt{\frac{nd}{2}}\right\rfloor}{d}$, then, with probability at least $1-\frac{1}{n^z}$, $\mathbf{X}$ is such that the optimal footprint, over all hypergraph partitioning schemes, satisfies \(
\pi_{\mathbf X}^{\star} \ge
\frac{1}{2}\frac{n}{N^{1/d}} \)  (uniform case) and \( \pi_{\mathbf X}^{\star}
\ge
\frac{n}{2\sqrt{2}N^{1/d}} 
\) (non-uniform case), while for any $\mathbf{X}$, for $f=\max\left\{r\in\mathbb Z:\binom{r}{d}\leq N\right\}$, the IC design achieves 
\begin{equation} \label{eq:PiIC1}
\pi_{IC}
\le
\begin{cases}
\dfrac{dn}{f},
& \text{if } f\mid n,
\\[8pt]
n-(f-d)\!\left(\left\lfloor\frac{n}{f+d}\right\rfloor+1\right)
\le
2d\!\left(\left\lfloor\frac{n}{f+d}\right\rfloor+1\right),
& \text{otherwise}
\end{cases}
\end{equation}
guaranteeing $\pi_{IC}\leq 4e\frac{n}{N^{1/d}}$, which is at most a factor $8e$ and \textcolor{black}{$8\sqrt{2}e$} from the converse in the uniform and non-uniform cases, respectively. 
\end{theorem}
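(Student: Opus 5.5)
The converse half of Theorem~\ref{thm:main} is Lemma~\ref{lem:prob_pi}, invoked as is. The remaining work is the achievability bound \eqref{eq:PiIC1} for the IC design and the comparison $\pi_{IC}\le 4e\,n/N^{1/d}$. I will prove achievability deterministically, for every $\mathbf X$. Fix $f=\max\{r:\binom{r}{d}\le N\}$. The hypothesis $N\le\binom{\lfloor\sqrt{nd/2}\rfloor}{d}$ gives $d\le f\le\sqrt{nd/2}$, and in particular $f\le n$.

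\emph{Case $f\mid n$.} Split $[n]$ into $f$ blocks $B_1,\dots,B_f$ of size $n/f$. Index $\binom{f}{d}\le N$ groups by $d$-subsets $S\subseteq[f]$, leaving any extra groups empty. Assign each hyperedge $\mathbf e$ to a group $S$ with $S\supseteq\{j:\mathbf e\cap B_j\ne\emptyset\}$. Such an $S$ exists because $\mathbf e$ touches at most $d$ blocks; in the non-uniform case, complete the touched set to a $d$-set by a fixed rule. Group $S$ then sees only $\bigcup_{j\in S}B_j$, so $\pi_{IC}\le dn/f$. The vertex placement does not depend on $\mathbf X$, and each edge is assigned in $O(1)$ time after a table lookup, which gives the $O(|\mathbf X|)$ complexity.

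\emph{Case $f\nmid n$.} I use $f+d$ blocks. Take $f-d$ ``small'' blocks of size $m=\lfloor n/(f+d)\rfloor+1$. The remaining $2d$ blocks absorb the other $n-(f-d)m$ vertices, with sizes at most $m$; the arithmetic showing this fits uses $f\le n$. Group indices are chosen so that each group's footprint is the complement of $f-d$ small blocks. Concretely, consider the family of $d$-subsets of the $f$ ``primary'' blocks. For each, the group footprint is its blocks together with the $d$ extra blocks that pair with it, so that every hyperedge (touching at most $d$ blocks) is covered by some group. The footprint is then $n-(f-d)m$, and since the $2d$ leftover blocks each have size at most $m$, this is at most $2dm$.

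This covering step is the main obstacle: it must be checked that every $\le d$-block pattern of a hyperedge is contained in some group's footprint. I would first try a design where the groups are indexed by $d$-subsets $T$ of $[f]$ and group $T$ omits exactly the $f-d$ small blocks not in $T$. A hyperedge touching small blocks $U$ with $|U|\le d$ then goes to any $T\supseteq U$, and the large blocks are visible to every group.

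\emph{Final comparison.} Maximality of $f$ gives $N<\binom{f+1}{d}\le\bigl(e(f+1)/d\bigr)^d$, so $f+1>dN^{1/d}/e$ and hence $f\ge dN^{1/d}/(2e)$ once $f\ge d\ge1$. In the first case, $dn/f\le 2e\,n/N^{1/d}$. In the second case, $2d(\lfloor n/(f+d)\rfloor+1)\le 2dn/f+2d\le 4dn/f$, using $d\le n/f$, which follows from $f\le\sqrt{nd/2}$ and $d\le f$. This gives $4dn/f\le 4e\,n/N^{1/d}$. Dividing by the converse constants $\tfrac12$ and $\tfrac1{2\sqrt2}$ from Lemma~\ref{lem:prob_pi} gives the gaps $8e$ and $8\sqrt2 e$.
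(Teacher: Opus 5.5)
Your route is the paper's. The converse is Lemma~\ref{lem:prob_pi}, and achievability is the footprint count for the IC vertex placement. Your ``first try'' for $f\nmid n$ is exactly the IC design: $f$ families of size $m$, plus the leftover $g=n-fm$ vertices visible to every group, so each group sees $dm+g=n-(f-d)m$ vertices. However, your final comparison loses a factor of $2$. From $f\ge dN^{1/d}/(2e)$ you only get $4dn/f\le 8en/N^{1/d}$, not $4en/N^{1/d}$. As written, you therefore establish gaps $16e$ and $16\sqrt{2}e$, not $8e$ and $8\sqrt{2}e$. Separately, $d\le n/f$ does not follow from $f\le\sqrt{nd/2}$ and $d\le f$: take $n=100$, $d=10$, $f=22$. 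The step $2d\le 2dn/f$ only needs $f\le n$. To recover $4e$, avoid the loss $f+1\le 2f$. Since $f\le\sqrt{nd/2}\le n/2$ and $d\le n/2$, you have $f+d\le n$. Hence $\lfloor n/(f+d)\rfloor+1\le 2n/(f+d)\le 2n/(f+1)$, and $f+1>dN^{1/d}/e$ then gives $2dm<4en/N^{1/d}$.

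The second gap is in the case $f\nmid n$. Your block structure is inconsistent, and the step where the hypothesis on $N$ is genuinely needed is not proved. Suppose you take $f-d$ small blocks, $2d$ further blocks of size at most $m$, and groups consisting of $d$ of $f$ primary blocks plus $d$ common extra blocks. A group's footprint is then $n$ minus its $f-d$ omitted primary blocks. Any group omitting a primary block of size less than $m$ sees more than $n-(f-d)m$ vertices. So that version yields only $2dm$, not the first inequality in \eqref{eq:PiIC1}, unless the $d$ non-small primary blocks also have size exactly $m$. That brings you back to the IC design with $f$ families of size $m$.

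That design needs $f$ disjoint families of size $m$ inside $[n]$, i.e.\ $g=n-fm\ge 0$, which holds if $f(f+d)\le dn$. This is the content of Appendix~\ref{appendix: bound N}, and it is where $N\le\binom{\lfloor\sqrt{nd/2}\rfloor}{d}$ enters. It does not follow from $f\le n$: for $n=10$, $d=1$, $f=9$ one gets $m=2$ and $fm=18>n$. You can close it with what you already have. $N\ge 1$ forces $\lfloor\sqrt{nd/2}\rfloor\ge d$, hence $d\le n/2$, and then $f^2+fd\le nd/2+d\sqrt{nd/2}\le nd$. With $g\ge 0$ in hand, $g<n-fn/(f+d)=dn/(f+d)<dm$. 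This gives $\pi_{IC}\le dm+g=n-(f-d)m\le 2dm$, as stated.
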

\begin{proof}
    Lemma~\ref{lem:prob_pi} ensured that $\pi^{\star}_{\mathbf X}\geq \frac{1}{2}\frac{n}{N^{1/d}}$ (uniform) and \textcolor{black}{$\pi^{\star}_{\mathbf X}\geq \frac{n}{2\sqrt{2}N^{1/d}}$} (non-uniform), with probability at least $1-\frac{1}{n^z}$. 
     The achievability proof (upper bound) is the direct outcome of Lemma~\ref{lem: delta, power set} in Section~\ref{subsec: N' to N}     
     which analyzes the scheme as presented in Section~\ref{subsec: uniform} (uniform case) and  Section~\ref{subsec: non uniform} (non-uniform case). 
\end{proof}

In the following, we will denote the derived lower bound as $\pi_{\mathrm{LB}}$, where 
$\pi_{\mathrm{LB}} \triangleq \frac{1}{2}\frac{n}{N^{1/d}}$ and \textcolor{black}{$\pi_{\mathrm{LB}} \triangleq \frac{1}{2\sqrt{2}}\frac{n}{N^{1/d}}$} for the uniform and non-uniform cases respectively.

The following corollary centers around the cardinality of the hypergraph $\mathbf X$, where again each hyperedge $\mathbf e\in\mathbf A$ is included independently with probability $\varphi_{\mathbf e}\geq \varphi_{\min}$. 
The corollary reflects a scenario where one is given a hypergraph $\mathbf{X}$, knows its size, and wonders what is the probability (under the model here) that an algorithm can exist that can yield a lower footprint than $\pi_{\mathrm{LB}}$. 

\begin{corollary}[Optimal given cardinality]
\label{cor:fasterZ-cardinality}
For any $n,d,N \leq \binom{\lfloor\sqrt{\frac{nd}{2}}\rfloor}{d}$ and any given hypergraph $\mathbf X\subseteq \mathbf A$ of size $|\mathbf X|\geq \frac{1}{2}\varphi_{\min}|\mathbf A| = 7.5nN\left(\log N+\frac{4}{5}\frac{\log (2n^z)}{n}\right) \approx nN\log N$, (or $|\mathbf X|\geq 6.5nN\left(\log N+\frac{12}{13}\frac{\log (2n^z)}{n}\right)$ in the non-uniform case), then with probability at least $1-\textcolor{black}{\frac{2}{3n^{z}}}$, the optimal footprint (over any algorithm) satisfies $\pi^\star_{\mathbf X}\in\left[\pi_{\mathrm{LB}},\,\pi_{IC}\right]$, where $\frac{\pi_{\mathrm{IC}}}{\pi_{\mathrm{LB}}} \leq \textcolor{black}{8e}$ and $\frac{\pi_{\mathrm{IC}}}{\pi_{\mathrm{LB}}} \leq \textcolor{black}{8\sqrt{2}e}$ for the uniform and non-uniform cases respectively. 
\end{corollary}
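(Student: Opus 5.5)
The proof proceeds by conditioning on the cardinality of $\mathbf X$ and combining the unconditional converse of Lemma~\ref{lem:prob_pi} with a lower-tail bound on $|\mathbf X|$. The upper bound $\pi^\star_{\mathbf X}\le\pi_{IC}$ and the ratio bounds $8e$, $8\sqrt2 e$ hold deterministically for every $\mathbf X$ by Theorem~\ref{thm:main}, since the IC design is blind to $\mathbf X$. Only the lower bound $\pi^\star_{\mathbf X}\ge\pi_{\mathrm{LB}}$ needs a probabilistic argument.

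Let $B$ be the bad event that some partitioning scheme achieves footprint below $\pi_{\mathrm{LB}}$, and let $G=\{|\mathbf X|\ge \tfrac12\varphi_{\min}|\mathbf A|\}$. By Lemma~\ref{lem:prob_pi}, $\Prob(B)\le 1/n^z$. Since $\Ex|\mathbf X|=\sum_{\mathbf e}\varphi_{\mathbf e}\ge\varphi_{\min}|\mathbf A|$, the event $G$ is a lower deviation of at least a factor $1/2$ from the mean of a sum of independent Bernoullis. The multiplicative Chernoff bound gives
\[
\Prob(G^c)\le \exp\!\left(-\tfrac18\varphi_{\min}|\mathbf A|\right).
\]
Substituting $\varphi_{\min}|\mathbf A|=15nN(\log N+\tfrac45\tfrac{\log(2n^z)}{n})\ge 12\log(2n^z)$ (uniform case; the non-uniform case gives $12\log(2n^z)$ as well) makes $\Prob(G^c)$ at most $(2n^z)^{-3/2}$, which is a tiny fraction of $1/n^z$. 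One then has
\[
\Prob(B\mid G)\le \frac{\Prob(B)}{\Prob(G)}\le\frac{1/n^z}{1-\Prob(G^c)}.
\]
This is roughly $1/n^z$, not the claimed $\tfrac{2}{3n^z}$, so this crude route does not give the stated constant. The constant must come from reopening the proof of Lemma~\ref{lem:prob_pi} rather than from conditioning alone.

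For that sharper route, note that the lemma's proof bounds $\Prob(B)$ by a union bound over candidate small vertex sets (or partitions). In each term, the edges of $\mathbf X$ outside the small-footprint region must all be absent. I would redo that bound conditionally on $G$ as follows. On $B$, the number of edges landing inside the $N$ small vertex sets is at most $N\binom{\pi_{\mathrm{LB}}}{d}\le \binom{n}{d}/2^d$ or similar. Hence $B\cap G$ forces a set of $\gtrsim\tfrac12\varphi_{\min}|\mathbf A|$ present edges to sit inside a region of small expected mass. The Chernoff/union estimate with the constant $\tfrac12\varphi_{\min}$ in place of $\varphi_{\min}$ should then yield a bound of the form $\tfrac{1}{2n^z}$ for $\Prob(B\cap G)$. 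The factors $15$ versus $7.5$ and $13$ versus $6.5$ suggest exactly this halving. Dividing by $\Prob(G)\ge 1-\tfrac14$ (valid since $n^z\ge1$ implies $(2n^z)^{-3/2}\le 2^{-3/2}<1/4$) then gives $\tfrac{1}{2n^z}\cdot\tfrac43=\tfrac{2}{3n^z}$. This explains the constant.

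The main obstacle is this sharpening step. It requires tracking the exact exponent in the converse's union bound so that, with the threshold halved, the failure probability still comes out as $\tfrac1{2n^z}$. It also requires confirming that the lower bound $\Prob(G)\ge 3/4$ is what the $\log(2n^z)$ terms were calibrated to deliver. I would first try to establish $\Prob(B\cap G)\le \tfrac{1}{2n^z}$ directly from the lemma's counting with $|\mathbf X|$ fixed at the threshold, and then combine it with the Chernoff estimate for $\Prob(G)$.
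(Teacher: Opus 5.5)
Your outline is the paper's: bound $\Prob(B\cap G)$ by $\frac{1}{2n^z}$ and divide by $\Prob(G)$. The gap is the numerator. You treat it as an unresolved ``sharpening'' to be re-derived with a halved $\varphi_{\min}$, and you never prove it. No re-derivation is needed, because the bound is already inside the proof of Lemma~\ref{lem:prob_pi}. There $E_{\mathrm{th}}=\varphi_{\min}|\mathbf A|/(2N)$, so your event $G$ is exactly $\{|\mathbf X|\ge NE_{\mathrm{th}}\}$. On $B\cap G$, pigeonhole gives one group with at least $|\mathbf X|/N\ge E_{\mathrm{th}}$ edges, all inside a single $t$-set $\mathcal S$ with $t=\pi_{\mathrm{LB}}$. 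Hence $\Prob(B\cap G)\le\Prob(\exists\mathcal S:|\mathcal S|=t,\ Y_{\mathcal S}\ge E_{\mathrm{th}})$, which is \eqref{eq: lem1,eq1}. The value of $\varphi_{\min}$ was calibrated in \eqref{eq: lem1, union prob,1}--\eqref{eq: non-approx} precisely to make this at most $\frac{1}{2n^z}$. The lemma's total $\frac1{n^z}$ is just this term plus the lower-tail term \eqref{eq: lem, Eth1}. The factor $\frac12$ separating $7.5$ from $15$ is the $\varepsilon=\frac12$ already built into $E_{\mathrm{th}}$. If you halved the threshold once more, the exponent $E_{\mathrm{th}}/6$ would halve and the calibration would break. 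Your description of the lemma (all edges outside the small region absent, union over partitions) is also not its mechanism. Your $N$-set variant, with a union over $N$-tuples of $t$-sets, would in fact close, since both $\log\binom{n}{t}^N$ and the Chernoff exponent scale by $N$. But you do not carry it out, and it is unnecessary.

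Two smaller points. First, $2^{-3/2}\approx0.354$ is not below $\frac14$, so your justification of $\Prob(G)\ge\frac34$ is wrong as written. It is repaired by $n\ge 2$, which gives $(2n^z)^{-3/2}\le\frac18$ and so $\Prob(G)\ge\frac78$, slightly better than the paper's $1-\frac1{n^z}$ with $n^z\ge4$. Second, your ``region of small expected mass'' shares a tacit assumption with the lemma's per-set Chernoff step: both treat every edge of $\binom{\mathcal S}{d}$ as if it had probability $\varphi_{\min}$. That is not a worst case for an upper tail, since larger $\varphi_{\mathbf e}$ only fatten the tail of $Y_{\mathcal S}$. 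The rigorous route couples $\mathbf X\supseteq\mathbf X'$, with $\mathbf X'$ i.i.d.\ with probability $\varphi_{\min}$, and uses that $\pi^\star$ is monotone under adding edges. This gives $\Prob(B\cap G)\le\frac{1}{2n^z}+\Prob(|\mathbf X'|<NE_{\mathrm{th}})\le\frac{1}{2n^z}+(2n^z)^{-3N/2}$. A lower-tail estimate as sharp as yours, keeping the factor $N$ in the exponent, is exactly what still yields $\frac{2}{3n^z}$ along this route (for $N\ge2$).
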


\begin{proof}
\textcolor{black}{The proof can be found in \textcolor{black}{Appendix~\ref{appendix: Proof of cor, card X}, which modifies Lemma~\ref{lem:prob_pi} to reveal that with probability} at least $1-\textcolor{black}{\frac{2}{3n^{z}}}$, $\mathbf X$ must be such that no algorithm can achieve a footprint smaller than the corresponding converse lower bound. The achievability is direct from the main theorem.}
\end{proof}

\subsection{Tighter Bounds} \label{sec:tightboundsSec}
The above results reveal what one could call an achievable  \emph{partitioning gain} $n/\pi_{\mathbf{X}}$, which is shown to be very close to $N^{1/d}$ and which is, with high probability, within a constant factor of at most $8e \approx 22$ from the optimal (\textcolor{black}{$8\sqrt{2}e\approx 31$ for non-uniform hypergraphs}). This constant gap is, in fact, partly due to basic approximations, such as the Stirling approximation, used to obtain concise closed-form expressions. 
The following two corollaries tighten the converse and achievability results, revealing even tighter gaps to optimal. We begin with a tighter converse. 



\begin{corollary}[Tighter converse]
\label{cor:ModifiedConverse}
\textcolor{black}{Suppose that, for every edge $\mathbf e$ and any $z\geq 1$, the parameter
$\varphi_{\mathbf e} \leq 1$ satisfies}
\begin{equation}
\label{eq: cond1}
\varphi_{\mathbf e} \ge\frac{ \left({\frac{12 n N^{1-\frac{1}{d}}\log(e^{d}4N)}{4^{1/d}d}+12N\log(2n^z)}\right)}{{\binom{n}{d}}}   \ \ \ \text{and} \ \ \ \varphi_{\mathbf e} \ge \frac{\left({\frac{12 n N^{1-\frac{1}{d}}\log(e^{d}8N)}{8^{1/d}d}+12N\log(2n^z)}\right)}{\sum_{k=1}^n\binom{n}{k}} 
\end{equation}
for the uniform and non-uniform cases respectively. Then, with probability at least \textcolor{black}{$1-\frac{1}{n^z}$}, $\mathbf X$ is such that the optimal maximum vertex footprint over all $N$-edge partitioning schemes, satisfies $\pi_{\mathbf X}^{\star} \ge \frac{n}{(4N)^{1/d}}$ and \textcolor{black}{$\pi_{\mathbf X}^{\star} \ge \frac{n}{(8N)^{1/d}}$} in the uniform and non-uniform cases, respectively.
\end{corollary}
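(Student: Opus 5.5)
We argue by contradiction and use a union bound, in the same way as the proof sketch of Lemma~\ref{lem:prob_pi}, but with the target footprint set to $\pi_0 \triangleq n/(4N)^{1/d}$ (respectively $n/(8N)^{1/d}$) and with the Chernoff exponent tracked more carefully.

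\textbf{Step 1 (bad event).} Suppose some scheme achieves $\pi_{\mathbf X}<\pi_0$. Each group $\mathbf\Phi_b$ then has its footprint $\alpha(\mathbf\Phi_b)$ inside some vertex set $S_b$ with $|S_b|=s\triangleq\lfloor \pi_0\rfloor$. Every edge of $\mathbf X$ therefore lies in $\bigcup_b \mathbf A(S_b)$, where $\mathbf A(S)$ denotes the ambient edges contained in $S$. In the uniform case, $|\mathbf A(S_b)|\le\binom{s}{d}\le \frac{s^d}{d!}\le\frac{n^d}{4N\,d!}$. Summing over the $N$ groups, the union covers at most about $\frac{1}{4}\frac{n^d}{d!}$ edges, which after the usual $\binom nd$ versus $n^d/d!$ comparison is at most roughly half of $\mathbf A$. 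In the non-uniform case, $\sum_{k\le d}\binom{s}{k}$ is controlled by $2\binom{s}{d}$-type geometric bounds, which is why the constant $8$ appears in place of $4$.

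The complement $\mathbf C=\mathbf A\setminus\bigcup_b\mathbf A(S_b)$ thus has at least a constant fraction of $|\mathbf A|$, say $|\mathbf C|\ge \frac12|\mathbf A|$. The bad event then implies that for some $N$-tuple $(S_1,\dots,S_N)$ of $s$-subsets, no edge of $\mathbf C$ is sampled. Since edges are sampled independently with $\varphi_{\mathbf e}\ge\varphi$, this has probability at most $(1-\varphi)^{|\mathbf C|}\le \exp(-\varphi|\mathbf A|/2)$.

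\textbf{Step 2 (union bound).} The number of tuples is at most $\binom{n}{s}^N\le (en/s)^{sN}$. Taking logarithms gives
\[
sN\log(en/s)= \frac{nN^{1-1/d}}{4^{1/d}}\cdot\frac{1}{d}\log\!\big(e^{d}4N\big),
\]
using $n/s\approx(4N)^{1/d}$. This is exactly the first term in \eqref{eq: cond1} divided by $12$ (up to the factor arising from $|\mathbf C|$ versus $|\mathbf A|$). The total failure probability is therefore
\[
\exp\!\Big(sN\log(en/s)-c\,\varphi|\mathbf A|\Big).
\]
Plugging in \eqref{eq: cond1} makes the exponent at most $-\log(2n^z)$, with room to spare. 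The additive $12N\log(2n^z)$ term absorbs the remaining slack and gives probability $\le 1/n^z$.

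\textbf{Main obstacle.} The delicate step is the constant in $|\mathbf C|\ge c|\mathbf A|$. One needs $N\binom{s}{d}\le \frac{1}{4}\cdot\frac{n^d}{d!}$ to compare favorably with $\binom nd$, but $\binom nd$ can be much smaller than $n^d/d!$ when $d$ is comparable to $n$. I would first try the ratio bound $\binom sd/\binom nd\le (s/n)^d=1/(4N)$, which holds since $s\le n$. This gives $|\bigcup_b\mathbf A(S_b)|\le \frac14\binom nd$ directly, and hence $|\mathbf C|\ge\frac34|\mathbf A|$. The generous factor $12$ in \eqref{eq: cond1} then covers the loss from $\log(1-\varphi)\le-\varphi$ and from the rounding of $s$.

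For the non-uniform case, the analogous step is to show $\sum_{k}\binom sk\big/\sum_k\binom nk\le 1/(8N)\cdot$const. I would do this level by level via $\binom sk/\binom nk\le (s/n)^k$. The small-$k$ levels need care, since $(s/n)^k$ is larger there. The plan is to handle them using the fact that those levels are a small fraction of $\sum_k\binom nk$ when $s\le n/2$, which is the source of the extra factor $2$ (from $4$ to $8$).
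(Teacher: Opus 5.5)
Your uniform-case argument is correct, and it takes a genuinely different route from the paper.

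\textbf{The paper's route.} It reruns the proof of Lemma~\ref{lem:prob_pi} with $t=n/(4N)^{1/d}$. A lower-tail Chernoff bound gives $|\mathbf X|\ge\frac12\varphi_{\min}\binom nd$. Pigeonhole then puts $E_{\mathrm{th}}=\varphi_{\min}\binom nd/(2N)$ edges of one group inside a single $t$-set. An upper-tail Chernoff bound makes this unlikely for a fixed set; it needs $E_{\mathrm{th}}/\mu_{\mathcal S}\ge\frac{1}{2N}(n/t)^d=2$, which is exactly why the constant is $4$. A union bound over $\binom nt$ sets finishes.

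\textbf{Your route.} You use that all $N$ groups jointly cover $\mathbf X$, so $\mathbf X$ must avoid the complement $\mathbf C$ of the $N$ cliques, which has at least $\frac34\binom nd$ edges. You then union-bound over $N$-tuples.

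\textbf{What each route buys.} Yours needs neither concentration of $|\mathbf X|$ nor the $\varepsilon\ge1$ step. Its constants are much better: $\frac34\varphi\binom nd\ge Nt\log(en/t)+z\log n$ already suffices, and the $4$ can be replaced by any $c>1$ at the cost of a factor $c/(c-1)$. Most importantly, it is valid for heterogeneous $\varphi_{\mathbf e}$. The event $\{\mathbf X\cap\mathbf C=\emptyset\}$ is decreasing in every $\varphi_{\mathbf e}$, so passing to the lower bound is a genuine worst case. The paper's event $\{Y_{\mathcal S}\ge E_{\mathrm{th}}\}$ is increasing, so its reduction to $\mathrm{Binomial}\bigl(\binom td,\varphi_{\min}\bigr)$ is not a worst case: if $\varphi_{\mathbf e}=1$ on all $d$-subsets of one $t$-set, that set surely spans $\binom td$ edges, typically far more than $E_{\mathrm{th}}$. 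What the paper's route buys, for homogeneous $\varphi$, is a per-group statement: every group holding at least $E_{\mathrm{th}}$ edges individually has footprint at least $t$.

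One small repair: replace ``$n/s\approx(4N)^{1/d}$'' by $s\le t$ together with the monotonicity of $x\mapsto x\log(en/x)$ on $(0,n]$.

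\textbf{The non-uniform case has a genuine gap.} Your sketch points at the wrong parameter. Whether the levels $k<d$ are a small part of $|\mathbf A|=\sum_{k\le d}\binom nk$ depends on $d/n$, not on $s$. Also, $\sum_{k\le d}\binom sk\le 2\binom sd$ fails when $s$ is comparable to $d$: for $s=d$ the left side is $2^d-1$. (You are right to read the denominator in \eqref{eq: cond1} as $|\mathbf A|$. With $\sum_{k=1}^{n}\binom nk=2^n-1$ as printed, $\mathbf X$ would be empty with high probability for fixed $d,N$ and large $n$.)

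For most of the range the top level alone suffices.
\begin{itemize}
\item Your ratio bound leaves at least $\bigl(1-N(s/n)^d\bigr)\binom nd\ge\frac78\binom nd$ uncovered $d$-edges. With $4$ instead of $8$ it would still be $\frac34$, so the $8$ is not what your argument needs.
\item For $k\le d$, $\binom n{k-1}\le\frac{d}{n-d+1}\binom nk$, which gives $|\mathbf A|\le\frac{n-d+1}{n-2d+1}\binom nd$.
\item With the factor $12$ in \eqref{eq: cond1}, this closes the proof whenever $\frac{n-d+1}{n-2d+1}\le\frac{21}{2}$, in particular for $d\le n/3$.
\end{itemize}

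Near $d=n/2$, however, $\binom nd$ can be only a $\Theta(1/\sqrt n)$ fraction of $|\mathbf A|$. You must then show that the levels just below $d$ are also mostly uncovered, and the per-level bound $N\binom sk/\binom nk\le(n/s)^{d-k}/8$ is too weak for this. The exact ratio works, for example
\[
N\binom sk\Big/\binom nk\le\frac18\exp\Bigl(\bigl(\tfrac ns-1\bigr)\bigl(d-k-\tfrac{k(k-1)}{2n}\bigr)\Bigr),
\]
which is at most $\frac18$ on a block of levels below $d$ carrying a constant fraction of $|\mathbf A|$.

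This step uses $d\le n/2$, which the paper imposes only through its cap on $N$. For $d>n/2$ (e.g.\ $d=n$) and suitable $N$, the $N$ cliques can cover all but a vanishing fraction of $\mathbf A$, so no argument of this kind can work.

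The paper's one-line substitution has the same weak spot. Its estimate $\sum_k\binom tk\le d\binom td$ yields only $E_{\mathrm{th}}/\mu_{\mathcal S}\ge 4/d$ at $t=n/(8N)^{1/d}$, not $\ge 2$.
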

\begin{proof}
    For the $d$-uniform case, the proof follows by repeating the proof of Lemma~\ref{lem:prob_pi}, now  with $t=\frac{n}{(4N)^{1/d}}$ in place of the value of $t$ used in \eqref{eq:lem2, T}. Specifically, reproducing the steps from \eqref{eq: lem1, union prob,1} to \eqref{eq: non-approx}, will  yield the first expression in 
    \eqref{eq: cond1} which now substitutes the expression in~\eqref{eq: non-approx}. Under this condition, the remainder of the proof proceeds unchanged and yields $\pi_{\mathbf X}^{\star}\ge \frac{n}{(4N)^{1/d}}$ with probability at least \textcolor{black}{$1-\frac{1}{n^z}$.} Similarly, in the non-uniform case, substituting \textcolor{black}{$t=\frac{n}{(8N)^{1/d}}$} in~\eqref{eq:lem2, Tnon}, will yield the second expression in 
    \eqref{eq: cond1}, which then yields that $\pi_{\mathbf X}^{\star}\ge \frac{n}{(8N)^{1/d}}$. 
\end{proof}

We now identify tighter gaps to optimal.

\begin{corollary}[Tighter gaps to optimal] \label{cor:reducedGaps2}
   For  \(
\alpha_d \triangleq \frac{d}{(d!)^{1/d}},
\)
then if $N=\binom{f}{d} \leq \binom{\lfloor\sqrt{\frac{nd}{2}}\rfloor}{d}$, the IC design guarantees, for any $\mathbf{X}$, 
\(
\pi_{IC} \leq \frac{2\alpha_d n}{N^{1/d}} < \frac{2en}{N^{1/d}},
\)
and if in addition $f\mid n$, then 
\(
\pi_{IC} \le  \frac{\alpha_d n}{N^{1/d}}  < \frac{en}{N^{1/d}}, 
\)
which entails a gap to converse of at most $4^{1/d}  \alpha_d < 2 \alpha_d < 2 e <5.5 $ (or \textcolor{black}{$8^{1/d} \alpha_d < 2\sqrt{2} \alpha_d < 2\sqrt{2} e <7.7$ for the non-uniform case}). Both gaps $ 4^{1/d} \alpha_d$ and $8^{1/d} \alpha_d $ converge to $e\approx 2.73$ as $d$ increases.


\end{corollary}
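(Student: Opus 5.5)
We obtain Corollary~\ref{cor:reducedGaps2} from the explicit IC bound~\eqref{eq:PiIC1} of Theorem~\ref{thm:main}, now using $N=\binom{f}{d}$ exactly, so no rounding of $f$ is lost. We then compare against the tighter converse of Corollary~\ref{cor:ModifiedConverse}. The key quantity is the ratio $f/N^{1/d}$.

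\textbf{Step 1: relate $f$ to $N^{1/d}$.} Since $N=\binom{f}{d}\le f^d/d!$, we get $N^{1/d}\le f/(d!)^{1/d}$, and hence
\[
\frac{d}{f}\le \frac{d}{(d!)^{1/d}N^{1/d}}=\frac{\alpha_d}{N^{1/d}}.
\]

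\textbf{Step 2: the case $f\mid n$.} Here~\eqref{eq:PiIC1} gives $\pi_{IC}\le dn/f$, so $\pi_{IC}\le \alpha_d n/N^{1/d}$ follows directly from Step 1.

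\textbf{Step 3: the general case.} We use the second branch, $\pi_{IC}\le 2d(\lfloor n/(f+d)\rfloor+1)$.
\begin{itemize}
\item If $f+d\le n$, then $\lfloor n/(f+d)\rfloor+1\le 2n/(f+d)$, which gives $\pi_{IC}\le 4dn/(f+d)$.
\item This only yields a factor $4$ against $dn/f$. To reach the claimed factor $2$, we instead bound $\lfloor n/(f+d)\rfloor+1\le n/(f+d)+1$ and show $d\le dn/f$ up to the $d/(f+d)$ slack:
\[
2d\Bigl(\frac{n}{f+d}+1\Bigr)\le \frac{2dn}{f}\quad\Longleftrightarrow\quad f(f+d)\le nd,
\]
after multiplying through by $f(f+d)/(2d)$ and using $f(n+f+d)\le n(f+d)$.
\item The hypothesis $N\le\binom{\lfloor\sqrt{nd/2}\rfloor}{d}$ gives $f\le\sqrt{nd/2}$, so $f^2\le nd/2$. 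We also have $fd\le nd/2$ whenever $f\le n/2$, which holds since $d\le f$ and $f\le\sqrt{nd/2}\le n/2$ for $d\le n/2$.
\item Adding these gives $f(f+d)\le nd$, hence $\pi_{IC}\le 2dn/f\le 2\alpha_d n/N^{1/d}$.
\end{itemize}
This algebra, and matching it to exactly how~\eqref{eq:PiIC1} was derived in Lemma~\ref{lem: delta, power set}, is the main obstacle. If the direct inequality fails, the fallback is to use the sharper expression $n-(f-d)(\lfloor n/(f+d)\rfloor+1)$.

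\textbf{Step 4: constants and gaps.} From $d!\ge (d/e)^d$ we get $\alpha_d<e$. Dividing the achievability bound $\alpha_d n/N^{1/d}$ by the converse $n/(4N)^{1/d}$ of Corollary~\ref{cor:ModifiedConverse} gives the gap $4^{1/d}\alpha_d$, and $8^{1/d}\alpha_d$ in the non-uniform case. The numerical bounds then follow: $4^{1/d}\le 2$ and $8^{1/d}\le 2\sqrt2$ for $d\ge2$, so $2e<5.5$ and $2\sqrt2 e<7.7$.

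\textbf{Step 5: limit.} As $d\to\infty$, $4^{1/d},8^{1/d}\to1$ and, by Stirling, $(d!)^{1/d}\sim d/e$, so $\alpha_d\to e$. Hence both gaps converge to $e$.
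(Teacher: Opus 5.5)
Your proposal is correct and follows essentially the same route as the paper: the key ingredient in both is replacing $\binom{f}{d}\le (ef/d)^d$ by $\binom{f}{d}\le f^d/d!$, which turns a bound of the form $c\,dn/f$ into $c\,\alpha_d n/N^{1/d}$ when $N=\binom{f}{d}$. The paper simply cites \cite{IC} for $\pi_{IC}\le 2dn/f$, whereas you derive it directly via $s\le n/(f+d)+1\le n/f \iff f(f+d)\le nd$, which is the same quadratic condition $f^2+df-dn\le 0$ as in the paper's appendix; since you do verify it from $f\le\sqrt{nd/2}$ and $d\le n/2$, your argument is already complete and the ``fallback'' hedge is unnecessary.
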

\begin{proof}
Theorem~1 guarantees that the IC design achieves
\(
\pi_{IC}\le \frac{4en}{N^{1/d}}
\)
in general, which is reduced to \(\pi_{IC}\le \frac{2en}{N^{1/d}}\) when
\(N=\binom{f}{d}\), and to
\(\pi_{IC}\le \frac{en}{N^{1/d}}\) when, in addition,
\(f\mid n\) (cf.~\cite{IC}). The constant \(e\) in our expressions is a consequence of using the well-known inequality
\(
\binom{f}{d}\le \left(\frac{ef}{d}\right)^d,
\)
which can though be tightened by using instead
\[
\binom{f}{d}
=
\frac{f(f-1)\cdots(f-d+1)}{d!}
\le
\frac{f^d}{d!}
\]
{which then allows us to} substitute \(e\) with
\(
\alpha_d= \frac{d}{(d!)^{1/d}}, 
\)
(noting that \(\alpha_d<e\) since \(d!>(d/e)^d\)). 
Therefore, for any $\mathbf{X}$, the IC design achieves
\(\pi_{IC}\le 4\alpha_d n/N^{1/d}\) in general,
\(\pi_{IC}\le 2\alpha_d n/N^{1/d}\) when
\(N=\binom{f}{d}\), and
\(\pi_{IC}\le \alpha_d n/N^{1/d}\) when, in addition,
\(f\mid n\). 
\end{proof}

\subsection{Near-Optimal Load Balance Factor} \label{sec:OptDelta}
We proceed with the following lemma, which shows that the load balance factor $\delta_{\mathbf X}$, offered by the IC design, remains small with high probability. The lemma will hold under an independent thinning model where now the probability of edge inclusion is identical across all edges, i.e., where $\varphi_{\mathbf e}=\varphi$ for every edge $\mathbf e$. To facilitate a unified presentation of the result, we will use {$\varphi_{\min,\delta}$}, where $\varphi_{\min,\delta}$ is interpreted as $\varphi_{\min,\delta}^{u} = \min\left\{ \frac{54N\log(4Nn^z)}{\binom{n}{d} - (2^{d+1})N},1\right\}\approx \frac{N\log (4Nn^z)}{\binom{n}{d}}$ (cf.~\eqref{eq:varphi_mindelta_u}) for $d$-uniform hypergraphs, and as {$\varphi_{\min,\delta}^{n} = \min\left\{\frac{54N\log(4Nn^z)}{\sum_{k=1}^{d}\binom{n}{k} - (2^{2d+1})N},1\right\} \approx \frac{N\log (4Nn^z)}{\sum_{k=1}^{d}\binom{n}{k}}$ (cf.~\eqref{eq:varphi_mindelta_n})} for non-uniform hypergraphs. 
\begin{proposition}[Load balance factor]
\label{prop: conc on group size}
Let $(n,d,N)$ satisfy $N\leq \binom{\lfloor\sqrt{\frac{nd}{2}}\rfloor}{d}$. In the uniform case, let $d\leq n/4$; in the non-uniform case, let $d\leq n/16$. If $\varphi_{\mathbf e}=\varphi\geq \varphi_{\min,\delta}$, then, with probability at least \textcolor{black}{$1-\tfrac{1}{n^z}$,} the IC design guarantees $\delta_{\mathbf X}\leq 5$.
\end{proposition}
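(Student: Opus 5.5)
The plan is to exploit that the IC design fixes, independently of $\mathbf X$, a partition $\{\mathbf{\Psi}_b\}_{b=1}^N$ of the whole ambient space $\mathbf A$. The realized groups are then $\mathbf{\Phi}_b=\mathbf{\Psi}_b\cap\mathbf X$, possibly after the refinement step of Section~\ref{subsec: N' to N}, which only merges or splits the fixed cells deterministically. Since $\varphi_{\mathbf e}=\varphi$ for every edge, each $|\mathbf{\Phi}_b|$ is a sum of $|\mathbf{\Psi}_b|$ i.i.d.\ Bernoulli$(\varphi)$ variables, and $|\mathbf X|$ is Binomial$(|\mathbf A|,\varphi)$. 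The claim $\delta_{\mathbf X}\le 5$ then reduces to two pieces: a combinatorial statement that the fixed cells are roughly balanced, and a concentration statement on top of it.

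\textbf{Step 1 (deterministic balance of the fixed cells).} From the construction in Sections~\ref{subsec: uniform}--\ref{subsec: N' to N}, I would show that
\[
\max_b|\mathbf{\Psi}_b|\le c\,\frac{|\mathbf A|}{N}+2^{d}
\]
in the uniform case, and the same with an additive $2^{2d}$ in the non-uniform case, for a small constant $c$ (I expect $c\le 2$). The additive terms are exactly the ones appearing in the denominators $\binom{n}{d}-2^{d+1}N$ and $\sum_k\binom nk-2^{2d+1}N$ of $\varphi_{\min,\delta}$. They come from the leftover edges that straddle the $f$ cliques when $f\nmid n$, and from merging the $\binom{f}{d}$ cliques down to $N$ groups. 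The hypotheses $d\le n/4$ (respectively $d\le n/16$) and $N\le\binom{\lfloor\sqrt{nd/2}\rfloor}{d}$ serve to guarantee that the dominant cells have size comparable to $|\mathbf A|/N$, and that $|\mathbf A|-2^{d+1}N$ is a positive constant fraction of $|\mathbf A|$.

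\textbf{Step 2 (concentration).} By a multiplicative Chernoff bound, I would show that for each $b$
\[
|\mathbf{\Phi}_b|\le 2\varphi\,\max_b|\mathbf{\Psi}_b|
\]
fails with probability at most $\exp(-\varphi\max_b|\mathbf{\Psi}_b|/3)$. I would also use the lower tail $|\mathbf X|\ge \tfrac12\varphi|\mathbf A|$. A union bound over the $N$ groups plus the one global event gives a failure probability of at most $(N+1)\exp(-\Theta(\varphi|\mathbf A|/N))$. The condition $\varphi(|\mathbf A|-2^{d+1}N)\ge 54N\log(4Nn^z)$ makes each term at most $1/(4Nn^z)$, so the total is at most $1/n^z$. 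The constant $54$ is meant to absorb the Chernoff exponent together with the constant $c$.

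\textbf{Step 3 (combining).} On the good event,
\[
\delta_{\mathbf X}\le \frac{2\varphi\,(c|\mathbf A|/N+2^{d})}{\tfrac12\varphi|\mathbf A|/N}.
\]
I would then use $2^{d}N\le \tfrac12|\mathbf A|$ to absorb the additive term, which should bound the ratio by $5$ once the constants are tracked. If $\varphi=1$, the argument is purely deterministic.

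The main obstacle is Step 1: extracting from the IC construction an explicit, uniform bound on the largest fixed cell. This requires care for the clique-boundary edges when $f\nmid n$ and for the non-uniform ambient space, where lower-order edges are distributed among groups. I would first try to bound each cell by the number of $d$-subsets (respectively, subsets of size at most $d$) of a union of $d$ clique blocks of size at most $\lfloor n/f\rfloor+1$, and compare this with $|\mathbf A|/N$ via $N\ge \binom{f}{d}/2^{d}$-type estimates.
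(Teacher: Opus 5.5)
Your architecture matches the paper's: a fixed partition $\{\mathbf{\Psi}_b\}$ of $\mathbf A$ (the paper's $\widetilde{\mathbf{\Phi}}_b$), a deterministic balance bound, Chernoff with a union bound over the groups, and a lower tail for $|\mathbf X|$. The problem is that your constants cannot produce $5$. With thresholds $2\varphi M$ and $\tfrac12\varphi|\mathbf A|$, where $M=\max_b|\mathbf{\Psi}_b|$, Step~3 gives only $\delta_{\mathbf X}\le 4MN/|\mathbf A|$, and $MN/|\mathbf A|\ge 1$ always. For the IC design this ratio can be close to $2$. Take $f=d+1$, $N'=d+1$, $N=2d+1$: then $q=1$ and $r=N'-1$, so the last group is never split and holds about $|\mathbf A|/N'$ edges, giving $MN/|\mathbf A|\approx\frac{2d+1}{d+1}$. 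So your chain certifies nothing below $4$, and with $c=2$ and your absorption step it gives up to $4c+2=10$. The paper's chain is exactly tight, $\frac{4/3}{4/5}\cdot 3=5$. You need $\varepsilon=\tfrac13$ on the per-group upper tail and $\varepsilon=\tfrac15$ on the lower tail of $|\mathbf X|$, together with the deterministic bound $\max_b|\widetilde{\mathbf{\Phi}}_b|\le 3\lceil|\mathbf A|/N\rceil$. The exponents then become $\varphi M/27$ and $\varphi|\mathbf A|/50$, which $\varphi\ge\varphi_{\min,\delta}$ still covers.

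Once repaired, your device of comparing every group with the common proxy $\varphi M\ge\varphi|\mathbf A|/N$ (stochastic domination, or Chernoff with an upper bound $\mu_H$ on the mean) is cleaner than the paper's argument. The paper concentrates each group around its own mean, so it must lower-bound $m_{\min}\ge|\mathbf A|/(2N)-2^{2d}$ (resp.\ $-2^d$ in the uniform case). That lower bound, not the max-cell bound, is the source of the $2^{2d+1}N$ and $2^{d+1}N$ in the denominators of $\varphi_{\min,\delta}$.

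Second, the fallback you sketch for Step~1 would not give $c\le 2$. Bounding a cell by all $\binom{sd}{d}$ subsets of its footprint overshoots $|\mathbf A|/N$ by a factor of about $d^d/d!$, exponential in $d$, when $s$ and $f$ are large compared with $d$. The footprints of the $\binom fd$ cells overlap heavily, so each edge is eligible for many cells. Note also that $N'=\binom fd\le N$ by the choice of $f$, so cells are split, never merged.

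The balance comes entirely from the equitable allocation in the IC design. For each support set $\mathcal I$ with $|\mathcal I|=\beta$, the $t_\beta$ edges with support exactly $\mathcal I$ are dealt in blocks of $q_\beta$ or $q_\beta+1$ to the $\binom{f-\beta}{d-\beta}$ groups $\sigma\supseteq\mathcal I$. The identity $\binom d\beta/\binom{f-\beta}{d-\beta}=\binom f\beta/\binom fd$ then collapses each group's count to $\binom nk/\binom fd\pm(2^d-d)$ (Lemma~\ref{lem: phin,k, cardinality}). The round-robin split from $N'$ to $N$ groups bounds each layer by $\binom nk/(N-r)+2^d$, with $N/(N-r)=N/(N'\lfloor N/N'\rfloor)\le 2$. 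Finally, $N\le\binom{\lfloor\sqrt{nd/2}\rfloor}{d}$ together with $n\ge 5d$ (resp.\ $n\ge 2d$) keeps the additive contribution to $\delta$ at most $1$, hence $\delta\le 3$ (Lemma~\ref{lem: delta, power set}).
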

\begin{proof}
The proof is provided in Appendix~\ref{proofofprop: conc on group size} for non-uniform hypergraphs, and is then briefly specialized to the case of $d$-uniform hypergraphs.
\end{proof}

\subsection{Examples and Remarks} \label{sec:examplesAndRemarks}    
The following remarks and examples aim to help the reader better understand the above results. The examples are in reference to the $d$-uniform setting. 
\begin{remark}[Reduced gaps to optimal] \label{rem:reducedGapOpt}
Theorem~\ref{thm:main} provided a simplified lower bound ($\pi_{\mathbf X}^{\star} \ge
\frac{n}{2N^{1/d}} )$ and a simplified achievability guarantee ($\pi_{IC}\leq 4e\frac{n}{N^{1/d}}$), that imply a gap from optimal that is universally bounded by $8e \approx 22$, while then Corollary~\ref{cor:reducedGaps2} considered again simplified expressions, now under some divisibility assumptions, to show gaps that are bounded approximately by $3$ and $5$ for the uniform and non-uniform cases respectively.  As one would expect, the less simplified achievability expressions in~\eqref{eq:PiIC1} and the less simplified converse in Corollary~\ref{cor:ModifiedConverse}, entail even smaller gaps to optimal, as illustrated below.

\end{remark}

\begin{example}[Reduced gaps to optimal] \label{ex:ReducedGapOptim}
In the case of $n=119, d=3, N=\binom{7}{3} = 35$, we know from the main theorem that $f = 7$, which in turn guarantees $\pi_{IC} \leq  \frac{n}{f} d= \frac{119}{7} 3= 51$, which thus entails a gap to optimal (considering the improved converse in Corollary~\ref{cor:ModifiedConverse}) of at most $ 51 / \bigl(\frac{n}{(4N)^{1/d}}) = 51 / \bigl( \frac{119}{\sqrt[3]{140}})   < 2.25$.  
    Furthermore, when $n=34000, d=2, N=\binom{17}{2} = 136$, we have $f = 17$ and $\pi_{IC} \leq  \frac{n}{f} d= \frac{34000}{17} 2= 4000$ (partitioning gain of $\frac{n}{\pi} = \frac{f}{d} = 8.5$), which entails a gap of at most  $\frac{\pi_{IC}}{\pi_{LB}} = 4000 / \bigl(\frac{1}{2} \frac{34000}{\sqrt{136}})   < 2.85$.  Let's now take an example of a large hypergraph with $n=9\cdot 10^7, d=3, N=4060$, which corresponds to $f=30$ (partitioning gain of $f/d=10$), and which entails a gap to converse of at most \( \frac{\pi_{IC} }{ \pi_{LB} }=\frac{nd/f}{n/(4N)^{1/d}}  = 
\frac{(4N)^{1/d}}{f/d}
\approx
2.53 \).

\end{example}

\begin{remark}[The trillion-edge regime] 
    An important observation is that the regime captured by our assumptions includes graph and hypergraph scales that arise in modern industrial systems. For example, the trillion-edge graph-processing setting discussed in \cite{TrillionEdges}, includes social-network graphs ($d=2$) at Facebook scale, with on the order of \(10^9\) vertices and up to \(10^{12}\) edges, thus having density on the order of \(10^{-6}\) when viewed as subgraphs of \(\binom{[n]}{2}\). For the same \(n\approx 10^9\) and the same $N = 256$ as in \cite{TrillionEdges}, our cardinality condition (as is revealed in~\eqref{eq: non-approx} in a tighter, less approximated form), gives (for a large range of $z$ values)
\[
|\mathbf X|
\ge
\frac{6nN^{1-\frac1d}\log\!\bigl((2e)^dN\bigr)}{d}
+
12N\log(2n^z) \approx 4.28\times10^{11}\] which falls within the trillion-edge regime. Table~\ref{tab:sparse_thresholds_z3} sheds some additional light on this.

\end{remark}

\begin{remark}[Average degree]
Let us recall that the average vertex degree satisfies
\(
\overline{\deg}_{\mathbf X}
=
d\frac{|\mathbf X|}{n},
\)
since each hyperedge contributes exactly \(d\) incidences to the sum of all vertex degrees. Recalling the bound on $|\mathbf{X}|$ from ~\eqref{eq: non-approx}, 
we see  that the average vertex degree that guarantees the converse, is
\(
\overline{\deg}_{\mathbf X}
\ge
6N^{1-\frac1d}\log\!\bigl((2e)^dN\bigr)
+
\frac{12dN\log(2n^z)}{n}, 
\)
which, as $n$ increases, is nicely approximated by 
\(
\overline{\deg}_{\mathbf X}
\gtrsim
6N^{1-\frac1d}\log\!\bigl((2e)^dN\bigr), 
\)
which is a constant independent of $n$. For example, when \(N=10\) and $n$ is (roughly speaking) not small, the converse applies whenever the average vertex degree exceeds \textcolor{black}{\(109\)} for graphs (\(d=2\)), \(206\) for \(3\)-uniform hypergraphs (\(d=3\)), \(307\) for \(4\)-uniform hypergraphs (\(d=4\)), and \(407\) for \(5\)-uniform hypergraphs (\(d=5\)). 

\end{remark}


\begin{table}[t]
\centering
\caption{Sparse-regime thresholds for \(N=256\) and \(z=3\). The converse applies whenever \textcolor{black}{\( |\mathbf X| \ge  |\mathbf X|_{\mathrm{min}}\triangleq\frac{6nN^{1-\frac1d}\log((2e)^dN)}{d} + 12N\log(2n^3)\)}, with corresponding density \(\varphi_{\min}=|\mathbf X|_{\mathrm{min}}/\binom{n}{d}\).}
\label{tab:sparse_thresholds_z3}

\begin{tabular}{c|cccc}
\multicolumn{5}{c}{\textbf{Minimum cardinality \( |\mathbf X|_{\mathrm{min}}\)}}\\
\hline
\(n\) & \(d=2\) & \(d=3\) & \(d=4\) & \(d=5\) \\
\hline
\(10^6\) &
\(4.29\times10^8\) &
\(8.57\times10^8\) &
\(1.18\times10^9\) &
\(1.42\times10^9\)
\\
\(10^7\) &
\(4.29\times10^9\) &
\(8.57\times10^9\) &
\(1.18\times10^{10}\) &
\(1.42\times10^{10}\)
\\
\(10^8\) &
\(4.29\times10^{10}\) &
\(8.57\times10^{10}\) &
\(1.18\times10^{11}\) &
\(1.42\times10^{11}\)
\\
\(10^9\) &
\(4.29\times10^{11}\) &
\(8.57\times10^{11}\) &
\(1.18\times10^{12}\) &
\(1.42\times10^{12}\)
\\
\hline
\end{tabular}

\vspace{4mm}

\begin{tabular}{c|cccc}
\multicolumn{5}{c}{\textbf{Corresponding density \( \varphi_{\min}\)}}\\
\hline
\(n\) & \(d=2\) & \(d=3\) & \(d=4\) & \(d=5\) \\
\hline
\(10^6\) &
\(8.58\times10^{-4}\) &
\(5.14\times10^{-9}\) &
\(2.84\times10^{-14}\) &
\(1.70\times10^{-19}\)
\\
\(10^7\) &
\(8.57\times10^{-5}\) &
\(5.14\times10^{-11}\) &
\(2.84\times10^{-17}\) &
\(1.70\times10^{-23}\)
\\
\(10^8\) &
\(8.57\times10^{-6}\) &
\(5.14\times10^{-13}\) &
\(2.84\times10^{-20}\) &
\(1.70\times10^{-27}\)
\\
\(10^9\) &
\(8.57\times10^{-7}\) &
\(5.14\times10^{-15}\) &
\(2.84\times10^{-23}\) &
\(1.70\times10^{-31}\)
\\
\hline
\end{tabular}

\end{table}

\begin{remark}[On ARF]
We are careful to note that, while our achievability results immediately translate into ARF achievability guarantees since 
$ARF \leq \frac{N}{n} \pi$, it is the case that our converse cannot be seen as a converse on the ARF, simply because a lower bound on a worst-case metric (our $\pi$) does not in general imply a corresponding lower bound on an average-case metric.

In addition, let us note that direct comparisons between algorithms designed to optimize these different objectives, ARF and $\pi$, should be interpreted with care. 
To illustrate this distinction, consider a Hypergraph Stochastic Block Model with \(K=100\) communities and \(n=10^6\) vertices, where only \(1\%\) of the vertices participate in cross-community hyperedges. An ARF-oriented partitioner may place the vast majority of vertices in a single partition while replicating only the crossover vertices. In this case, the ARF would be approximately \(0.99\cdot 1 + 0.01\cdot 2 = 1.01\) (for the case of $d=2$), which appears nearly optimal. However, the partitions responsible for handling the crossover hyperedges may accumulate a disproportionately large number of vertices, resulting in a substantially larger worst-case footprint. Thus, a partitioning strategy may achieve an excellent ARF while simultaneously exhibiting poor MVF performance.

\textcolor{black}{\begin{remark}[On parameter $z$]
 As one can see, for larger values of $n$, $\varphi_{\min}$ (cf.~\eqref{eq:phi1},\eqref{eq:phi2}) 
 is effectively unaffected by $z$, which can thus be chosen large enough to easily render the probabilistic converse -- which holds with probability $1-\frac{1}{n^z}$ -- effectively deterministic. 
\end{remark}}
\end{remark}

\section{Achievable Scheme: Interweaved Cliques Design}
\label{sec:achievability}

We now describe the Interweaved-Cliques design; a deterministic edge-partitioning scheme that first constructs a partition of a complete hyperedge universe $\mathbf{A}$, and then refines this partition to the observed edge set $\mathbf{X}$. 
The design, which will be presented for any \({n,d\leq \frac{n}{2}},N\leq \binom{\lfloor\sqrt{\frac{nd}{2}}\rfloor}{d}\) \footnote{The condition on $N$ is very loose, as it allows $N$ to rise up to approximately $ \left(\frac{n}{2d}\right)^{d/2}$, which can far exceed $n$. In the context of, for example, distributed computing, this would entail many more servers than files.}, will in the end produce $N$ groups $\mathbf{\Phi}_1,\dots,\mathbf{\Phi}_N$, such that
\[
\pi_{\mathbf{X}} =\max_{b \in [N]} |\alpha(\mathbf{\Phi}_b)|, \quad \text{and} \quad \delta_{\mathbf{X}} =\frac{\max_{b \in [N]} |\mathbf{\Phi}_b|}{\lceil|\mathbf{X}|/N\rceil}
\] are minimized. 

The construction will proceed in steps. 
 In particular, in Section~\ref{subsec: uniform}, we will present an $N'$-partition of $\mathbf{A}_{n,d}$ (partition the ambient space into $N'$ groups), for some carefully chosen $N' \leq N$, while in Section~\ref{subsec: non uniform} this $N'$-partition will be extended to the non-uniform ambient space $\bigcup_{k=1}^{d}\mathbf{A}_{n,k}$.
Then in Section~\ref{sec:AnalysisNpPartition}, we will provide, in Lemma~\ref{lem: phin,k, cardinality}, an analysis of important parameters of the above $N'$-partitions. Subsequently, in Section~\ref{subsec: N' to N}, we will extend the partitions of $\mathbf{A}_{n,d}$ and $\bigcup_{k=1}^{d}\mathbf{A}_{n,k}$, from \(N'\) groups to the desired \(N\) groups, and we will present Lemma~\ref{lem: delta, power set} that offers guarantees on $\pi$ and $\delta$ for these $N$-partitions, again still of the ambient space \( \mathbf{A}_{n,d}\) and \(\bigcup_{k=1}^{d} \mathbf{A}_{n,k}\).  Finally, in Section~\ref{sec:PartitionX1}, we will adapt the above $N$-partitions, to account for the actual $\mathbf{X}$. We note here that the construction, for the uniform case, can be seen as a calibrated version of the original design in \cite{IC}, and parts are reproduced here; partly for completeness, and partly to facilitate the presentation of the non-uniform case.


\subsection{$N'$-partition of $\mathbf{A}_{n,d}$}
\label{subsec: uniform}
We first construct a partition $\widetilde {\mathbb{P}}$ of the complete $d$-uniform hyperedge set
$\mathbf{A}_{n,d}$. The construction for this part, follows closely the steps in \cite{IC}, and is based on grouping the vertices into $f$ families of vertices (to be described later on), where
\begin{equation}
\label{eq: f def}
f\triangleq\max\left\{r:\binom{r}{d}\leq N\right\}.
\end{equation}

Each family contains $s$ vertices. Let $g = n-fs$.  If $f\mid n$, then 
\begin{equation}
 \label{eq: s selection}   
 s=\frac{n}{f},
\qquad
g=0,
\qquad
\mathcal{E}=\emptyset
\end{equation}
else 
\begin{equation}
 \label{eq: s selection2}   
s=\left\lfloor\frac{n}{f+d}\right\rfloor+1,
\qquad
g=n-fs
\end{equation}
where in the above, we consider a so-called `excluded set' of $g$ vertices\footnote{Think of these as leftover vertices, a result of parameter divisibility; if for example we have $n=102$ vertices, and $f=20$ vertex families, we will have $g=2$ leftover vertices (vertices $101$ and $102$), and these will be handled later on.}
\begin{equation}
\label{eq: excluded}
   \mathcal{E}\triangleq [n]\setminus[n-g]
=
\{n-g+1,\ldots,n\}. 
\end{equation}
We now partition the $n-g=fs$ non-excluded vertices {$[n-g] $}, into $f$ disjoint families
\begin{equation} \label{eq:Fis1}
  \mathcal{F}_i=\{(i-1)s+1,\ldots,is\},
\qquad i\in[f]
\end{equation}
leaving $g$ vertices out, for now. 

At this point, we will, initially, produce $N'\triangleq{f \choose d}$ intermediate edge groups 
$\widetilde {\mathbf{\Phi}}_\sigma, \sigma\in\mathbf{\Sigma}$, 
where 
\begin{equation}
\label{eq: N'intermediate}
\qquad
\mathbf{\Sigma}\triangleq \{\sigma\subseteq[f]:|\sigma|=d\}.   
\end{equation}
In words, each  $\widetilde {\mathbf{\Phi}}_\sigma$
is labeled by a $d$-tuple from $[f]$. We will `allocate' \textcolor{black}{each group $\widetilde {\mathbf{\Phi}}_\sigma$, with a vertex footprint $\alpha(\widetilde {\mathbf{\Phi}}_\sigma)$ of the form 
\begin{align}
    \alpha(\widetilde {\mathbf{\Phi}}_\sigma) = \bigcup_{i\in\sigma}\mathcal{F}_i
\end{align}
involving the vertices that belong to the families indexed by $\sigma$, 
and this will thus entail} 
\begin{equation} \label{eq:alphaPhiSigma}
  |\alpha(\widetilde {\mathbf{\Phi}}_\sigma)|=\sum_{i\in\sigma}|\mathcal{F}_i|=s\cdot d . 
\end{equation}
We will see soon how the hyperedges are assigned to each $\widetilde {\mathbf{\Phi}}_\sigma$.

At this point, for an edge $\mathbf{e}\in \mathbf{A}_{n,d}$, we define its support family by
\begin{equation}
 \mathcal{B}(\mathbf{e})\triangleq \{j\in[f]:\mathbf{e}\cap \mathcal{F}_j\neq\emptyset\}   
\end{equation}
and we let $\beta_{\mathbf{e}} \triangleq|\mathcal{B}(\mathbf{e})|$, where the above $\mathcal{B}(\mathbf{e})$ is simply the families of vertices that an edge $\mathbf{e}$ intersects with. We first separate the edges according to whether they have full family support or not, thus creating 
\[
\mathbf{A}_{\mathrm{full}}\triangleq \{\mathbf{e}\in \mathbf{A}_{n,d}:\beta_{\mathbf{e}}=d\}
\]
which is the set of edges that each intersects exactly one vertex from each of $d$ distinct families. Then, each edge in $\mathbf{A}_{\mathrm{full}}$, is assigned uniquely to the group indexed by its support
\begin{equation}
\label{eq: phi-full}
\widetilde{\mathbf{\Phi}}^{(\mathrm{full})}_\sigma
\triangleq
\{\mathbf{e}\in \mathbf{A}_{\mathrm{full}}:\mathcal{B}(\mathbf{e})=\sigma\},
\qquad \sigma\in\binom{[f]}{d}.   
\end{equation}
We remind ourselves that this partitions the hyperedges of $\mathbf{A}_{\mathrm{full}}$, i.e., those edges in $\mathbf{A}_{n,d}$ that have full support, and which involve no excluded vertices (no vertices from $\mathcal{E}$).  
This is the core of the design, and it will effectively define most or all the \emph{vertices} associated to each group. 

We now proceed to distribute the remaining edges, i.e., those in 
\begin{equation}
 \mathbf{A}_{\mathrm{rem}}\triangleq \mathbf{A}_{n,d}\setminus \mathbf{A}_{\mathrm{full}}
\end{equation}
into the \(N'={ f \choose d}\) intermediate groups. Please note that now we are allowing hyperedges that may also involve vertices from $\mathcal{E}$.

Any edge \(\mathbf{t} \in \mathbf{A}_{\mathrm{rem}}\) will have an arbitrary number \(m_{\mathbf{t}}=|\mathbf{t}  \cap \mathcal{E}|\) of vertices from the set $\mathcal{E}$, and it will thus have  \(d-m_{\mathbf{t}} = |\mathbf{t}  \cap [n-g]|\) vertices from the \([n] \setminus\mathcal{E}\). 
It is easy to see that 
\begin{equation}
\label{eq: valid range m}
     m_{\mathbf{t}}\in [0, \min\{d,g\}]
\end{equation}
 and thus that \(d-m_{\mathbf{t}} \in [\max\{d-g,0\},d]\). We will henceforth, whenever there is no ambiguity, revert to the simpler notation \(m\) instead of $ m_{\mathbf{t}}$.  
We continue, and for every \(m\in [0, \min\{d,g\}]\), we define the set
 \begin{equation} 
    \label{eq:R_m,beta}
    \mathbf{R}_{m,\beta} \triangleq \left\{ \mathbf{t} \in \mathbf{A}_{\mathrm{rem}} \;\middle|\; |\mathcal{B}(\mathbf{t})| = \beta  ,\ |\mathbf{t}  \cap \mathcal{E}|=m \right\}
    \end{equation}
which describes the edges \( \mathbf{t} \) that intersect exactly \( \beta \) families and also contain \(m\) excluded vertices from $\mathcal{E}$. Notice that \(\beta\) can take values in the range \(\big[\lceil\frac{d-m}{s}\rceil, \min\{d-1,d-m\}\big]\) \footnote{This is because $\beta_{\mathbf{t}}\le d-1$ for any \(\mathbf{t} \in \mathbf{A}_{\mathrm{rem}}\).}. If \(m=d\leq g\), then \(\beta =0\), which means that all the entries of \(\mathbf{t}\) are from \(\mathcal{E}\). Let us now partition \(\mathbf{A}_{\mathrm{rem}}\) as follows
\[\mathbf{A}_{\mathrm{rem}}=\bigcup_{m=0}^{\min\{d, g\}}\bigcup_{\beta=\lceil\frac{d-m}{s}\rceil}^{\min\{d-1,d-m\}} \mathbf{R}_{m,\beta}. \]
    For each $\mathcal{I}\in\binom{[f]}{\beta}$, let us first define 
 \begin{equation}
\mathbf{R}_{\beta,\mathcal{I}}
 \triangleq \big\{\mathbf{t}\in\mathbf{A}_{\mathrm{rem}}\ \mid \mathcal{B}(\mathbf{t})=\mathcal{I},\ 
 \mathbf{t}\in\bigcup_{m=0}^{\min\{d-\beta,g\}}\mathbf{R}_{m,\beta}\big\}
 \end{equation}
 to be the set of all $d$-tuples $ \mathbf{t} \in \mathbf{A}_{\mathrm{rem}}$ that intersect exactly all families in $\mathcal{I}$, where in the above, $\mathcal{B}(\mathbf{t})$ denotes the set of families that $\mathbf{t}$ intersects. Let us now also define 
\(\mathbf{R}_{\beta}\triangleq\bigcup_{\mathcal{I}\in\binom{[f]}{\beta}}\mathbf{R}_{\beta,\mathcal{I}} \subset \mathbf{A}_{\mathrm{rem}} \) to be the set of all {edges in $\mathbf{A}_{\mathrm{rem}}$} that meet exactly $\beta$ families.
Furthermore, directly by applying {the established ranges of parameters} \(m\) (cf.~\eqref{eq: valid range m}), we can conclude that the range of \(\beta \) is \(\beta \in [\beta_{\mathrm{min}}, \beta_{\mathrm{max}}]\), where \(
    \beta_{\min} \;\triangleq\; \lceil \frac{d-\min\{d,g\}}{s} \rceil
            \;=\;\lceil \frac{\max\{0,d-g\}}{s} \rceil \)
and $\beta_{\max} \;\triangleq\; d-1.$
   Our next step involves going through this range of $\beta$. For each \(\beta \in [\beta_{\mathrm{min}}, \beta_{\mathrm{max}}]\), we partition each time the set \(\mathbf{R}_{\beta}\) into \( N'={f \choose d} \) groups. This partitioning is described in detail as follows. 
For every group \(\sigma \in {[f] \choose d}\), we proceed to describe \(\widetilde{\mathbf{\Phi}}^{\mathrm{(rem)}}_{\sigma}\), which will be the set of all \textcolor{black}{$d$-tuples} from $\mathbf{A}_{\mathrm{rem}}$ that will be allotted to the intermediate group labeled by \(\sigma\). For each \(\beta \in [\beta_{\mathrm{min}}, \beta_{\mathrm{max}}]\), the group \(\sigma\) has exactly \(\binom{d}{\beta}\) distinct subsets \(\mathcal{I}\in {[f] \choose \beta}\) of cardinality \(|\mathcal I|=\beta\).
For each such $\mathcal I$, let us consider the set \(\mathbf{R}_{\beta,\mathcal I}\) to be the set of all $t_{\beta}$ distinct \(d\)-tuples from \(\mathbf{R}_{\beta}\) which intersect with \(\mathcal{I}\). For a \(d\)-tuple \(\mathbf{e}\), we say that \(\mathbf{e}\) is \emph{eligible} for assignment to a group \(\widetilde{\mathbf{\Phi}}_\sigma\), if \(\mathcal{B}(\mathbf{e})\subset \sigma\), recalling also that for each \(\mathbf{e}\in \mathbf{R}_{\beta}\), there is one \(\mathcal I \in {[f] \choose \beta}\) such that \(\mathcal{I}= \mathcal{B}(\mathbf{e})\). Let us define the set of eligible groups for each \(\mathcal I \in {[f] \choose \beta}\), with cardinality \(\beta\), as
\begin{equation}
\label{eq: g_beta,I}
    \mathcal{G}_{\beta, \mathcal{I}}\triangleq \{\sigma \in {[f] \choose d}| \ \mathcal{I}\subseteq \sigma\}
    \end{equation}
and let us also note that the number, denoted here as $ m_\beta$, of eligible groups, takes the form
\begin{equation}
\label{eq: m_beta}
  m_\beta \;\triangleq\; |\mathcal{G}_{\beta, \mathcal{I}}|=\binom{f-\beta}{d-\beta}.  
\end{equation}  
Now recall that \(t_\beta=|\mathbf{R}_{\beta, \mathcal I}|\), 
and let us introduce the notation
\begin{equation}
 \label{eq: q_beta}
    q_\beta \;\triangleq\; \Big\lfloor\frac{t_\beta}{m_\beta}\Big\rfloor
\end{equation}
and 
\begin{equation}
 \label{eq: r_beta}
    r_\beta \;\triangleq\; t_\beta- q_\beta  m_\beta, \qquad 0\le r_\beta < m_\beta.
\end{equation}
We now consider \(\mathbf{R}_{\beta,\mathcal I}\) in lexicographic order and denote the ordered list by
\begin{equation}
 \mathbf{R}^{\mathrm{lex}}_{\beta,\mathcal I}
\;\triangleq\;
\big( \mathbf{r}_{\beta,\mathcal I}^{(1)}, \mathbf{r}_{\beta,\mathcal I}^{(2)}, \dots, \mathbf{r}_{\beta,\mathcal I}^{(t_\beta)}\big)   
\end{equation}
where \(\mathbf{r}_{\beta,\mathcal I}^{(\ell)}\) describes the \(\ell\)-th \(d\)-tuple in this lexicographic ordering.
Subsequently, let us fix the lexicographic ordering of the \(m_\beta\) groups in \(\mathcal{G}_{\beta,\mathcal I}\), and let us denote this with 
\begin{equation}
\mathcal{G}_{\beta,\mathcal I}^{\mathrm{lex}}
\;=\;
\big(\sigma_1,\sigma_2,\dots,\sigma_{m_\beta}\big).  
\end{equation}
We now partition \(\mathbf{R}^{\mathrm{lex}}_{\beta,\mathcal I}\) into the \(m_\beta\) groups \(\sigma_1,\dots,\sigma_{m_\beta}\).  
Since \(r_\beta=\lfloor \frac{t_\beta}{m_\beta}\rfloor\), each group receives either \(q_\beta\) or \(q_\beta+1\) \(d\)-tuples, which will mean that, in particular, \(r_\beta\) groups receive \(q_\beta+1\) \(d\)-tuples and the remaining \(m_\beta-r_\beta\) groups receive \(q_\beta\) \(d\)-tuples.
Let us now define the index of \(\sigma\) (given a specific \(\mathcal I \subset \sigma\)) by
\begin{equation}
 J_{\beta,\mathcal I}(\sigma)
\;\triangleq\; \{ j \ : \ \sigma=\sigma_j\text{ for some }j\in [m_\beta], \sigma_j \in \mathcal{G}_{\beta,\mathcal I}^{\mathrm{lex}} \}.   
\end{equation}
 
For \(\mathcal I \subset \sigma\) with \(J_{\beta,\mathcal I}(\sigma)=j\in\{1,\dots,m_\beta\}\), we now define the starting and ending indices of the block allocated to \(\sigma\), by
\begin{align}
s_{j,\beta,\mathcal I} &\;=\; (j-1)q_\beta+ \min(j-1,r_\beta) + 1,\label{eq:start 1}\\[4pt]
e_{j,\beta,\mathcal I} &\;=\; j q_\beta + \min(j,r_\beta)\label{eq:end 2}
\end{align}
which gives us the final set 
\begin{equation}\label{eq:allocated-set,C}
\quad
\mathbf{R}_{\beta,\mathcal I,\sigma}
\;=\;
\big\{\, \mathbf{r}_{\beta,\mathcal I}^{(\ell)} \;:\; \ell = s_{j,\beta,\mathcal I},\; s_{j,\beta,\mathcal I}+1,\; \dots,\; e_{j,\beta,\mathcal I}
\,\big\}
\quad
\end{equation}
of allocated edges to intermediate group \(\sigma\), for this \(\mathcal I\). 
Now we go over all \(\mathcal I\) and $\beta$, to combine the disjoint sets $\mathbf{R}_{\beta, \mathcal{I}, \sigma}$, to get
\begin{equation}
\label{eq: phi_rem}
    \widetilde{\mathbf{\Phi}}_{\sigma}^{\mathrm{(rem)}}=\bigcup_{\beta=\beta_{\mathrm{min}}}^{\beta_{\mathrm{max}}}\bigcup_{\mathcal{I}\subset \sigma}\mathbf{R}_{\beta, \mathcal{I}, \sigma}.
\end{equation}

Finally, for every \(\sigma \in {[f]\choose d}\), 
we combine the disjoint sets $\widetilde{\mathbf{\Phi}}_{\sigma}^{(\mathrm{full})}$ in \eqref{eq: phi-full} and $\widetilde{\mathbf{\Phi}}_{\sigma}^{\mathrm{(rem)}}$ in \eqref{eq: phi_rem}, to get the entire set of hyperedges  
 \begin{equation}
     \label{eq:SubfunctionCase1}
     \widetilde{\mathbf{\Phi}}_{\sigma}=\widetilde{\mathbf{\Phi}}_{\sigma}^{(\mathrm{full})}\cup \widetilde{\mathbf{\Phi}}_{\sigma}^{\mathrm{(rem)}}
 \end{equation}
allocated to group \(\sigma\). 
This gives us the partition 
\begin{equation}
\label{eq: full partition}
   \Adn= \bigcup\nolimits_{\sigma \in {[f] \choose d}}\widetilde{\mathbf{\Phi}}_{\sigma}
\end{equation}
of $\Adn$ into \(N'={f \choose d}\) groups. 
\subsection{$N'$-partition of $\bigcup_{k=1}^{d} \mathbf{A}_{n,k}$}
\label{subsec: non uniform}

We now extend the above, to the case of $\mathbf{A} = \bigcup_{k=1}^{d} \mathbf{A}_{n,k}$. 


The same value of $f$ in \eqref{eq: f def} is used, so that $N'=\binom{f}{d}$. As before, if $f\mid n$, we set $s=n/f$; otherwise, we set
\(
s=\left\lfloor\frac{n}{f+d}\right\rfloor+1,
\)
and define again the set $\mathcal{E}$ of (temporarily) excluded vertices as in~\eqref{eq: excluded}. In either case, each intermediate group is still indexed by some \(
\sigma\in\binom{[f]}{d}\).

For \(k=d\), we follow the steps in Section \ref{subsec: uniform} and partition \(\Adn\) into \(N'\) groups as follows 
\begin{equation}
\label{eq: k=d}
   \Adn=\bigcup_{\sigma \in {[f]\choose d}}\widetilde{\mathbf{\Phi}}_{\sigma}^{d}. 
\end{equation} For the rest $k\in[1,d-1]$, we will eventually partition $\mathbf{A}_{n,k}$ as follows
\begin{equation}
\label{eq: An,k}
 \mathbf{A}_{n,k}
=
\bigcup_{\sigma\in\binom{[f]}{d}} \widetilde{\mathbf{\Phi}}^k_\sigma
\end{equation}
where the allocation will again be based on support families \textcolor{black}{and \(\widetilde{\mathbf{\Phi}}^k_\sigma\) denotes the set of  \(k\)-tuples from \(\mathbf{A}_{n,k} \) allotted to the group \(\sigma\). } In particular, for $\mathbf{e} \in \mathbf{A}_{n,k}$, we have
\[
\mathcal{B}({\mathbf{e}})=\{i\in[f]: \mathbf{e}\cap \mathcal{F}_i\neq\emptyset\},
\qquad
\beta_{\mathbf{e}}=|\mathcal{B}({\mathbf{e}})|
\]
and again, if excluded vertices are present (\(g >0\)), we let $m=|\mathbf{e}\cap \mathcal{E}|$; otherwise $m=0$.
For every \(m\in [0, \min\{k,g\}]\), we define the set
 \begin{equation} 
    \label{eq:R_m,beta}
    \mathbf{Q}_{m,\beta} \triangleq \left\{ \mathbf{e} \in \mathbf{A}_{n,k}\; : \; \beta_{\mathbf{e}} = \beta  ,\ |\mathbf{e}  \cap \mathcal{E}|=m \right\}
    \end{equation}
which describes the set of all \(k\)-tuples \( \mathbf{e} \in \mathbf{A}_{n,k}\) that intersect exactly \( \beta \) families and have \(m\) excluded vertices from $\mathcal{E}$. Notice that \(\beta\) can take values in the range \(\big[\lceil\frac{k-m}{s}\rceil, k-m\big]\), and if \(m=k\leq g\), then \(\beta =0\), which means that all the vertices of \(k\)-tuple \(\mathbf{e}\) are from \(\mathcal{E}\). We seek to partition \(\mathbf{A}_{n,k}\) as follows
\begin{equation} \mathbf{A}_{n,k}=\bigcup_{m=0}^{\min\{k, g\}}\bigcup_{\beta=\lceil\frac{k-m}{s}\rceil}^{k-m} \mathbf{Q}_{m,\beta}.
\end{equation}
Towards this, let us fix \(k\in [1, d-1]\), for any \(\beta \in [\lceil\frac{k}{s}\rceil, k]\) and for each $\mathcal{I}\in\binom{[f]}{\beta}$, let us define 
 \begin{equation}
  \label{eq:Q_beta,I}   
 \mathbf{Q}_{\beta,\mathcal{I}}^{k}
 \triangleq \big\{\mathbf{e}\in\mathbf{A}_{n,k}\ :\ \mathcal{B}(\mathbf{e})=\mathcal{I},\ 
 \mathbf{e}\in\bigcup_{m=0}^{\min\{k-\beta,g\}}\mathbf{Q}_{m,\beta}\big\}
  \end{equation}
 to be the set of all $k$-tuples $ \mathbf{e} \in \mathbf{A}_{n,k}$ that intersect exactly all families in $\mathcal{I}$ (again we recall that $\mathcal{B}(\mathbf{e})$ denotes the set of families that $\mathbf{e}$ intersects). It is relatively easy to see that the cardinality \(t_{\beta}^{k}\triangleq|\mathbf{Q}_{\beta,\mathcal{I}}^{k}|\) remains the same for any $\mathcal{I}\in\binom{[f]}{\beta}$ (fixed \(\beta\) and \(k\)). 
Let us now also define 
 \begin{equation}
 \label{eq:Q_beta}
  \mathbf{Q}_{\beta}^{k}\triangleq\bigcup_{\mathcal{I}\in\binom{[f]}{\beta}}\mathbf{Q}_{\beta,\mathcal{I}}^{k} \subset \mathbf{A}_{n,k}
  \end{equation}
to be the set of all \(k\)-tuples that meet exactly $\beta$ families.
Now, we follow a similar approach as that described in Section \ref{subsec: uniform}; for any $k \in [1, d-1]$ and every group \(\sigma \in {[f] \choose d}\), \textcolor{black}{we describe} \(\widetilde{\mathbf{\Phi}}^{k}_{\sigma}\) -- i.e., the set of all $k$-tuples allotted to the group \(\sigma\) from $\mathbf{A}_{n,k}$ -- as follows
\begin{equation}
\label{eq: phi_ k<d}
    \widetilde{\mathbf{\Phi}}_{\sigma}^{k}=\bigcup_{\beta=\beta_{\mathrm{min}}^{k}}^{\beta_{\mathrm{max}}^{k}}\bigcup_{\mathcal{I}\subset \sigma}\mathbf{Q}_{\beta, \mathcal{I}, \sigma}^{k}
\end{equation}
where \(\mathbf{Q}_{\beta, \mathcal{I}, \sigma}^{k}\) is a subset of \(\mathbf{Q}_{\beta, \mathcal{I}}^{k}\) allotted to \(\sigma\), and where
\begin{equation}
\label{eq:beta_min}
    \beta_{\mathrm{min}}^{k} \;\triangleq\; \left\lceil \frac{k-\min\{k,g\}}{s} \right\rceil
            \;=\; \left\lceil \frac{\max\{0,k-g\}}{s} \right\rceil,
\end{equation}
\begin{equation}
\label{eq:beta_max}
    \beta_{\max}^{k} \;\triangleq\; k.
\end{equation}
\textcolor{black}{At this point, group \(\widetilde{\mathbf{\Phi}}_{\sigma}, \ \sigma \in {[f]\choose d},\) will be formed as follows
\begin{equation}
    \label{eq: group sigma, power set}
    \widetilde{\mathbf{\Phi}}_{\sigma}=\bigcup_{k=1}^{d} \widetilde{\mathbf{\Phi}}_{\sigma}^{k}.
\end{equation}}
Finally, combining \eqref{eq: k=d}, \eqref{eq: An,k}, and \eqref{eq: group sigma, power set}, we have the desired partition
\begin{equation}
    \bigcup_{k=1}^{d}\mathbf{A}_{n,k}=\bigcup_{k=1}^{d}\bigcup_{\sigma \in {[f]\choose d}} \widetilde{\mathbf{\Phi}}_{\sigma}^{k}=\bigcup_{\sigma \in {[f]\choose d}}\bigcup_{k=1}^{d} \widetilde{\mathbf{\Phi}}_{\sigma}^{k}=\bigcup_{\sigma \in {[f]\choose d}} \widetilde{\mathbf{\Phi}}_{\sigma}.
\end{equation}

\subsection{Analysis of the $N'$-partition of $\mathbf{A}_{n,d}$ and of $\bigcup_{k=1}^{d}\mathbf{A}_{n,k}$.} \label{sec:AnalysisNpPartition}
The following lemma quantifies the uniformity of the edge allocation across intermediate groups.
\begin{lemma}[Bounding $|\widetilde{\mathbf{\Phi}}_{\sigma}^{k}|$]
\label{lem: phin,k, cardinality}
The cardinality of $\widetilde{\mathbf{\Phi}}_{\sigma}^{k}, \ k \in [1, d]$, is upper and lower bounded as  
\[
\max_{\sigma \in {[f]\choose d}}|\widetilde{\mathbf{\Phi}}_{\sigma}^{k}| \le \frac{\binom{n}{k}}{\binom{f}{d}} + 2^{d} - d
\]
and as
\[
\min_{\sigma \in {[f]\choose d}}|\widetilde{\mathbf{\Phi}}_{\sigma}^{k}| \ge \frac{\binom{n}{k}}{\binom{f}{d}} -2^{d} + d
\]
and thus, for the \(d\)-uniform case, $|\widetilde{\mathbf{\Phi}}_{\sigma}^{d}|$ is bounded as   
\[
 \frac{\binom{n}{d}}{\binom{f}{d}} -2^{d} + d \le|\widetilde{\mathbf{\Phi}}_{\sigma}^{d}| \le \frac{\binom{n}{d}}{\binom{f}{d}} + 2^{d} - d.
\]
\end{lemma}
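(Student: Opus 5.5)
Write $\widetilde{\mathbf{\Phi}}_{\sigma}^{k}$ as a disjoint union of ``pieces'' indexed by pairs $(\beta,\mathcal I)$ with $\mathcal I\subseteq\sigma$, $|\mathcal I|=\beta$. Piece $(\beta,\mathcal I)$ is the block $\mathbf{Q}^{k}_{\beta,\mathcal I,\sigma}$, or $\mathbf{R}_{\beta,\mathcal I,\sigma}$ when $k=d$ and $\beta<d$. When $k=d$ and $\mathcal I=\sigma$, the piece is $\widetilde{\mathbf{\Phi}}^{(\mathrm{full})}_\sigma$. The plan is to compare each piece with its ideal share $t^k_\beta/m_\beta$, where $m_\beta=\binom{f-\beta}{d-\beta}$ is the number of groups containing $\mathcal I$.

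\textbf{Step 1: one piece.} By the round-robin block rule \eqref{eq:start 1}--\eqref{eq:end 2}, each group $\sigma\supseteq\mathcal I$ receives either $q_\beta$ or $q_\beta+1$ elements of the lexicographic list. Hence the piece size differs from $t^k_\beta/m_\beta$ by strictly less than $1$ in absolute value. The full piece ($k=d$, $\beta=d$) has $m_d=1$, so it is exact: its size is $s^d=t^d_d$. The all-excluded piece ($\beta=0$, $\mathcal I=\emptyset$, with $m_0=\binom fd$) is handled the same way, with error below $1$.

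\textbf{Step 2: the ideal total equals the average.} Summing the ideal shares over all pieces of $\sigma$ gives $\sum_{\beta}\binom{d}{\beta}\,t^k_\beta/\binom{f-\beta}{d-\beta}$. Use the identity $\binom{d}{\beta}\binom{f}{d}=\binom{f}{\beta}\binom{f-\beta}{d-\beta}$ to rewrite each term as $\binom f\beta t^k_\beta/\binom fd$. Then $\sum_\beta\binom f\beta t^k_\beta=\sum_\beta|\mathbf Q^k_\beta|=\binom nk$, because the sets $\mathbf Q^k_\beta$ (respectively $\mathbf A_{\mathrm{full}}$ together with the $\mathbf R_\beta$ for $k=d$) partition $\mathbf A_{n,k}$. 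This relies on $t^k_\beta$ being independent of $\mathcal I$, as noted after \eqref{eq:Q_beta,I}. The main thing to verify here is that the indices are consistent: every edge with $\beta_{\mathbf e}=\beta$ lies in exactly one $\mathbf Q^k_{\beta,\mathcal I}$, and $\beta$ ranges only over values with $\beta\le d$. Hence the ideal total equals $\binom nk/\binom fd$ exactly.

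\textbf{Step 3: count the pieces carrying rounding error.} Each non-exact piece contributes an error of absolute value below $1$, so $\bigl||\widetilde{\mathbf{\Phi}}_{\sigma}^{k}|-\binom nk/\binom fd\bigr|$ is at most the number of non-exact pieces. For $k=d$ the pieces correspond to subsets $\mathcal I\subsetneq\sigma$ with $\beta\le d-1$, numbering at most $2^d-1$. A tighter count is needed to reach $2^d-d$. Pieces with $\beta=d-1$ have $m_{d-1}=f-d+1$. For these, one could try to show $t_{d-1}$ is divisible by $f-d+1$, or else that rounding is coordinated across the $d$ such $\mathcal I$. Alternatively, one can argue integrality: the total is an integer, so the sum of fractional errors is constrained. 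I expect this bookkeeping to be the main obstacle; it must exploit the integrality of the $d$ pieces of size $\beta=d-1$ or of the singleton pieces. For $k<d$ the number of pieces is $\sum_{\beta\le k}\binom d\beta\le 2^d-1$, which needs the same trimming. If the sharp $2^d-d$ proves elusive, the natural fallback is the bound $2^d-1$, which changes only constants downstream.
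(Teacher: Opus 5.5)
Your Steps 1 and 2 coincide with the paper's proof. Both use the same decomposition of $\widetilde{\mathbf{\Phi}}^k_\sigma$ into the blocks $\mathbf{Q}^k_{\beta,\mathcal I,\sigma}$, the same sandwich $q^k_\beta\le|\mathbf{Q}^k_{\beta,\mathcal I,\sigma}|\le q^k_\beta+1$, the identity $\binom d\beta\binom fd=\binom f\beta\binom{f-\beta}{d-\beta}$, and the summation $\sum_\beta\binom f\beta t^k_\beta=\binom nk$.

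The gap is your Step 3, and it cannot be filled, because the constant $2^d-d$ is false. The paper closes this step only through $\sum_{\beta=\beta^k_{\min}}^{\beta^k_{\max}}\binom d\beta\le\sum_{\beta=0}^{k}\binom d\beta=2^d-d$. The last equality is simply wrong: for $d=2$, $k=1$ the sum is $3$, and for $k=d-1$ it is $2^d-1$. So the paper's argument, like yours, only controls the error by the number of non-exact blocks, which is at most $2^d-1$.

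Here is a counterexample within the paper's hypotheses. Take the uniform case with $n=28$, $d=2$, $N=10=\binom{\lfloor\sqrt{28}\rfloor}{2}$, and note $d\le n/2$. Then $f=5$, $f\nmid n$, $s=\lfloor 28/7\rfloor+1=5$, and $g=3$.
For $\mathcal I=\{i\}$ one has $t_1=\binom52+5\cdot 3=25$ and $m_1=4$, so $q_1=6$ and $r_1=1$. For $\mathcal I=\emptyset$ one has $t_0=\binom32=3$ and $m_0=10$, so $q_0=0$ and $r_0=3$.
Since $[d]$ is the lexicographically smallest element of $\binom{[f]}{d}$, it is $\sigma_1$ in every list $\mathcal G^{\mathrm{lex}}_{\beta,\mathcal I}$ with $\mathcal I\subseteq[d]$. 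By \eqref{eq:start 1}--\eqref{eq:end 2} it therefore receives $q_\beta+1$ from every block with $r_\beta>0$. This gives $|\widetilde{\mathbf{\Phi}}_{\{1,2\}}|=25+7+7+1=40$, whereas the lemma claims at most $\binom{28}{2}/\binom52+2^2-2=39.8$.

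The same example rules out each repair you suggest:
\begin{itemize}
\item $t_{d-1}=25$ is not divisible by $f-d+1=4$.
\item The lexicographic rule does not coordinate the roundings; it stacks all round-ups on the single group $[d]$.
\item Integrality does not help, since $40$ is an integer exceeding the bound.
\end{itemize}

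What your Steps 1--3 do establish is $\bigl||\widetilde{\mathbf{\Phi}}^k_\sigma|-\binom nk/\binom fd\bigr|\le\sum_{\beta=\beta^k_{\min}}^{\min\{k,d-1\}}\binom d\beta\le 2^d-1$, which is exactly your fallback. That is the correct form of the lemma, and it is all the paper's proof supports. You should state and prove it with $2^d-1$, and then carry $2^d-1$ in place of $2^d-d$ through the places where the lemma is used downstream, e.g.\ Lemma~\ref{lem: delta, power set} and Appendix~\ref{appendix: Discussion u}.
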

{\begin{proof}
    The proof of Lemma \ref{lem: phin,k, cardinality} is provided in Appendix \ref{appendix: proof of phin,k, cardinality}.
\end{proof}}
Now, continuing, we sum the bounds above, to get
\begin{align}
   |\cup_{k=1}^{d} \widetilde{\mathbf{\Phi}}_{\sigma}^{k}|\le \sum_{k=1}^{d} \big( \frac{\binom{n}{k}}{\binom{f}{d}} + 2^{d} - d \big)  &= \frac{\sum_{k=1}^{d} \binom{n}{k}}{\binom{f}{d}} + \sum_{k=1}^{d} (2^{d} - d) \label{eq: sum bounnd1,n}\\
    &\le \frac{\sum_{k=1}^{d} \binom{n}{k}}{\binom{f}{d}} + 2^d d  \label{eq: sum bounnd2,n}
\end{align}
where the transition from~\eqref{eq: sum bounnd1,n} to \eqref{eq: sum bounnd2,n} follows because \(d\le 2^{d}\) for \(d\ge 1\).

In conclusion, the maximum group size for a hypergraph with rank at most $d$, satisfies
\begin{equation}
    \max_{\sigma \in [N']} |\widetilde{\mathbf{\Phi}}_{\sigma}|=\max_{\sigma \in \binom{[f]}{d}} \left\{ |\cup_{k=1}^d \widetilde{\mathbf{\Phi}}_{\sigma}^{k}| \right\} \le \frac{\sum_{k=1}^{d} \binom{n}{k}}{\binom{f}{d}} + 2^{d} d 
\end{equation}
and similarly 
\begin{equation}
    \min_{\sigma \in [N']} |\widetilde{\mathbf{\Phi}}_{\sigma}|=\min_{\sigma \in \binom{[f]}{d}} \left\{ |\cup_{k=1}^d \widetilde{\mathbf{\Phi}}_{\sigma}^{k}| \right\} \ge \frac{\sum_{k=1}^{d} \binom{n}{k}}{\binom{f}{d}} - 2^{d} d  .
\end{equation}
These will be used later on. We now continue to extend the partition \(\widetilde{\mathbb{P}}=\{\widetilde{\mathbf{\Phi}}_{\sigma}: \sigma \in {[f] \choose d}\}\) from \(N'={f\choose d}\) to $N$ groups, for both the uniform and non-uniform cases.

\subsection{Extending the Partition of $\mathbf{A}_{n,d}$ and $\bigcup_{k=1}^{d}\mathbf{A}_{n,k}$, from \(N'\) Groups to \(N\) Groups}
\label{subsec: N' to N}
Starting with the uniform case, let us recall (cf.~\eqref{eq: full partition}) that we have already partitioned \(\mathbf{A}_{n,d}\) into \(N' = \binom{f}{d}\) disjoint\footnote{Note that in the uniform case, when context allows, we will often use the simpler notation $\widetilde{\mathbf{\Phi}}_{\sigma_i}$ to represent $ \widetilde{\mathbf{\Phi}}_{\sigma_i}^{d}$.} groups
\[ \widetilde{\mathbf{\Phi}}_{\sigma_1}^{d},\dots,\widetilde{\mathbf{\Phi}}_{\sigma_{N'}}^{d},
\ \  \sigma_1,\dots,\sigma_{N'}\in\binom{[f]}{d}.
\]
Similarly, for the general non-uniform case, let us recall~\eqref{eq: An,k}, where \(\mathbf{A}_{n,k}\) is partitioned into \(N' = \binom{f}{d}\) disjoint groups \[ \widetilde{\mathbf{\Phi}}_{\sigma_1}^{k},\dots,\widetilde{\mathbf{\Phi}}_{\sigma_{N'}}^{k},
\ \  \sigma_1,\dots,\sigma_{N'}\in\binom{[f]}{d}, \quad \text{for all}   \ k\in [1,d].
\] 
From now on, let us assume, without loss of generality, that the indices \(\sigma_1,\dots,\sigma_{N'} \) are in lexicographic order, and then proceed to simplify notation by writing \(\widetilde{\mathbf{\Phi}}^{k}_j=\widetilde{\mathbf{\Phi}}^{k}_{\sigma_j}\), for any \(j\in [N']\). This simply transitions to a scalar index for each group. For each \(k \in [1,d]\), let us declare \footnote{Note that in the uniform case, when context allows, we will often use the simpler notation  \(\widetilde{\mathbb{P}}\) to represent \( \widetilde{\mathbb{P}}^d\).}
\begin{equation}
 \widetilde{\mathbb{P}}^k\triangleq\{\widetilde{\mathbf{\Phi}}_{1}^{k}, \widetilde{\mathbf{\Phi}}_{2}^{k}, \ldots, \widetilde{\mathbf{\Phi}}_{N'}^{k}\}.   
\end{equation}
Then, for each \(k\), \textcolor{black}{we redistribute the edges of the above $N'$ groups of \(\widetilde{\mathbb{P}}^k\), \textcolor{black}{to now all $N$ groups, as described in the following steps.}} First, let us define the variables
\begin{equation}
\label{eq: q,p,r}
  q\triangleq\big\lfloor\frac{N}{N'}\big\rfloor,\qquad
p\triangleq\big\lceil\frac{N}{N'}\big\rceil,\qquad
r\triangleq N\bmod N'
\end{equation}
noting that \(
N=qN'+r,\) where \(p=q\ \text{if }r=0\), and  \(p=q+1\ \text{if }r>0.\) 
At this point, \textcolor{black}{we proceed with dividing up the edges of each of the first $N'$ groups into different smaller parts}, to be followed by the second step of redistributing some of these smaller parts to fill up the empty $N-N'$ groups.
\paragraph*{Step 1 -- Dividing up  (the edges of) each of the first $N'$ groups}   
For each \(b\in[N']\), we define 
\begin{equation}
s_b\triangleq\begin{cases}
p & \text{if } 1\le b\le r,\\[4pt]
q & \text{if } r<b\le N'
\end{cases}
\end{equation}
and we split each \(\widetilde{\mathbf{\Phi}}_b^{k}\) into \(s_b\) disjoint sub-parts using lexicographic ordering (we go through \(\widetilde{\mathbf{\Phi}}_b^{k}\) and we assign hyperedges, in a round-robin manner, to one of the \(s_b\) disjoint sub-parts, one after the other). This naturally yields sub-parts of equal sizes, plus or minus $1$. We keep track of the exact size of each sub-part. We denote these sub-parts by \(\widetilde{\mathbf{\Phi}}_b^{k,(0)},\widetilde{\mathbf{\Phi}}_b^{k,(1)},\dots,\widetilde{\mathbf{\Phi}}_b^{k,(s_b-1)},\) where \(\widetilde{\mathbf{\Phi}}_b^{k}=\bigcup_{b'=0}^{s_b-1}\widetilde{\mathbf{\Phi}}_b^{k,(b')}. \)
\paragraph*{Step 2 -- Shifting sub-parts to extend to \(N\) groups} 
We then relabel these sub-parts to obtain the desired \(N\) groups.  
We define the new \(N\) groups \(\widetilde{\mathbf{\Phi}}_1^{k},\dots,\widetilde{\mathbf{\Phi}}_N^{k}\), by the indexing rule
\begin{equation}
\label{eq: extenstion}
  \widetilde{\mathbf{\Phi}}_{\,b + b'N'}^{k} \;\triangleq\;\widetilde{\mathbf{\Phi}}_b^{k,(b')},
\ \text{for } b\in[N'],\; b'\in\{0,\dots,s_b-1\}.  
\end{equation}
This now gives us the partition \(\widetilde{\mathbb{P}}^{k}\) for all \(k \in [1,d],\) which takes the form
\begin{equation}
\label{eq: p_tilde}
    \widetilde{\mathbb{P}}^{k}=\{\widetilde{\mathbf{\Phi}}_1^{k},\dots,\widetilde{\mathbf{\Phi}}_N^{k}\}.
\end{equation}
As a result, we get \(\widetilde{\mathbb{P}}\) which has now been extended from \(N'\) designed groups to \(N\) desired groups \(\{\widetilde{\mathbf{\Phi}}_{1}, \widetilde{\mathbf{\Phi}}_{2}, \ldots, \widetilde{\mathbf{\Phi}}_{N}\}\), where
\begin{equation}
\label{eq: phi_b:1tod}
\widetilde{\mathbf{\Phi}}_{b}=\bigcup_{k=1}^{d}\widetilde{\mathbf{\Phi}}_{b}^{k}, \quad b\in [N].  
\end{equation}

To evaluate the uniformity of the group sizes across the partition \(\widetilde{\mathbb{P}}\) of \(\bigcup_{k=1}^{d} \mathbf{A}_{n,k}\), we examine the balance parameter $\delta$, which takes the form 
\begin{equation}
    \delta = \frac{\max_{b \in [N]} |\widetilde{\mathbf{\Phi}}_{b}|}{\lceil\sum_{k=1}^{d} \binom{n}{k} / N\rceil}
\end{equation}
in the general case, and \begin{equation}
    \delta = \frac{\max_{b \in [N]} |\widetilde{\mathbf{\Phi}}_{b}|}{\lceil \binom{n}{d} / N\rceil}
\end{equation}
in the specific case of uniform hypergraphs. This, together with $\pi$, is captured in the following lemma\footnote{Note that we are still at the point where we are partitioning the entire universe of hyperedges. We have not reached $\mathbf{X}$ yet.}.
\begin{lemma}
\label{lem: delta, power set}
\textcolor{black}{When partitioning \(\bigcup_{k=1}^{d} \mathbf{A}_{n,k}\) or \( \mathbf{A}_{n,d}\), for any $n,d\le \frac{n}{2}, N \le {\lfloor \sqrt{\frac{nd}{2}}\rfloor \choose d}$, and for $f=\max\left\{r\in \mathbb Z:\binom{r}{d}\leq N\right\}$, the IC design guarantees that $\pi\leq s\cdot d$ when $f\mid n$, where $s=n/f$. Otherwise, $\pi\leq n-(f-d)s\leq 2s\cdot d$, where $s=\left\lfloor \frac{n}{f+d}\right\rfloor+1$.} 
In the uniform case with the same conditions or in the non-uniform case with the additional condition of {\(d\le \frac{n}{5}\),} the design also gives 
\(
\delta\le 3.
\) 
\end{lemma}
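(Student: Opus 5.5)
The plan is to prove the two claims of Lemma~\ref{lem: delta, power set} separately: first the footprint bound, which is structural, then the balance bound, which is arithmetic and rests on Lemma~\ref{lem: phin,k, cardinality}.

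\textbf{Footprint.} The argument uses the fact that Step~1 and Step~2 of Section~\ref{subsec: N' to N} only split and relabel groups. Hence every final group $\widetilde{\mathbf{\Phi}}_b$ is contained in a single intermediate group $\widetilde{\mathbf{\Phi}}_\sigma$, so it suffices to bound $|\alpha(\widetilde{\mathbf{\Phi}}_\sigma)|$.
\begin{itemize}
\item By construction, every edge $\mathbf e$ placed in $\widetilde{\mathbf{\Phi}}_\sigma$, of any size $k\le d$, is eligible for $\sigma$. That is, $\mathcal B(\mathbf e)\subseteq\sigma$: full-support edges have $\mathcal B(\mathbf e)=\sigma$, and the others are routed only to groups in $\mathcal G_{\beta,\mathcal I}$ with $\mathcal I=\mathcal B(\mathbf e)\subseteq\sigma$.
\item Consequently the non-excluded vertices of $\mathbf e$ lie in $\bigcup_{i\in\sigma}\mathcal F_i$, which has $sd$ vertices, and its remaining vertices lie in $\mathcal E$, which has $g$ vertices. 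This gives $\pi\le sd+g$.
\item If $f\mid n$, then $g=0$ and $\pi\le sd$ with $s=n/f$.
\item Otherwise $sd+g=sd+n-fs=n-(f-d)s$. Since $s=\lfloor n/(f+d)\rfloor+1>n/(f+d)$, we have $n<(f+d)s$, hence $n-(f-d)s<2ds$.
\end{itemize}
One must also check that the construction is well defined, i.e.\ $g\ge0$, or $fs\le n$. Because $s\le n/(f+d)+1$, it suffices that $n/f-n/(f+d)\ge1$, i.e.\ $f(f+d)\le nd$. This follows from the hypothesis $N\le\binom{\lfloor\sqrt{nd/2}\rfloor}{d}$: it gives $f\le\sqrt{nd/2}$, while $\binom{f}{d}\ge1$ forces $f\ge d$. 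Therefore $f(f+d)\le 2f^2\le nd$.

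\textbf{Balance.} Let $q,p,r$ be as in \eqref{eq: q,p,r}, and write $|\mathbf A|$ for $\binom{n}{d}$ or $\sum_k\binom nk$ as appropriate.
\begin{itemize}
\item Each intermediate group is split into at least $q$ near-equal parts for every $k$. By Lemma~\ref{lem: phin,k, cardinality} and the summed bound \eqref{eq: sum bounnd2,n}, every final group then satisfies
\[
|\widetilde{\mathbf{\Phi}}_b|\le \frac{1}{q}\Big(\frac{|\mathbf A|}{N'}+2^d d\Big)+d .
\]
The additive $d$ accounts for one rounding unit per layer $k$; in the uniform case this is $+1$.
\item Since $N<(q+1)N'\le 2qN'$, we get $|\mathbf A|/(qN')\le 2|\mathbf A|/N$.
\item Therefore
\[
\delta\le 2+\frac{2^d d+d}{\lceil |\mathbf A|/N\rceil}.
\]
\end{itemize}
It remains to show that the last term is at most $1$, i.e.\ $|\mathbf A|/N\ge 2^d d+d$. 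Using $N\le\binom{m}{d}$ with $m=\lfloor\sqrt{nd/2}\rfloor$, the plan is to lower bound
\[
\frac{\binom nd}{\binom md}\ \ge\ \Big(\frac nm\Big)^d\ \ge\ \Big(\frac{2n}{d}\Big)^{d/2}.
\]
Then $d\le n/2$ gives at least $4^{d/2}=2^d$, and $d\le n/5$ gives at least $10^{d/2}$. The latter dominates $2^d(d+1)$ for all $d$ after a short check of small $d$.

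\textbf{Expected obstacle.} This final numerical comparison is where I expect the main difficulty. In the uniform case with only $d\le n/2$, the bound $(2n/d)^{d/2}$ alone is too weak. There I would instead use the sharper uniform estimate from Lemma~\ref{lem: phin,k, cardinality}, namely an additive error of $2^d-d$ rather than $2^d d$, together with $\binom nd/\binom md\ge (n/m)^d$ evaluated more carefully, to absorb the additive term. In the non-uniform case, the extra hypothesis $d\le n/5$ is exactly what supplies the needed slack, giving $\delta\le3$.
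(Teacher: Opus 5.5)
Your proposal is correct and is essentially the paper's argument. For the footprint you use $\pi\le sd+g=n-(f-d)s<2sd$, obtained from eligibility and $n<(f+d)s$; for the balance you use Lemma~\ref{lem: phin,k, cardinality}, $N<2qN'$, and $\binom{n}{d}/\binom{m}{d}\ge(2n/d)^{d/2}$. Your well-definedness check $f(f+d)\le 2f^2\le nd$, via $d\le f\le\sqrt{nd/2}$, is a tidier replacement for the paper's quadratic-root appendix.

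Two small fixes are needed:
\begin{enumerate}
\item In the non-uniform case the quantity to dominate is $d(2^d+1)$, not $2^d(d+1)$. The latter actually exceeds $10^{d/2}$ for $d\le 3$, so the ``short check'' as you state it fails. The correct inequality $10^{d/2}\ge d(2^d+1)$ holds for every $d\ge 1$, with equality at $d=2$.
\item The uniform-case ``obstacle'' disappears once you use the per-layer error $2^d-d$ plus one rounding unit. You then only need $\binom{n}{d}/N\ge 2^d-d+1$, which $(2n/d)^{d/2}\ge 2^d$ already gives under $d\le n/2$, so no sharper evaluation is required. This matches the paper's $\delta\le 2+2^dN/\binom{n}{d}\le 2+(2d/n)^{d/2}\le 3$.
\end{enumerate}
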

\begin{proof}
Regarding $\pi$, the proof results from the following facts. First, we saw from~\eqref{eq:alphaPhiSigma} that $|\alpha(\widetilde{\mathbf{\Phi}}_\sigma)|=s\cdot d$. When $f\mid n$, the design chooses $s=\frac{n}{f}$, in which case $g=0$. Therefore, by~\eqref{eq:alphaPhiSigma}, we directly obtain $\pi=s d$. On the other hand, when the design chooses $s=\left\lfloor \frac{n}{f+d}\right\rfloor+1$, Appendix~\ref{appendix: bound N}, under the assumption that $N \le {\lfloor \sqrt{\frac{nd}{2}}\rfloor \choose d}$ and $d \le \frac{n}{2}$, yields
\begin{equation}
    \label{eq:pi2sd}
    \pi\leq sd+g = n-(f-d)s\leq 2sd
\end{equation} 
which is further readily bounded, in the uniform case, as 
\(
\pi\le 2sd \le \frac{4e n}{N^{\frac{1}{d}}},
\) {in the proof}~\footnote{Our construction here has been slightly extended, compared to that in \cite{IC}, to now cover all cases of $d\leq n/2$ (rather than $d\leq n/32$ from \cite{IC}). It is a trivial exercise to see that the above bound $\pi\leq sd+g = n-(f-d)s\leq 2sd$ carries over directly to our currently extended range of $d$ (to have a closer look at this, please see Equation~\eqref{eq:gv1} in Appendix~\ref{appendix: bound N}). 
} of \cite[Lemma 6]{IC}. 
Finally, all the above hold for both the uniform and non-uniform cases, simply because the vertex allocation remains the same in both cases. 

The bound on \(\delta\) is handled in Appendix~\ref{appendix: proof of delta, power set} for both the uniform and non-uniform cases.
\end{proof}

\subsection{Partition Refinement for $\mathbf{X}$} \label{sec:PartitionX1}

Let us recall that, up until now, up to Section \ref{subsec: N' to N}, we have only provided a partition \(\widetilde{\mathbb{P}} = \{ \widetilde{\mathbf{\Phi}}_1, \widetilde{\mathbf{\Phi}}_2, \ldots, \widetilde{\mathbf{\Phi}}_N \}\) 
of  \(\bigcup_{k=1}^{d} \mathbf{A}_{n,k}\) (or of \(\mathbf{A}_{n,d}\) in the $d$-uniform case). Now we must account for $\mathbf{X}$. The vertex allocation is automatic; we simply maintain the exact vertex allocation given by the described partition of the complete hypergraph, and thus we maintain the same $\pi$ guarantee that we have seen in Lemma \ref{lem: delta, power set}. 

For the edge allocation, we now convert \(\widetilde{\mathbb{P}}\) into a partition 
\begin{align}\label{eq:RefinedPartitionX1}  \mathbf{X} = \mathbf{\Phi}_{1} \cup \cdots \cup \mathbf{\Phi}_{N}\end{align}
of the specific edge set \(\mathbf{X} \subseteq \bigcup_{k=1}^{d} \mathbf{A}_{n,k}\) (or \(\mathbf{X} \subseteq \mathbf{A}_{n,d}\) in the uniform case), simply by assigning to each group $b\in [N]$, the entries of the group
\begin{equation}
    \label{eq: sampled group size}   \mathbf{\Phi}_{b}=\widetilde{\mathbf{\Phi}}_b \cap \mathbf{X}.
\end{equation}
We must now account for the specific $\delta_{\mathbf{X}}$. To do so, we define 
\begin{equation}
\label{eq: random, mu_max}
\mu_b\triangleq| \mathbf{\Phi}_{b}|, \ b\in[N], \ \ \mu_{\min} \triangleq \min_{b\in[N]}\mu_b, \quad \mu_{\max} \triangleq \max_{b\in[N]}\mu_b
\end{equation}
where we see that \(\mu_b=|\mathbf{\Phi}_b|=|\bigcup_{k=1}^{d}\mathbf{\Phi}_b^{k}|\) (and in the uniform case, \(\mu_b=|\mathbf{\Phi}_b|=|\mathbf{\Phi}_b^{d}|\)).
In both cases, the balance factor is  
\begin{equation}
\label{eq: random, delta_x,n}
 \delta_{\mathbf{X}}  =  \frac{\max_{b} \mu_b}{\lceil|\mathbf{X}|/N\rceil}.
\end{equation}


While the refinement step in~\eqref{eq:RefinedPartitionX1},~\eqref{eq: sampled group size} is simple enough, it runs the risk of yielding a very large $\delta_\mathbf{X}$, depending on $\mathbf{X}$. Bounding $\delta_{\mathbf{X}}$ is handled in 
Proposition~\ref{prop: conc on group size}, which shows that, under an independent thinning model with identical edge inclusion probability $\varphi\geq \varphi_{\min,\delta}$, the IC design guarantees, with probability at least $1-\tfrac{1}{n^z}$, a bounded load balance factor $\delta_{\mathbf X}\leq 5$ for both $d$-uniform and non-uniform hypergraphs. 

At this point, we can also conclude that a notable feature of the proposed IC design is its extremely low computational complexity. The vertex placement is determined once and for all as an explicit deterministic function of $(n,d,N)$ and remains completely independent of the realized hypergraph. Thereafter, each hyperedge is assigned directly to a partition through a fixed deterministic mapping, requiring only the evaluation of a simple combinatorial rule. Consequently, the overall complexity is $O(|\mathbf{X}|)$, which is optimal up to constant factors since every hyperedge must be inspected at least once. 
\section{Conclusions} \label{sec:Conclusions}
Our results yield the following conclusions which hold for the problem of minimizing the MVF in hypergraph \emph{edge} partitioning, under our general  model where each hyperedge $\mathbf{e} \in \mathbf{A}$ may appear independently with its own probability $\varphi_{\mathbf{e}}$, forming sufficiently populated hypergraphs with \(
|\mathbf X| \gtrsim nN\log N.
\) A central conclusion of this work is that, over a broad class of probabilistic hypergraph models, the fundamental limits of hypergraph edge partitioning can be characterized up to explicit and small constant factors;  
Over a broad range of probabilistic hypergraph models, the optimal footprint is -- with a very high probability $1-\frac{1}{n^z}$ -- within a small constant factor of $$\pi^{\bigstar}_{\mathbf{X}} = \frac{1}{2\sqrt{2}}\frac{n}{N^{1/d}}. $$ 

This is surprising, as it says that as long as \(
|\mathbf X| \gtrsim nN\log N.
\), the detailed structure of the hypergraph may not play a dominant role in determining the optimal MVF. While identifying structural features such as communities, clusters, dense regions, local connectivity patterns, and favorable matching opportunities, is still of paramount importance,  our results suggest that, once the hypergraph is sufficiently populated, the optimal footprint is governed to a sizable extent by the combinatorial parameters \((n,d,N)\), with a comparatively limited dependence on the finer structural details of the hypergraph. 

This also entails a strong form of universality. Classical hypergraph partitioning methods are often calibrated to the aforementioned specific structural properties of the input hypergraph. This calibration is most certainly important in practice. What we now know though is that a single, fixed, deterministic construction achieves a near-optimal performance uniformly across a broad range of models. Thus, both the fundamental limit and the achievable scheme exhibit a remarkable degree of universality, again suggesting that -- in our setting, and under our assumptions -- for footprint minimization, the detailed structure of the hypergraph may not carry abundant information, beyond what is already encoded in the underlying combinatorial parameters. 

Our construction -- which achieves a footprint within a constant factor of the optimum, over a broad class of probabilistic models -- follows a simple philosophy: the vertex placement is fixed a priori and depends only on $(n,d,N)$, remaining completely independent of the realized hypergraph. This is perhaps the most surprising outcome of this work; That for the footprint minimization objective considered here, near-optimal performance can be achieved by a placement that depends only on \((n,d,N)\) and is completely independent of the realized hypergraph.

\begin{appendices}
\section{Main Proofs} \label{appendix:MainProofs}
We provide below the various remaining proofs. 

\subsection{Proof of Lemma \ref{lem:prob_pi} - Uniform Case}
\label{appendix: proof of prob_pi}
For  $n, d, N\le \binom{\lfloor\sqrt{\frac{nd}{2}}\rfloor}{d}$, 
consider a randomly drawn \(\mathbf{X}\), 
where each edge \(\mathbf{e} \in \Adn \) is included independently in \(\mathbf{X}\) with its own probability \(\varphi_{\mathbf{e}}\) satisfying condition \eqref{eq:phi1}.  
 First, we consider the general case 
 corresponding to $\varphi_\mathbf{e} \geq \varphi_{\mathrm{min}}^{u}$, where for now,  the threshold probability $\varphi_{\mathrm{min}}^{u}$ is left open, to be derived later on to match the value in~\eqref{eq:phi1}. 
Let \begin{equation}
\label{eq:lem2, T}
    t \triangleq  \frac{1}{2}\frac{n}{N^{1/d}} .
\end{equation}
To establish our bound, consider an arbitrary $N$-edge partitioning scheme $\boldsymbol{\mathcal{S}}_{\mathbf{X}}$ that partitions \(\mathbf{X}\) into $N$ groups \(\mathbf{\Phi}_1\), \(\mathbf{\Phi}_2\), \(\ldots\), \(\mathbf{\Phi}_N\), and let us assume that 
\begin{equation}
|\alpha(\mathbf{\Phi}_b)| < t, \quad \text{for all } b \in [N].
\end{equation}
We will show that such partition can exist with probability at most \(\frac{1}{n^z}\). Let us show this. 

Under our random model, the expected size of \(\mathbf{X}\)  is { \(\sum_{\mathbf{e}\in \Adn}\varphi_{\mathbf{e}} \ge \varphi_{\mathrm{min}}^{u} {n \choose d} \).} 
Consider any scheme $\boldsymbol{\mathcal{S}}_{\mathbf{X}}$ that partitions \(\mathbf{X}\) into $N$ groups \(\mathbf{\Phi}_1\), \(\mathbf{\Phi}_2\), \(\ldots\), \(\mathbf{\Phi}_N\). Naturally, 
at least one group must satisfy $|\mathbf{\Phi}_b|\ge \frac{|\mathbf{X}|}{N}.$ \textcolor{black}{Let us denote \(\mu\triangleq\mathbb{E}[|\mathbf{X}|]\).}
From the Chernoff bound \cite[Theorem~4.5]{mitzenmacher2017probability}, we have 
\begin{equation}
\label{eq: chernoff, card of x}
    \Prob\left( |\mathbf X|\le (1-\varepsilon)\mu\right)\le \exp(-\frac{\mu\varepsilon^2}{2})
\end{equation}
which, for \(\varepsilon=\frac{1}{2}\), yields
\begin{align}
\Prob\left(|\mathbf X|\le \frac{\mu}{2}\right)\le\exp\left(-\frac{1}{2}(\frac{1}{2})^2\mu\right)=\exp\left(-\frac{\mu}{8}\right).  
\end{align}

Since \(\mu=\mathbb{E}[|\mathbf{X}|]\ge \varphi_{\mathrm{min}}^{u} {n \choose d}\), it follows that
\begin{equation}
\label{eq: lem, Eth1}
  \Prob\left(|\mathbf X|\le \frac{\varphi_{\mathrm{min}}^{u} {n \choose d}}{2}\right)\le \Prob\left(|\mathbf X|\le \frac{\mu}{2}\right)\le\exp\left(-\frac{\mu}{8}\right)\le\exp\left(-\frac{\varphi_{\mathrm{min}}^{u} {n \choose d}}{8}\right)
\end{equation}
which, after equating the above with $\frac{1}{2n^z}$ and solving for $\varphi_{\mathrm{min}}^{u}$, yields \begin{equation}\label{eq:phimin-dummy1}
\varphi_{\mathrm{min}}^{u}\ge \frac{8\log(2n^z)}{{n \choose d}}. \end{equation}
Now, for \(\varphi_{\mathrm{min}}^{u}\ge \frac{8\log(2n^z)}{{n \choose d}}\), the RHS of \eqref{eq: lem, Eth1} is less than or equal to \(\frac{1}{2n^z}\), and thus \begin{equation}
\label{eq: x in u}
 \Prob\left(|\mathbf{X}|\ge \frac{1}{2}\mathbb{E}\{|\mathbf{X}|\}\ge \frac{1}{2}\varphi_{\mathrm{min}}^{u}\right)   \geq 1-\frac{1}{2n^z}.   
\end{equation}
We will henceforth thus assume that $|\mathbf{X}|\ge \frac{1}{2}\mathbb{E}\{|\mathbf{X}|\} = \frac{\mu}{2} \geq  \frac{1}{2}\varphi_{\mathrm{min}}^{u} {n \choose d}$. \textcolor{black}{Thus, at least one group must satisfy $|\mathbf{\Phi}_b|\ge \frac{|\mathbf{X}|}{N}\ge \frac{ \varphi_{\mathrm{min}}^{u}  \binom{n}{d}}{2N}$.}
{
Let us denote}
\begin{equation}
\label{eq:lem2, Eth}
    E_{\mathrm{th}} \triangleq \frac{ \varphi_{\mathrm{min}}^{u}  \binom{n}{d}}{2N}.
\end{equation}

In the following we will show that, with sufficiently high probability, none of the groups \(\mathbf{\Phi}_1\), \(\mathbf{\Phi}_2\), \(\ldots\), \(\mathbf{\Phi}_N\) can have {greater than or equal to} $E_{\mathrm{th}}$ edges while having a vertex footprint less than $t$. 

Let $\mathcal{S}_b = \alpha(\mathbf{\Phi}_b)$ be the footprint of group $b$, and let us suppose that $|\mathcal{S}_b| < t$. Let $\mathcal{S} \subseteq [n]$ be such that $\mathcal{S}_b \subseteq \mathcal{S} $, and let $|\mathcal{S}| = t$. 
Now consider 
\begin{equation} \label{eq:XS1}
    \mathbf{X}[\mathcal{S}] \triangleq \binom{\mathcal{S}}{d} \cap \mathbf{X}
\end{equation}
and note that 
$\mathbf{\Phi}_b \subseteq \mathbf{X}[\mathcal{S}]$, and thus that 
$|\mathbf{X}[\mathcal{S}]| \ge |\mathbf{\Phi}_b|$. Therefore, at this point, it suffices to (upper) bound the probability that there exists a subset $\mathcal{S} \subseteq [n]$, with \textcolor{black}{$|\mathcal{S}| = t$}, such that $|\mathbf{X}[\mathcal{S}]| \ge E_{\mathrm{th}}$. For any such $\mathcal{S} \subseteq [n]$ with $|\mathcal{S}| = t$, let us define $Y_\mathcal{S} \triangleq |\mathbf{X}[\mathcal{S}]|$.
For purposes of our bound, it suffices to consider the worst case, where  each potential edge within $\mathbf{X}[\mathcal{S}]$ (this entails the \({t\choose d}\) such potential edges from $\binom{\mathcal{S}}{d}$) is included independently with probability $\varphi_{\mathrm{min}}^{u}$, which translates to having \(Y_{\mathcal S} \) be a sum of \({t \choose d}\) independent Bernoulli random variables, which in turn translates to  \(Y_\mathcal{S} \sim \mathrm{Binomial}\left(\binom{t}{d}, \varphi_{\mathrm{min}}^{u}\right).\) 
Now let \begin{equation}
\mu_\mathcal{S} \triangleq \Ex[Y_\mathcal{S}] = \varphi_{\mathrm{min}}^{u} \binom{t}{d}.
\end{equation}
Next, let us apply the multiplicative Chernoff bound
(see \cite[Corollary~A.1.14]{AlonSpencer2008} and 
\cite[Theorem~4.5]{mitzenmacher2017probability}),
for every $\varepsilon>0$ and for \(C_{\varepsilon} = \min\left\{ (1+\varepsilon)\ln(1+\varepsilon) - \varepsilon,\; \frac{\varepsilon^2}{2} \right\}\), to get
\begin{equation}
\label{eq: lem1, chernof}
  \Prob\!\left(Y_{\mathcal S} \ge (1+\varepsilon)\mu_{\mathcal S}\right)\le e^{-C_{\varepsilon}\cdot \mu_{\mathcal S}}.
\end{equation}
Now, let us choose \(\varepsilon\) such that 
\begin{equation}
\label{eq: lem1,Eth,mu_s}
  E_{\mathrm{th}} = (1+\varepsilon)\mu_\mathcal{S}
 \end{equation} 
which happens when \(\varepsilon = \frac{E_{\mathrm{th}}}{\mu_\mathcal{S}} - 1\), which means that 
\begin{align}
\label{eq: lem1, us,Eth1}
\varepsilon+1=&\frac{E_{\mathrm{th}}}{\mu_\mathcal{S}}=\frac{E_{\mathrm{th}}}{\varphi_{\mathrm{min}}^{u} \binom{t}{d}} = \frac{\frac{ \varphi_{\mathrm{min}}^{u}  \binom{n}{d}}{2N}}{\varphi_{\mathrm{min}}^{u} \binom{t}{d}}
= \frac{1}{2N}\frac{n(n-1) \cdots (n-d+1)}{t(t-1) \cdots (t-d+1) }\ge \frac{1}{2N}\left(\frac{n}{t}\right)^d \\
\label{eq: lem1, us,Eth2}
\ge &\frac{1}{2N} \left(\frac{n}{n/(   2N^{1/d})}\right)^d\notag = 2^{d-1}.
\end{align}
This now means that $\varepsilon \geq 1$ for $d\ge 2$, and for such $\varepsilon \geq 1$, \textcolor{black}{a simple verification can show that} 
\begin{equation}
\label{eq: lem1, chernoof1}
e^{-C_{\varepsilon}}\le 1.2^{-(1+\varepsilon)}
\end{equation}
and thus, after substituting \eqref{eq: lem1,Eth,mu_s} and \eqref{eq: lem1, chernoof1} into the Chernoff bound in~\eqref{eq: lem1, chernof}, we can get \begin{equation}
\label{eq: lem1, prob}
  \Prob\!\left(Y_{\mathcal S} \ge E_{\mathrm{th}}\right)
\le1.2^{-(1+\varepsilon)\mu_{\mathcal S}}=1.2^{-E_{\mathrm{th}}}=e^{-\ln(1.2)E_{\mathrm{th}}}\le e^{-E_{\mathrm{th}}/6}
\end{equation}
which applies to any subset \(\mathcal{S}\subset [n]\) with \(|\mathcal S|=t\).  
We now apply the union bound over all such sets $\mathcal{S}$, to get
\begin{align} \label{eq: lem1, union prob}
\Prob\left(\exists \mathcal{S} : |\mathcal{S}| = t, Y_\mathcal{S} \ge E_{\mathrm{th}}\right) \notag  \le \binom{n}{t} \Prob(Y_\mathcal{S} \ge E_{\mathrm{th}}) 
\le \left(\frac{en}{t}\right)^t  e^{-E_{\mathrm{th}}/6}.
\end{align}
Now we can see that for the above to be less than or equal to $\frac{1}{2n^{z}}$ for some $z\geq 1$, we need
\begin{equation}
\label{eq: lem1, union prob,1}
    \left(\frac{en}{t}\right)^t  e^{-E_{\mathrm{th}}/6}\le \frac{1}{2n^z} 
\end{equation}
and after taking the \(\log\) of both sides of the inequality, we get
\begin{equation}
t \log\left(\frac{en}{t}\right) - \frac{E_{\mathrm{th}}}{6} \le -\log(2n^z).
\end{equation}
Substituting $t$ and $E_{\mathrm{th}}$, we obtain the threshold condition on \(\varphi_{\mathrm{min}}^{u}\), in the form
\begin{align}
\frac{n}{2N^{1/d}}  \log\left(\frac{en}{\frac{n}{2N^{1/d}}}\right) - \textcolor{black}{\varphi_{\mathrm{min}}^{u} \frac{{n \choose d}}{12 N}} \le -\log (2n^z) \\
 \varphi_{\mathrm{min}}^{u}  {n \choose d}\ge \frac{6 \ n N  \log\left(2e  \ N^{1/d}\right)}{N^{\frac{1}{d}}}+12N  \log (2n^z)\\ \implies
\textcolor{black}{ \varphi_{\mathrm{min}}^{u}  {n \choose d}\ge \frac{6\ n N ^{1-\frac{1}{d}} \log\left((2e)^{d} \ N\right)}{{ d}}+12 N  \log (2n^z)\label{eq: non-approx}.}
\end{align}
For \(N\ge 2\) and \(d\ge 2\), we can readily see in Lemma \ref{lem:K2} that \(\frac{6 n\ N ^{1-\frac{1}{d}} \log\left((2e)^{d} \ N\right)}{{ d}}\le  15 n N  \log N\), which means that it is sufficient to have 
\begin{align} \label{eq:phiUnif123}
 \varphi_{\mathrm{min}}^{u}  {n \choose d}\ge 15 n N  \log N+12 N  \log(2n^z)\\
  \varphi_{\mathrm{min}}^{u} \ge \frac{15 n  N  (\log N + \frac{12}{15}\frac{\log (2n^z)}{n})}{{{n \choose d}}}.
\end{align}
At this point, we can quickly compare this expression with the bound in~\eqref{eq:phimin-dummy1} to see that 
$\frac{15 n  N  (\log N+\frac{4}{5}\frac{\log (2n^z)}{n})}{{{n \choose d}}}\ge \frac{8\log(2n^z)}{{n \choose d}}$ which allows us to ignore~\eqref{eq:phimin-dummy1}, in the sense that the condition in~\eqref{eq:phimin-dummy1} will be automatically guaranteed once the condition in~\eqref{eq:phiUnif123} is applied. Thus, we can conclude that 
when \textcolor{black}{$\varphi_{\mathbf{e}} \ge \frac{15 n  N  (\log N+\frac{4}{5}\frac{\log (2n^z)}{n})}{{{n \choose d}}}$}, then with probability at least $1 - \frac{1}{n^z}$, $\mathbf{X}$ will be such that no partitioning scheme can give $\pi_{\mathbf{X}}<t=\frac{ n}{2 N^{1/d}}$, since
\begin{align}
\label{eq: lem1,eq1}
\Prob\left[ \{  \pi_{\mathbf{X}}^{\star}<t\}\cap\{|\mathbf{X}|\ge N E_{\mathrm{th}}\}\right]\le \Prob\left[\exists \mathcal{S} : |\mathcal{S}| = t, Y_\mathcal{S} \ge E_{\mathrm{th}}\right]\\
  \label{eq: lem1,eq2}
   \implies\Prob\left[ \{  \pi_{\mathbf{X}}^{\star}<t\}\right]\le \Prob\left[\exists \mathcal{S} : |\mathcal{S}| = t, Y_\mathcal{S} \ge E_{\mathrm{th}}\right]+\Prob\left[\{|\mathbf{X}|< NE_{\mathrm{th}}\}\right]\\
  \implies \Prob\left[ \{  \pi_{\mathbf{X}}^{\star}<t\}\right]\le \frac{1}{n^z}\label{eq:final bound}
\end{align}
\noindent where the step from~\eqref{eq: lem1,eq1} to \eqref{eq: lem1,eq2} follows from the property that \(\mathcal{A}\cap\mathcal{B}\subset \mathcal{C}\) implies \(\mathcal{A}\subset \mathcal{C} \cup\bar{\mathcal{B}}\), while the step from~\eqref{eq: lem1,eq2} to \eqref{eq:final bound} follows after using \eqref{eq: lem, Eth1} and \eqref{eq: lem1, union prob,1}. 
This completes the proof of Lemma~\ref{lem:prob_pi} for the uniform case.\hfill $\square$


\subsection{Proof of Lemma \ref{lem:prob_pi}  - Non-uniform Case}
\label{appendix: proof of prob_pi_non}

Given \(n\), \(d\), and \(N\), we will show, as before, that under the stochastic model with \(\varphi_{\mathbf{e}}\ge \varphi_{\mathrm{min}}^{n}\), 
then with probability at least \(1-\frac{1}{n^z}\), the hypergraph \(\mathbf{X}\) satisfies \(\pi_{\mathbf{X}}^{\star}\ge t \), where now
\begin{equation}\label{eq:lem2, Tnon}
t \triangleq \frac{n}{2\sqrt{2}N^{1/d}}.
\end{equation}
For now, the value of $\varphi_{\mathrm{min}}^{n}$ remains open, and it will be derived later on to match the expression in~\eqref{eq:phi2}.

The proof is similar to that of the case of uniform hypergraphs, in the previous lemma. As before, we proceed by contradiction, by first assuming that there exists a partition \(\mathbf{\Phi}_1\), \(\mathbf{\Phi}_2\), \(\ldots\), \(\mathbf{\Phi}_N\) of \(\mathbf{X}\) with 
\begin{equation}
|\alpha(\mathbf{\Phi}_b)| < t, \quad \text{for all } b \in [N].
\end{equation}
We again consider $\mathcal{S}_b = \alpha(\mathbf{\Phi}_b)$,  $|\mathcal{S}_b| < t$, and a vertex set $\mathcal{S} \supset \mathcal{S}_b$, with $|\mathcal{S}| = t$. 
Then, as in~\eqref{eq:XS1}, we will again consider a (now modified) set \[
    \mathbf{X}[\mathcal{S}] =\left(\bigcup_{k=1}^d \binom{\mathcal{S}}{k}\right) \cap \mathbf{X}\] \noindent 
    and as before, we will focus on bounding the probability that there exists a subset $\mathcal{S} \subseteq [n]$, with $|\mathcal{S}| = t$, such that $Y_\mathcal{S} \triangleq|\mathbf{X}[\mathcal{S}]| \ge E_{\mathrm{th}}$, where now we slightly modify $E_{\mathrm{th}}$ to be 
 \begin{equation}
\label{eq:lem2, Eth}
    E_{\mathrm{th}} \triangleq \frac{ \varphi_{\mathrm{min}}^{n}  \sum_{k=1}^{d}\binom{n}{k}}{2N}.
\end{equation}   
Then we will proceed with a (now slightly modified) distribution 
\(
Y_\mathcal{S} \sim \mathrm{Binomial}\left(\sum_{k=1}^{d}{t\choose k}, \varphi_{\mathrm{min}}^{n}\right), 
\)
whose mean is
\begin{equation}
\mu_\mathcal{S} \triangleq \Ex[Y_\mathcal{S}] = \varphi_{\mathrm{min}}^{n} \sum_{k=1}^{d}{t\choose k}.
\end{equation}
We will again apply the Chernoff bound, which, for every $\varepsilon>0$ and for \\ \(C_{\varepsilon} = \min\left\{ (1+\varepsilon)\ln(1+\varepsilon) - \varepsilon,\; \frac{\varepsilon^2}{2} \right\}\), will again give
\begin{equation}
\label{eq: lem2, chernof}
  \Prob\!\left(Y_{\mathcal S} \ge (1+\varepsilon)\mu_{\mathcal S}\right)\le e^{-C_{\varepsilon}  \mu_{\mathcal S}}
\end{equation}
and again we will choose \(\varepsilon = \frac{E_{\mathrm{th}}}{\mu_{\mathcal S}} - 1\). At this point, there are a few more minor differences, where in particular, the bound in \eqref{eq: lem1, us,Eth1} changes to
\begin{align}
\label{eq: lem2, us,Eth1}
\varepsilon+1=&\frac{E_{\mathrm{th}}}{\mu_\mathcal{S}} =\frac{1}{2N}\frac{\varphi_{\mathrm{min}}^{n}}{\varphi_{\mathrm{min}}^{n}}\frac{\sum_{k=1}^{d}{n\choose k}}{\sum_{k=1}^{d}{t\choose k}} \ge\frac{1}{2N}\frac{\sum_{k=1}^{d}{n \choose k}}{d {t \choose d} }\ge\frac{1}{2N}\frac{{n \choose d}}{d {t \choose d} }\\
\label{eq: lem2, us,Eth2}
\ge &\frac{1}{2Nd}\left(\frac{n}{t}\right)^d=\frac{1}{2N}\frac{(2 \sqrt{2})^{d} N}{d}\ge \frac{(8)^{\frac{d}{2}}}{2d}\ge 2
\end{align}
which again tells us that \(\varepsilon \ge 1\), in which case we know that \(
e^{-C_{\varepsilon}}\le 1.2^{-(1+\varepsilon)}\) which goes into the Chernoff bound (cf.~\eqref{eq: lem2, chernof}) to again give 
\begin{equation}
\label{eq: lem2, prob}
  \Prob\!\left(Y_{\mathcal S} \ge E_{\mathrm{th}}\right)
\le e^{-(\log1.2)(1+\varepsilon)\mu_{\mathcal S}}\le e^{-E_{\mathrm{th}}/6}.
\end{equation}
This in turn says that 
\begin{align}
\Prob\left(\exists \mathcal{S} : |\mathcal{S}| = t, Y_\mathcal{S} \ge E_{\mathrm{th}}\right) 
 \le \binom{n}{t} \Prob(Y_\mathcal{S} \ge E_{\mathrm{th}}) \label{eq: th3, union prob}
 \le \left(\frac{en}{t}\right)^t e^{-E_{\mathrm{th}}/6}
\end{align}
for which we wish to satisfy 
\begin{equation}
    \left(\frac{en}{t}\right)^t e^{-E_{\mathrm{th}}/6}\le \frac{1}{2n^z}.
\end{equation}
We then again take the \(\log\) on both sides of the inequality, to get $t \log\left(\frac{en}{t}\right) - \frac{E_{\mathrm{th}}}{6} \le -\log (2n^z)$, and after substituting for $t$ and $E_{\mathrm{th}}$, we get 
\begin{align} \label{eq:phi-nonsimplifiedNon}
\frac{n}{2\sqrt{2}N^{1/d}}  \log\left(\frac{en}{\frac{n}{2\sqrt{2}N^{1/d}}}\right) - \varphi_{\mathrm{min}}^{n}  \frac{\sum_{k=1}^{d}\binom{n}{k}}{12N} \le -\log (2n^z) \nonumber \\ \implies 
\varphi_{\mathrm{min}}^{n}  \sum_{k=1}^{d}\binom{n}{k}\ge \frac{6n N ^{1-\frac{1}{d}} \log\left((2\sqrt{2}e)^{d}  N\right)}{{ \sqrt{2}d}}+12N  \log (2n^z).
\end{align}
\noindent From Lemma \ref{lem:K1}, we know that 
\(\frac{6 n N ^{1-\frac{1}{d}}\log\left((2\sqrt{2}e)^{d} N\right)}{{ \sqrt{2}d}}\le  13 n  N  \log N,\) 
and thus we conclude that  
\begin{align}
\varphi_{\mathrm{min}}^{n} \geq \frac{13nN (\log N+\frac{12}{13}\frac{\log (2n^z) }{n})}{{\sum_{k=1}^{d}\binom{n}{k}}}.
\end{align}
Thus, with similar arguments as in the uniform case, we can now proceed to conclude that when $\varphi_{\mathbf{e}} \geq \min\left\{\frac{13nN (\log N +\frac{12}{13}\frac{\log (2n^z) }{n})}{{\sum_{k=1}^{d}\binom{n}{k}}},1\right\}$, then with probability at least $1 - \frac{1}{n^z}$, $\mathbf{X}$ will be such that no partitioning scheme can give $\pi_{\mathbf{X}}<t=\frac{ n}{ 2\sqrt{2}N^{1/d}}$.  
This completes the proof of Lemma~\ref{lem:prob_pi} for the non-uniform case as well.\hfill $\square$


\subsection{Proof of Proposition~\ref{prop: conc on group size}}
\label{proofofprop: conc on group size}
We first present the proof for non-uniform hypergraphs, and then explain how it specializes to uniform ones. Let us recall the partition \(\widetilde{\mathbb{P}}=\{\widetilde{\mathbf{\Phi}}_1, \widetilde{\mathbf{\Phi}}_2, \ldots, \widetilde{\mathbf{\Phi}}_N\}\) of \(\bigcup_{k=1}^{d}\mathbf{A}_{n,k}\) seen in Section \ref{subsec: N' to N}, and let us first define
\begin{equation}
\label{eq: random, m_min, m_max}
m_b\triangleq|\widetilde{\mathbf{\Phi}}_b|, \ b\in[N], \ \ \ \     m_{\min} \triangleq \min_{b\in[N]}m_b, \quad m_{\max} \triangleq \max_{b\in[N]}m_b.
\end{equation}
Let us also recall that the term 
\begin{equation}
\label{eq: balance parameter}
 \delta= \frac{m_{\mathrm{max}}}{\lceil\sum_{k=1}^{d}{n \choose k}/N\rceil}
\end{equation}
is shown in Lemma \ref{lem: delta, power set} to be bounded as \(\delta\le 3\) whenever \(d\le \frac{n}{5}\). To avoid confusion, note that this is the guaranteed balance factor, had we been partitioning the complete hypergraph \(\cup_{k=1}^{d}\mathbf{A}_{n,k}\). Now we account for $\mathbf{X}$, and derive how this achievable balance factor may change. Thus, given now \(\mathbf{X}\) and its corresponding partition into $\mathbf{\Phi}_1,\dots,\mathbf{\Phi}_N$, where $\mathbf{\Phi}_b=\widetilde{\mathbf{\Phi}}_b\cap \mathbf{X}, \ b\in[N]$, we need to now upper bound the corresponding 
\begin{equation}
 \delta_{\mathbf{X}} =  \frac{\max_{b\in[N]} |\mathbf{\Phi}_b|}{\lceil{|\mathbf{X}|}/N\rceil}.   
\end{equation}
Towards this, let us first define, for each edge $\mathbf{e} \in \widetilde{\mathbf{\Phi}}_b$, the indicator random variable $I_\mathbf{e} $ where $I_\mathbf{e} = 1$ when $\mathbf{e} \in \mathbf{X}$, and $I_\mathbf{e} = 0$ otherwise. With this, and recalling our assumption that the edges of $\mathbf{X}$ are each independently drawn from  $\bigcup_{k=1}^{d}\mathbf{A}_{n,k}$, we can see that 
\begin{equation}
\label{eq: lem 11, 1}
  \mu_b = |\mathbf{\Phi}_b|= \sum_{\mathbf{e}  \in \widetilde{\mathbf{\Phi}}_b} I_\mathbf{e}=\sum_{\mathbf{e}  \in \bigcup_{k=1}^{d}\widetilde{\mathbf{\Phi}}_b^{k}} I_\mathbf{e}
\end{equation}
where each $I_\mathbf{e}$ follows a Bernoulli distribution with some probability $\varphi_{\mathbf{e}}=\varphi = \mathbb{E}[I_\mathbf{e}] $. 
This in turn means that 
\begin{equation}
\label{eq: lem 11,2}
   \mathbb{E}[\mu_b] = \mathbb{E}\big[\sum_{ \mathbf{e} \in \widetilde{\mathbf{\Phi}}_b} I_\mathbf{e}\big] = \sum_{\mathbf{e} \in \widetilde{\mathbf{\Phi}}_b} \mathbb{E}[I_\mathbf{e}] = \sum_{\mathbf{e} \in \widetilde{\mathbf{\Phi}}_b} \varphi_{\mathbf{e}}=\sum_{\mathbf{e} \in \widetilde{\mathbf{\Phi}}_b} \varphi=\varphi\ |\widetilde{\mathbf{\Phi}}_b|=\varphi\ m_b. 
\end{equation}
Using \cite[Corollary~4.6]{mitzenmacher2017probability}, for \(0 < \varepsilon < 1\) and \(b \in [N]\), we have
\begin{align}
\Prob\big[|\mu_b-\mathbb{E}(\mu_b)| \ge\varepsilon\mathbb{E}(\mu_b)\big] \le 2\exp\!\big(-\tfrac{\varepsilon^2\mathbb{E}(\mu_b)}{3}\big)
\end{align}
and then, by applying the union bound over all \(N\) groups, we get
\begin{align}
 \Prob\big[\exists b\in [N]\mid \mu_b\not\in[(1-\varepsilon)\mathbb{E}(\mu_b),(1+\varepsilon)\mathbb{E}(\mu_b)]\big]  \leq 2 N\max_{b \in [N]}\big(\exp( - \frac{\varepsilon^2\mathbb{E}(\mu_b)}{3} )\big)\\
 \label{eq: lem 11, 5}
  \leq 2 N\exp\big( - \frac{\varepsilon^2}{3}\min\limits_{b \in [N]}\mathbb{E}(\mu_b)\big ).
\end{align}
We also recall that \(\min\limits_{b \in [N]}\mathbb{E}[ \mu_{b} ]= \varphi \ m_{\mathrm{min}}\), because \(\varphi_{\mathbf e}= \varphi\).
Let us now choose \(\varepsilon = \frac{1}{3}\) and \(\eta = \frac{1}{2n^z}\), and substitute these values into \eqref{eq: lem 11, 5}. By solving the following inequality 
\[
2\ N\exp( - \frac{\varphi\ m_{\mathrm{min}}}{27})\le \eta=\frac{1}{2n^z}
\]
we can conclude that under the condition
\begin{equation}
   \label{eq:mu-condition 4b}
\varphi \  m_{\mathrm{min}}\;\ge\; 27\big(\log(2N)+\log(2n^z)\big) 
\end{equation}
it is the case that 
\begin{align}
\label{eq: lem 11, 6}
     \Prob\Big[\exists b\in[N]\mid \mu_b\not\in[\frac{2}{3}\mathbb{E}(\mu_b),\frac{4}{3}\mathbb{E}(\mu_b)]\Big] \leq 2 N\exp\left( - \frac{\varphi \  m_{\mathrm{min}}}{27} \right)\le {\frac{1}{2n^z}}.
\end{align}
In particular, this means that in the $\varphi$ region 
\begin{equation}
   \label{eq:mu-condition 4}
\varphi \;\ge\; 27\big(\log(2N)+\log(2n^z)\big) / m_{\mathrm{min}}
\end{equation}
we get that, with probability at least \(1-\frac{1}{2n^z}\), the following hold true
\begin{align}
 \mathbb{E}(\mu_b)-\frac{1}{3}\mathbb{E}(\mu_b)\le\mu_b\le \mathbb{E}(\mu_b)+ \frac{1}{3}\mathbb{E}(\mu_b)\\
\varphi  |\widetilde{\mathbf{\Phi}}_b|+\frac{1}{3}\varphi |\widetilde{\mathbf{\Phi}}_b|\le |\mathbf{\Phi}_b|\le\varphi  |\widetilde{\mathbf{\Phi}}_b|+\frac{1}{3}\varphi |\widetilde{\mathbf{\Phi}}_b|
\end{align}
which means that 
\begin{equation}
\label{eq: lem, max phi}
    \max_{b \in [N]} |\mathbf{\Phi}_b|\le \frac{4}{3}\varphi\max_{b \in [N]} |\widetilde{\mathbf{\Phi}}_b|.
\end{equation}

Now, to proceed, we must also analyze the behavior of \(|\mathbf{X}|\). We first know that
\[
|\mathbf{X}|\sim \mathrm{Binomial}(\sum_{k=1}^{d}{n \choose k},\varphi)
\]
with 
\begin{equation}
\label{eq: lem, card X}
    \mathbb{E}(|\mathbf{X}|)=\sum_{\mathbf{e}\in \bigcup_{k=1}^{d}\mathbf{A}_{n,k}} \varphi_{\mathbf{e}}=\sum_{\mathbf{e}\in \bigcup_{k=1}^{d}\mathbf{A}_{n,k}} \varphi=\varphi \sum_{k=1}^{d}{n \choose k}.
\end{equation}
Let us recall from the Chernoff bound \cite[Theorem~4.5]{mitzenmacher2017probability}, that for every \(\varepsilon\in(0,1)\), we have
\begin{equation} \Prob\left(|\mathbf{X}|\le (1-\varepsilon)\mathbb{E}[|\mathbf{X}|]\right)\le\exp\left(-\frac{\varepsilon^2\mathbb{E}[|\mathbf{X}|]}{2}\right)   
\end{equation}
which, after choosing \(\varepsilon=\frac{1}{5}\), yields
\begin{equation}
 \Prob\left(
|\mathbf{X}|\le \frac{4}{5}\mathbb{E}[|\mathbf{X}|]
\right)
\le
\exp\left(
-\frac{\mathbb{E}[|\mathbf{X}|]}{50}
\right).   
\end{equation}

Following very similar steps to those between and 
\eqref{eq: lem 11, 5} and \eqref{eq:mu-condition 4b}, we can readily see that for 
\begin{equation}
    \label{eq: lem, phi_delta,1}
    \varphi\ge\frac{50\log(2n^z)}{\sum_{k=1}^{d}{n \choose k}}
\end{equation}
then
\begin{equation}
  \exp\left(-\frac{\varphi \sum_{k=1}^{d}{n \choose k}}{50}\right)\le\exp(-\log(2n^z))=\frac{1}{2n^z}
\end{equation}
which in turn means that, when we are in the region $\varphi\ge\frac{50\log(2n^z)}{\sum_{k=1}^{d}{n \choose k}}$ in \eqref{eq: lem, phi_delta,1}, then with probability at least \(1-\frac{1}{2n^z}\), it is the case that
\begin{equation}
\label{eq: lem, lowerbound on X}
  |\mathbf{X}|\ge\frac{4}{5}\varphi \sum_{k=1}^{d}{n \choose k}. 
\end{equation}
We are interested in understanding the region of $\varphi$ that lies in the intersection of the regions of \eqref{eq:mu-condition 4}  and \eqref{eq: lem, phi_delta,1} (because we need both \eqref{eq:mu-condition 4} and \eqref{eq: lem, phi_delta,1} to simultaneously hold true). As we will see in Appendix~\ref{appendix: Discussion u}, under the additional assumption that $d\leq \frac{n}{16}$, the region
\begin{equation}\label{eq:varphi_mindelta_n2}\varphi \geq \varphi_{\mathrm{min},\delta}^{n}\triangleq \min \left\{\frac{54N\log(4Nn^z)}{\sum_{k=1}^{d}\binom{n}{k} - (2^{2d+1})N},1\right\}\end{equation}
lies in the desired intersection of \eqref{eq:mu-condition 4}  and \eqref{eq: lem, phi_delta,1}. 

We can thus now conclude that under the assumption of $\varphi \geq \varphi_{\mathrm{min},\delta}^{n}$ and $d\leq \frac{n}{16}$ as well as $N\leq \binom{\lfloor\sqrt{\frac{nd}{2}}\rfloor}{d}$, the IC design guarantees  
\begin{equation} \label{eq:deltaXB1}
\delta_{\mathbf{X}}=\frac{\max_{b \in [N]} |\mathbf{\Phi}_b|}{\lceil|\mathbf{X}|\rceil}\le \frac{ \frac{4}{3}\varphi \max_{b \in [N]} |\widetilde{\mathbf{\Phi}}_b|}{\lceil\frac{4}{5}\varphi \sum_{k=1}^{d}\binom{n}{k}/N\rceil}\le \frac{ \frac{4}{3}\varphi\max_{b \in [N]} |\widetilde{\mathbf{\Phi}}_b|}{\frac{4}{5}\varphi\lceil \sum_{k=1}^{d}\binom{n}{k}/N\rceil}=\frac{5}{3} \delta\le 5
 \end{equation}
with probability at least \((1-\frac{1}{2n^z})^{2}\ge 1-\frac1{n^z}\), which completes the proof for general non-uniform hypergraphs.

We now provide the same proof, specialized to the case of uniform hypergraphs, where similar steps go through with three minor
modifications. 
First, the parameter \(m_{\mathrm{min}}\) now changes  from
\(
m_{\mathrm{min}}
=
\min_{b \in [N]} \sum_{k=1}^{d}
\left|\widetilde{\mathbf{\Phi}}_{b}^{k}\right|,
\) (please recall~\eqref{eq: random, m_min, m_max} as well as \eqref{eq: phi_b:1tod}) to 
\begin{equation}\label{eq:m-minA}
m_{\mathrm{min}}
=
\min_{b \in [N]} \left|\widetilde{\mathbf{\Phi}}_{b}^{d}\right|
\end{equation}
so as to reflect the $d$-uniform case. Second, while the condition in \eqref{eq:mu-condition 4b} remains the same $\varphi\  m_{\mathrm{min}}\;\ge\; 27\big(\log(2N)+\log(2n^z)\big)$, now  $m_{\mathrm{min}}$ takes the form in~\eqref{eq:m-minA}. Third, the $\varphi$ region in~\eqref{eq: lem, phi_delta,1} readily changes 
 -- after replacing $ \sum_{k=1}^{d}{n \choose k}$ with $ {n \choose d}$ -- to
\begin{equation}
    \label{eq: lem, phi_delta,1,u}
    \varphi\ge\frac{50\log(2n^z)}{{n \choose d}}.
\end{equation}
As before, we seek a range of $\varphi$ that lies in the intersection of the two regions $\varphi\;\ge\; \frac{27\big(\log(2N)+\log(2n^z)\big)}{m_{\mathrm{min}} }$ and \eqref{eq: lem, phi_delta,1,u}, and as before, as we will see in Appendix~\ref{appendix: Discussion u}, this region will be defined by 
\begin{equation}\label{eq:varphi_mindelta_u}
\varphi \geq \varphi_{\mathrm{min},\delta}^{u}
\triangleq \min\left\{ \frac{54N\log(4Nn^z)}{\binom{n}{d} - (2^{d+1})N},1\right\}.
\end{equation}
As before, we can again conclude that under the assumption of $\varphi \geq \varphi_{\mathrm{min},\delta}^{u}$ and $d\leq \frac{n}{4}$ (see again Appendix \ref{appendix: Discussion u}) as well as $N\leq \binom{\lfloor\sqrt{\frac{nd}{2}}\rfloor}{d}$, the IC design again guarantees 
\[
\delta_{\mathbf{X}}=\frac{\max_{b \in [N]} |\mathbf{\Phi}_b|}{\lceil|\mathbf{X}|\rceil}\le \frac{ \frac{4}{3}\varphi \max_{b \in [N]} |\widetilde{\mathbf{\Phi}}_b|}{\lceil\frac{4}{5}\varphi \binom{n}{d}/N\rceil}\le \frac{ \frac{4}{3}\varphi\max_{b \in [N]} |\widetilde{\mathbf{\Phi}}_b|}{\frac{4}{5}\varphi\lceil \binom{n}{d}/N\rceil}=\frac{5}{3} \delta\le 5
 \]
with probability at least \((1-\frac{1}{2n^z})^{2}\ge 1-\frac1{n^z}\), which completes the proof for uniform hypergraphs as well.\hfill $\square$

\subsection{Proof of Lemma \ref{lem: phin,k, cardinality}}
\label{appendix: proof of phin,k, cardinality}
The proof follows the steps of the proof of {Lemma 1 in~\cite{IC}}, and is adapted here to include the case of non-uniform hypergraphs.
We begin the proof by first recalling from Section \ref{subsec: non uniform} that \(\widetilde{\mathbf{\Phi}}_\sigma^{k} = \bigcup_{\beta=\lceil \frac{d}{s}\rceil}^{d-1}\bigcup_{\mathcal{I}\subset \sigma} \mathbf{Q}_{\beta, \mathcal I, \sigma}^{k}\) (cf.\eqref{eq: phi_ k<d}). Let us also remember that for any 
\(\beta\in [\beta_{\mathrm{min}}^k, \beta_{\mathrm{max}}^k]\) (see \eqref{eq:beta_min} and \eqref{eq:beta_max}), each group \(\sigma \in {[f] \choose d}\) has exactly \({d \choose \beta}\) distinct subsets \(\mathcal I\) of cardinality \(\beta\), and that for each such \(\mathcal{I} \subset \sigma\), the cardinality of \(\mathbf{Q}_{\beta, \mathcal I, \sigma}^{k}\) is either \(q_{\beta}^{k}+1\) or \(q_{\beta}^{k}\) \textcolor{black}{where \(q_{\beta}^{k}= \lfloor \frac{t_\beta^{k}}{m_\beta}\rfloor\) (recall \( m_\beta\) from \eqref{eq: m_beta} and recall \(t_{\beta}^{k}=|\mathbf{Q}_{\beta,\mathcal{I}}^{k}|\))}.
Thus, we have 
    \(      \frac{t_\beta^{k}}{m_\beta}-1\le q_{\beta}^{k}= \lfloor \frac{t_\beta^{k}}{m_\beta}\rfloor\le \frac{t_\beta^{k}}{m_\beta}, 
    \)
which gives 
 \begin{equation}
   \label{eq: lem3, ub, 2}
 |\widetilde{\mathbf{\Phi}}_{\sigma}^{k}|\leq \sum_{\beta=\beta_{\mathrm{min}}^{k}}^{\beta_{\mathrm{max}}^{k}}(q_{\beta}^{k}+1)\binom{d}{\beta}\leq \sum_{\beta=\beta_{\mathrm{min}}^{k}}^{\beta_{\mathrm{max}}^{k}}\frac{t_\beta^{k}}{m_\beta}\binom{d}{\beta}+\sum_{\beta=\beta_{\mathrm{min}}^{k}}^{\beta_{\mathrm{max}}^{k}}\binom{d}{\beta}
\end{equation}
and
 \begin{equation}
   \label{eq: lem3, lb, 2}
 |\widetilde{\mathbf{\Phi}}_{\sigma}^k|\geq \sum_{\beta=\beta_{\mathrm{min}}^{k}}^{\beta_{\mathrm{max}}^{k}}q_{\beta}^{k}\binom{d}{\beta}\geq \sum_{\beta=\beta_{\mathrm{min}}^{k}}^{\beta_{\mathrm{max}}^{k}}\frac{t_\beta^{k}}{m_\beta}\binom{d}{\beta}-\sum_{\beta=\beta_{\mathrm{min}}^{k}}^{\beta_{\mathrm{max}}^{k}}\binom{d}{\beta}.
\end{equation}
Now using \eqref{eq: m_beta}, we get
    \begin{equation}
      \label{eq: lem3, ub, 3}
        \frac{t_\beta^{k}}{m_\beta} \cdot\binom{d}{\beta}= \frac{t_\beta^{k}}{{f-\beta \choose d-\beta}}\binom{d}{\beta}=t_\beta^{k} \ \frac{{f \choose \beta }}{{f\choose d}}  
    \end{equation}
noting also that  
\begin{equation}
\label{eq: lem3, ub, 4}
    \sum_{\beta=\beta_{\mathrm{min}}^{k}}^{\beta_{\mathrm{max}}^{k}}\binom{d}{\beta}\leq \sum_{\beta=0}^{k}\binom{d}{\beta}=2^d-d.
\end{equation}
At this point, using \eqref{eq: lem3, ub, 3} and \eqref{eq: lem3, ub, 4}, we bound \eqref{eq: lem3, ub, 2} and \eqref{eq: lem3, lb, 2}  as follows
$ |\widetilde{\mathbf{\Phi}}_{\sigma}^k|\leq  \sum_{\beta=\beta_{\mathrm{min}}^{k}}^{\beta_{\mathrm{max}}^{k}}\frac{t_\beta^{k} \binom{f}{\beta}}{\binom{f}{d}}+2^d-d,$ and $
      |\widetilde{\mathbf{\Phi}}_{\sigma}^k|\geq  \sum_{\beta=\beta_{\mathrm{min}}^{k}}^{\beta_{\mathrm{max}}^{k}}\frac{t_\beta^{k} \binom{f}{\beta}}{\binom{f}{d}}-2^d+d.$  Now, for each \(\beta\) and \(\mathcal{I}\in {[f]\choose \beta}\), we have \({f \choose \beta}\) disjoint and equal sized sets \(\mathbf{Q}_{\beta, \mathcal{I}}^{k}\), which means that \(|\mathbf{Q}_{\beta, \mathcal{I}}^{k}|\binom{f}{\beta}=t_\beta^{k} \binom{f}{\beta}=|\mathbf{Q}_{\beta}^{k}|,\) and that
    \(\sum_{\beta=\beta_{\mathrm{min}}^{k}}^{\beta_{\mathrm{max}}^{k}}t_\beta^{k} \binom{f}{\beta}= \sum_{\beta=\beta_{\mathrm{min}}^{k}}^{\beta_{\mathrm{max}}^{k}}|\mathbf{Q}_{\beta}^{k}|={n \choose k}\), which in turn means that 
       \(\frac{{n \choose k}}{{f \choose d} }-2^d+d\le |\widetilde{\mathbf{\Phi}}_{\sigma}^k|\leq\frac{{n \choose k}}{{f \choose d} }+2^d-d\), 
   which concludes the proof. \hfill $\square$

    \subsection{Proof of Lemma \ref{lem: delta, power set}}
    \label{appendix: proof of delta, power set}

 Let us recall from Lemma \ref{lem: phin,k, cardinality}, that for \(k \in [1, d]\), we have
\[
\max_{\sigma \in {[f] \choose d}}|\widetilde{\mathbf{\Phi}}_\sigma^{k}|
\le  \frac{\binom{n}{k}}{N'}+2^d-d.
\]
Recall also from  Section~\ref{subsec: N' to N} -- when extending the partition from \(N'\) to \(N\) groups -- that each group is divided into \(q\) or \(p\) parts (please see~\eqref{eq: q,p,r} for the definitions of \(q\), \(p\), as well as of the parameter \(r\) used below). 
Now since \(q \le p\), we can conclude that 
\begin{align}
\label{eq: lemma 6,0}
  \max_{b \in [N]}|\widetilde{\mathbf{\Phi}}_b^{k}|
&\le\lceil \frac{\frac{\binom{n}{k}}{N'} +2^d -d}{q}\rceil  \le\lceil \frac{\binom{n}{k}}{q N'}\rceil  +\lceil\frac{2^d}{q} -\frac{d}{q}\rceil
\nonumber \\
&
\le \frac{\binom{n}{k}}{q N'}+1  +\lceil\frac{2^d}{q}\rceil -\lfloor\frac{d}{q}\rfloor \le \frac{\binom{n}{k}}{N-r}+2^d  
\end{align}
where the second inequality follows from the fact that \(\lceil x +y\rceil\le \lceil x\rceil+\lceil y\rceil\), the third inequality from the fact that \(\lceil x \rceil\le x+1\) and \(\lceil x-y\rceil \le \lceil x\rceil-\lfloor y \rfloor\), and the last inequality from the fact that
\(1 \le q = \left\lfloor \frac{N}{N'} \right\rfloor \le d\) (see \cite[Lemma~5]{IC}) 
and from the fact that \(N - r = N' q\). 

Let us also recall from \eqref{eq: group sigma, power set} that
\begin{equation}
 \widetilde{\mathbf{\Phi}}_b=\bigcup _{k=1} ^{d}\widetilde{\mathbf{\Phi}}_b^{k}
\end{equation}
which means that  
\begin{equation}
 \max_{b \in [N]}|\widetilde{\mathbf{\Phi}}_b|\le\sum_{k=1} ^{d}\max_{b \in [N]}|\widetilde{\mathbf{\Phi}}_b^{k}|\le \frac{\sum_{k=1}^{d}\binom{n}{k}}{N-r}+\sum_{k=1} ^{d}2^d\le \frac{\sum_{k=1}^{d}\binom{n}{k}}{N-r}+2^{d}d
\end{equation}
which allows us, for the complete non-uniform hypergraph case, to bound \(\delta\) as follows
\begin{align}
   \label{eq: th1: delta}
   \delta= \frac{\max_{b \in [N]}|\widetilde{\mathbf{\Phi}}_b|}{\lceil\sum_{k=1}^{d}\binom{n}{k}/ N\rceil}\le\frac{\sum_{k=1}^{d}\max_{b \in [N]}|\widetilde{\mathbf{\Phi}}_b^{k}|}{\sum_{k=1}^{d}\binom{n}{k}/ N} &\le \frac{N}{N-r}+\frac{2^{d} d N}{\sum_{k=1}^{d}\binom{n}{k}}\\
   \label{eq: th1: delta2}
   &=\frac{N}{N' \lfloor\frac{N}{N'}\rfloor}+\frac{2^{d} d N}{\sum_{k=1}^{d}\binom{n}{k}}\\
   \label{eq: th1: delta3}
 &\le 2+\frac{2^{d} d N}{\sum_{k=1}^{d}\binom{n}{k}}
\end{align}
where the transition from~\eqref{eq: th1: delta} to \eqref{eq: th1: delta2} uses \eqref{eq: lemma 6,0} and also uses that
\(N = N'  q + r\) where \(q = \left\lfloor \frac{N}{N'} \right\rfloor\), while the last inequality holds because  
\(f(x) = \frac{x}{\lfloor x \rfloor} \le 2\) for \(x=\frac{N}{N'} \ge 1\).

Finally, applying our constraints of \(N\le {\lfloor\sqrt{\frac{nd}{2}}\rfloor \choose d}\) and \(n\ge 5d\), and after  substituting \(\sum_{k=1}^{d}\binom{n}{k}\ge {n \choose d}\), all in \eqref{eq: th1: delta3}, we get the desired
\textcolor{black}{\begin{align}
\label{eq:th1: delta4}
\delta \; \le\; 2^d d \frac{ {\lfloor\sqrt{\frac{nd}{2}}\rfloor \choose d}}{\binom{n}{d}}+2 \le 2^d d\frac{ (\lfloor\sqrt{\frac{nd}{2}}\rfloor)(\lfloor\sqrt{\frac{nd}{2}}\rfloor-1)\cdots(\lfloor\sqrt{\frac{nd}{2}}\rfloor-d+1)}{n(n-1)\cdots (n-d+1)}+2\\
\;\le\; 2^{d} d (\frac{\sqrt{\frac{nd}{2}}}{n})^{d}+2
\;\le\; 2^{d} d (\sqrt{\frac{d}{2n}})^{d}+2\le  \frac{d}{(2.5)^{\frac{d}{2}}}+2\le 3.
\end{align}}
Similar steps follow for the \(d\)-uniform case, where first, as we recall from Lemma \ref{lem: delta, power set}, we have
\begin{equation}
    \max\limits_{\sigma \in \binom{[f]}{d}}  |\widetilde{\mathbf{\Phi}}_{\sigma}|=\max\limits_{\sigma \in \binom{[f]}{d}} |\widetilde{\mathbf{\Phi}}_{\sigma}^{d}| \leq\frac{{n \choose d}}{N'}+2^{d}-d
\end{equation}
and after transitioning from \(N'\) to \(N\) groups, we have
\begin{equation}
\label{eq: th1, maxmin, Sett2}
    \max_{b \in [N]}|\widetilde{\mathbf{\Phi}}_b|= \max_{b \in [N]}|\widetilde{\mathbf{\Phi}}_b^{d}|\leq \lceil\frac{\frac{{n \choose d}}{N'}+2^{d}-d}{q}\rceil \le \frac{{n \choose d}}{N-r}+2^{d}
\end{equation}
which now yields 
\begin{align}
   \label{eq: th1: deltau}
   \delta= \frac{\max_{b \in [N]}|\widetilde{\mathbf{\Phi}}_b|}{\lceil\binom{n}{d}/ N\rceil}\le\frac{\max_{b \in [N]}|\widetilde{\mathbf{\Phi}}_b^{d}|}{\binom{n}{d}/ N} &\le \frac{N}{N-r}+\frac{ 2^{d} N}{\sum_{k=1}^{d}\binom{n}{k}}\\
   \label{eq: th1: delta2u}
   &=\frac{N}{N' \lfloor\frac{N}{N'}\rfloor}+\frac{2^{d} N}{\binom{n}{d}} \le 2+\frac{2^{d} N}{\binom{n}{d}}
\end{align}
which, after applying \(N\le {\lfloor\sqrt{\frac{nd}{2}}\rfloor \choose d}\) and \(n\ge 2d\), yields the desired
\( \delta\le 2+ 2^d(\frac{\sqrt{\frac{nd}{2}}}{n})^d\le 3.
\) \hfill $\square$

\textcolor{black}{\subsection{Proof of Corollary \ref{cor:fasterZ-cardinality}}}
\label{appendix: Proof of cor, card X}
In terms of the converse, we recall that the edge set \(\mathbf X\) is sampled according to the random model, where each edge \(\mathbf e\in \Adn\) is retained with probability satisfying
\(
\varphi_{\mathbf e}\ge \varphi_{\mathrm{min}}.
\) (cf.~\eqref{eq:phi1})
For the uniform case, let us quickly then note that 
\begin{equation}
\mathbb{E}[|\mathbf{X}|]=\sum_{\mathbf e\in \Adn}{\varphi_{\mathbf{e}}\ge\varphi_{\mathrm{min}}{n \choose d} = 15nN\left(\log N+\frac{4}{5}\frac{\log (2n^z)}{n}\right). }
\end{equation}
and let us define 
\begin{equation}
\mathcal U\triangleq\left\{\mathbf{X} \ : \ |\mathbf X|\ge7.5nN\left(\log N+\frac{4}{5}\frac{\log (2n^z)}{n}\right)\right\} = \left\{ \mathbf{X} \ : \ |\mathbf X|\ge \frac{\varphi_{\mathrm{min}}|\mathbf{A}|}{2}\right\}.
\end{equation}
Now we employ the machinery of Lemma~\ref{lem:prob_pi}, and recall~\eqref{eq: lem1,eq1}, which, after setting \(t=\frac{n}{2N^{1/d}}\) and  \textcolor{black}{using \(E_{\mathrm{th}}=\varphi_{\mathrm{min}}\frac{{n \choose d}}{2N}\), }gives us
\begin{equation}
    \Prob\left[ \{  \pi_{\mathbf{X}}^{\star}<t\}\cap\{|\mathbf{X}|\ge N E_{\mathrm{th}}\}\right]\le \Prob\left[\exists \mathcal{S} : |\mathcal{S}| = t, Y_\mathcal{S} \ge E_{\mathrm{th}}\right]
\end{equation}
\textcolor{black}{where now $Y_\mathcal{S}$ is a slightly modified random variable} \(
Y_\mathcal{S} \sim \mathrm{Binomial}\left(\binom{t}{d}, \varphi_{\mathrm{min}}\right).
\) 
\textcolor{black}{At this point, everything follows as in the proof of Lemma~\ref{lem:prob_pi}, and we can now readily see that after applying our own value of $E_{\mathrm{th}}$ in~\eqref{eq: lem1, union prob,1}, we get}
\textcolor{black}{\begin{align}
\label{eq: A|B1}
       \Prob\left(\{\pi_{\mathbf{X}}^{\star}<t\}  \mid  \ \mathbf{X} \in \mathcal U\right) =\frac{\Prob\left(\{\pi_{\mathbf{X}}^{\star}<t\}  \cap  \ \mathbf{X} \in \mathcal U\right)}{\Prob\left(\mathbf{X} \in \mathcal U\right)} \\
       \label{eq: A|B2}
       \le \frac{\frac{1}{2n^z}}{1-\frac{1}{n^z}}\\
       \label{eq: A|B3}
       \le \frac{2}{3n^z}
\end{align}
The step \eqref{eq: A|B1} to \eqref{eq: A|B2} follows from \eqref{eq: x in u} and \eqref{eq: lem1, union prob,1}. The \eqref{eq: A|B3} is derived by substituting \(n\ge 2d=4\) and \(z=1\) in \eqref{eq: A|B2}.}
This concludes the converse part, which says that given \(|\mathbf X|\ge7.5nN(\log N + \frac{4}{5}\frac{\log(2 n^z)}{n}) \), then with probability at least \(1-\frac{2}{3n^z}\), no scheme can achieve \(\pi_{\mathbf X}<t\). 
The achievability part of the proof is identical to that of Theorem~\ref{thm:main}, and it follows directly from the guarantees of the IC design on \(\pi_{\mathbf X}\). \textcolor{black}{The same proof captures the non-uniform case as well for its corresponding \(\mathcal U=\left\{\mathbf{X} \ : \ |\mathbf X|\ge6.5nN\left(\log N+\frac{12}{13}\frac{\log (2n^z)}{n}\right)\right\}\) . }\hfill $\square$

\subsection{Statement and Proof of Lemma~\ref{lem:K1} and Lemma~\ref{lem:K2}}
\label{appendix:K2} 

\begin{lemma}[Simplification of expression] \label{lem:K1}
For all \(d\ge 2\) and \(N\ge 2\), then 
\[
\frac{6N ^{1-\frac{1}{d}}\log\left((2\sqrt{2}e)^{d} N\right)}{{ \sqrt{2}d}}
\leq
13 N \log N .
\]
\end{lemma}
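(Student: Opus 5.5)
We plan to reduce the inequality to a one-parameter check by dividing both sides by $N\log N$, which is positive since $N\ge 2$. Throughout, $\log$ is the natural logarithm, as in the proof of Lemma~\ref{lem:prob_pi}, where it is paired with $e^{-E_{\mathrm{th}}/6}$. Expand
\[
\log\!\left((2\sqrt{2}e)^{d}N\right)=d\,c_0+\log N,\qquad c_0\triangleq \log(2\sqrt{2}e)\approx 2.04 .
\]
The claim then becomes $g(N,d)\le 13$, where
\[
g(N,d)\triangleq N^{-1/d}\left(\frac{6c_0}{\sqrt{2}}\cdot\frac{1}{\log N}+\frac{6}{\sqrt{2}\,d}\right).
\]

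\textbf{Monotonicity in $N$.} For fixed $d$, both factors $N^{-1/d}$ and $1/\log N$ are positive and decreasing in $N$ on $[2,\infty)$. Hence $g(\cdot,d)$ is decreasing, and it suffices to treat $N=2$. We then need
\[
h(d)\triangleq 2^{-1/d}\left(\frac{6c_0}{\sqrt{2}\log 2}+\frac{6}{\sqrt{2}\,d}\right)\le 13,\qquad d\ge 2 .
\]

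\textbf{The one-variable check at $N=2$.} This is the only step with little slack, because the first term does not vanish as $d$ grows. For $d\to\infty$ we have $h(d)\to 6c_0/(\sqrt{2}\log 2)\approx 12.49$, which is below $13$ but not by much, so crude bounds such as $N^{1-1/d}\le N$ together with $\log N\ge \log 2$ applied termwise must be avoided. The plan is to bound the two terms of $h$ separately:
\[
\frac{6c_0}{\sqrt{2}\log 2}\,2^{-1/d}\le 12.5,\qquad \frac{6}{\sqrt{2}}\,\frac{2^{-1/d}}{d}\le \frac{4.25}{d}.
\]
This already gives $h(d)\le 13$ once $4.25/d\le 0.5$, i.e.\ for $d\ge 9$. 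For $2\le d\le 8$ we use the factor $2^{-1/d}\le 2^{-1/8}\approx 0.917$ in both terms, which gives $h(d)\le 0.917\,(12.49+2.13)\approx 13.4$. That is slightly too large for the smallest values of $d$, so I expect to handle the few small cases by direct evaluation. For instance, $h(2)\approx 0.707\,(12.49+2.12)\approx 10.3$, $h(3)\approx 0.794\,(12.49+1.41)\approx 11.0$, and the values increase toward the limit $12.49$. Alternatively, one can observe that $2^{-1/d}(A+B/d)$ is increasing in $d$ when $A\log 2\ge B$, which holds here, so that $h(d)\le \lim_{d\to\infty}h(d)\approx 12.49<13$.

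The monotonicity-in-$d$ route is the cleanest, and I would try it first. Writing $u=1/d\in(0,1/2]$, we have $h=2^{-u}(A+Bu)$ and
\[
\frac{\partial h}{\partial u}=2^{-u}\left(B-(A+Bu)\log 2\right)\le 2^{-u}\left(B-A\log 2\right)<0 ,
\]
since $A\log 2=6c_0/\sqrt{2}\approx 8.66>B=6/\sqrt{2}\approx 4.24$. Thus $h$ is increasing in $d$ and bounded by its limit $6c_0/(\sqrt{2}\log 2)<13$. The main obstacle is only the numerical verification of this last constant, namely $6\log(2\sqrt{2}e)/(\sqrt{2}\log 2)\approx 12.49$. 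This quantity is the source of the constant $13$ in the lemma.
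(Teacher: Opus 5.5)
Your proof is correct. It ends at the same numerical check as the paper, namely $\frac{6}{\sqrt{2}}\log(2\sqrt{2}e)\le 13\log 2$, i.e.\ $A=\frac{6\log(2\sqrt{2}e)}{\sqrt{2}\log 2}\approx 12.485<13$, but it gets there by a different mechanism.

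\textbf{The paper's route.} It divides only by $N^{1-1/d}$ and rewrites the claim as $\bigl(13N^{1/d}-\frac{6}{\sqrt{2}d}\bigr)\log N\ge \frac{6}{\sqrt{2}}\log(2\sqrt{2}e)$. It then lower-bounds the left side in a single chain, using $\log N\ge \log 2$ and the linearization $N^{1/d}\ge 2^{1/d}=e^{(\log 2)/d}\ge 1+\frac{\log 2}{d}$. The leftover term $\bigl(13\log 2-\frac{6}{\sqrt{2}}\bigr)\frac{\log 2}{d}$ is nonnegative and is simply dropped.

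\textbf{Your route.} You normalize by $N\log N$ and prove exact monotonicity in $N$, so $N=2$ is the worst case. You then prove monotonicity in $d$ by differentiating in $u=1/d$.

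\textbf{What each buys.} The paper's argument is shorter and needs no calculus beyond $e^x\ge 1+x$. Yours identifies the extremal configuration ($N=2$, $d\to\infty$) and shows that $\sup_{N\ge 2,\,d\ge 2}\frac{\mathrm{LHS}}{N\log N}$ equals $A$ exactly. So the constant $13$ cannot be lowered below about $12.49$. The paper's chain also works precisely for constants at least $A$, but it does not make this sharpness visible.

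\textbf{Simplifications.} Your sign condition $A\log 2>B$ reduces to $\log(2\sqrt{2}e)>1$, which needs no numerics. The exploratory case split ($d\ge 9$ versus small $d$) becomes unnecessary once monotonicity in $u$ is established. Finally, the choice of natural logarithm is immaterial, because both sides scale identically under a change of base.
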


\begin{proof}
We seek to prove that 
\(
\frac{6 N ^{1-\frac{1}{d}}\log\left((2\sqrt{2}e)^{d} N\right)}{{ \sqrt{2}d}}\leq13 N \log N ,
\)
and after dividing by \(N^{1-\frac1d}>0\), to prove that
\(
\frac{6}{\sqrt{2}}\log(2\sqrt{2}e)+\frac{6}{\sqrt{2}d}\log N\leq13N^{1/d}\log N .
\)
or equivalently, to prove that
\begin{equation} \label{eq:abcd1}
\left(13N^{1/d}-\frac{6}{\sqrt{2}d}\right)\log N
\ge
\frac{6}{\sqrt{2}}\log(2\sqrt{2}e).
\end{equation}
Since \(N\ge 2\), it is the case that \( \log N\ge \log 2, \ \ \text{and} \ \ N^{1/d}\ge 2^{1/d}=e^{(\log 2)/d}\ge1+\frac{\log 2}{d},\)
which we then apply to the LHS of \eqref{eq:abcd1}, to get
\begin{multline}
\left(13N^{1/d}-\frac{6}{\sqrt{2}d}\right)\log N
\ge
\left(13\left(1+\frac{\log 2}{d}\right)-\frac{6}{\sqrt{2}d}\right)\log 2
\\ =
13\log 2
+
\left(13\log 2-\frac{6}{\sqrt{2}}\right)\frac{\log 2}{d}
\geq
13\log 2
\geq
\frac{6}{\sqrt{2}}\log(2\sqrt{2}e),
\end{multline}
which proves the lemma.
\end{proof}

\begin{lemma}[Simplification of expression] \label{lem:K2}
For all \(d\ge 2\) and \(N\ge 2\), then 
\[
15N\log N
>
\frac{6N^{1-\frac1d}\log\!\left((2e)^dN\right)}{d}.
\]
\end{lemma}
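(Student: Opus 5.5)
Mirroring the proof of Lemma~\ref{lem:K1}, I divide both sides by $N^{1-\frac1d}>0$ and expand $\log((2e)^dN)=d\log(2e)+\log N$. The claim then becomes
\[
6\log(2e)+\frac{6}{d}\log N < 15N^{1/d}\log N ,
\]
which I rearrange as
\begin{equation*}
\left(15N^{1/d}-\frac{6}{d}\right)\log N > 6\log(2e).
\end{equation*}
The plan is to lower bound the left-hand side using only $N\ge2$ and $d\ge2$.

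First, $N^{1/d}\ge 2^{1/d}=e^{(\log 2)/d}\ge 1+\frac{\log 2}{d}$. The coefficient $15N^{1/d}-\frac6d$ is therefore at least $15+\frac{15\log2-6}{d}$. Since $15\log 2\approx 10.4>6$, this is positive and at least $15$.

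Second, I multiply by $\log N\ge\log 2$; this is legitimate because the coefficient is positive. The left-hand side is then at least $15\log2\approx 10.397$.

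Third, the right-hand side is $6\log(2e)=6(1+\log 2)\approx 10.159$. Since $10.397>10.159$, the strict inequality holds and the lemma follows.

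The only real obstacle is the narrow numerical margin: $15\log 2$ versus $6(1+\log 2)$, i.e.\ $9\log 2>6$, i.e.\ $\log 2>2/3$. This holds because $e^{2/3}<2$, equivalently $e^2<8$, which is true since $e^2\approx 7.389$. I would state this check explicitly. I would also note that the term $\frac{15\log2-6}{d}\log 2$ is nonnegative and may simply be dropped, as was done in Lemma~\ref{lem:K1}.
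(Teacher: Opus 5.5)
Your proof is correct and is essentially the paper's argument. The paper rearranges to $\left(\frac{5d}{2}N^{1/d}-1\right)\log N>d\log(2e)$, which is your inequality multiplied by $d/6$, and then uses the same two bounds $\log N\ge\log 2$ and $N^{1/d}\ge 1+\frac{\log 2}{d}$. Its decisive check $\frac{5}{2}\log 2>\log(2e)$ is exactly your $\log 2>2/3$; your version also makes the strict inequality explicit, whereas the paper's chain ends with $\ge$ and leaves strictness to the dropped positive term $\left(\frac{5}{2}\log 2-1\right)\log 2$.
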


\begin{proof}
We seek to prove that \(15N\log N>\frac{6N^{1-\frac1d}\log\!\left((2e)^dN\right)}{d}\). After dividing by \(N^{1-\frac1d}>0\), it suffices to prove that \(15N^{1/d}\log N>6\log(2e)+\frac{6}{d}\log N\),
or equivalently, to prove that
\begin{equation} \label{eq:abcd2}
\left(\frac{5d}{2}N^{1/d}-1\right)\log N>d\log(2e).
\end{equation}
Similar to Lemma \ref{lem:K1}, we have \(\log N\ge \log 2\) and \(N^{1/d}\ge 1+\frac{\log 2}{d}\),
which we then apply to the LHS of \eqref{eq:abcd2}, to get
\begin{multline}
\left(\frac{5d}{2}N^{1/d}-1\right)\log N\ge\left(\frac{5d}{2}\left(1+\frac{\log 2}{d}\right)-1\right)\log 2
\\
=\frac{5d\log 2}{2}+\left(\frac{5\log 2}{2}-1\right)\log 2\geq d\log(2e) +\left(\frac{5\log 2}{2}-1\right)\log 2\geq
d\log(2e)
\end{multline}
where we used \(\frac{5}{2}\log 2>\log(2e)\) and \(\frac{5}{2}\log 2>1\). This proves the lemma.
\end{proof}

\section{Additional Proofs} \label{appendix:AdditionalProofs}
\subsection{Proof of Equation~\eqref{eq:pi2sd}}
\label{appendix: bound N}
We here show that for any \(n\), \(d\leq \frac{n}{2}\), and \(N \le {\left\lfloor \sqrt{\frac{nd}{2}} \right\rfloor \choose d}\), and with \(s = \left\lfloor \frac{n}{f + d} \right\rfloor + 1\), the vertex footprint achieved by the IC design, satisfies \(\pi \le 2sd\).

Recall 
(cf.~\eqref{eq: f def}) parameter \(f = \max \left\{ r \in \mathbb{Z}^{+} \,\mid\, \binom{r}{d} \leq N \right\}\), and let us consider the case of interest here, where \(f \nmid n\). Let { \(g = n - sf\). From \cite[Lemma~3]{IC}, we have \(\pi = sd + g\). Therefore, it suffices to ensure that \(0 \le g \le sd\), which would imply that \(\pi \le 2sd\). Towards this, let us define  
\(\mathcal{F}_{\mathrm{valid}} \triangleq \left\{f \in [d,n] \subset \mathbb{Z} \ \mid\  \left\lfloor\frac{n}{f + d}\right\rfloor + 1 \leq \left\lfloor\frac{n}{f} \right \rfloor \right\}, \)
and let us claim that if \(f \in \mathcal{F}_{\mathrm{valid}}\), then \(n\) can be written, as suggested above, as \(n = f  s + g\), for some \(g \in [0,s  d]\).} To prove the claim, suppose \(f \in \mathcal{F}_{\mathrm{valid}}\), which implies that  
\(s \leq \left\lfloor n/f \right\rfloor,\)  
and hence that 
\(g = n - s f \geq n - \left\lfloor \frac{n}{f} \right\rfloor f \geq 0.\) Further, we also have
\[
g = n - s f = n - \left( \left\lfloor \frac{n}{f + d} \right\rfloor + 1 \right) f
\leq n - \left\lceil \frac{n}{f + d} \right\rceil f
\leq n - \frac{n}{f + d} f
= \frac{n d}{f + d}.
\]

Since \(\frac{n}{f + d} < s\), it follows that
\begin{equation}\label{eq:gv1}
g \leq \frac{n d}{f + d} \leq s d.
\end{equation}
To complete the argument, it remains to ensure that \(f \in \mathcal{F}_{\mathrm{valid}}\). To ensure that the condition \(s \leq \left\lfloor n/f \right\rfloor\) holds, it is sufficient to guarantee that
\(
\frac{n}{f} - \frac{n}{f + d} \geq 1.
\) Rearranging the inequality leads to a quadratic inequality  
\(
f^2 + df - dn \leq 0.
\) Solving the quadratic inequality gives an upper bound on $f$ as follows
\begin{equation}
    \label{eq: f_max}
f \leq \frac{-d + \sqrt{d^2 + 4dn}}{2} \triangleq f_{\mathrm{max}}.
\end{equation}
Since \(\sqrt{d^2 + 4dn} \leq d + 2\sqrt{dn}\), we have
\(
f_{\mathrm{max}} \leq \sqrt{dn}
\) and let \(
f_{\mathrm{max}} =c \sqrt{dn}
\) for some $c\in [0,1]$. Let us now define $y\triangleq \frac{d}{n}$. Since $d\leq \frac{n}{2}$, we have $y\leq \frac{1}{2}$. From~\eqref{eq: f_max}, we obtain
\(
c\cdot\sqrt{dn}\le \frac{-d + \sqrt{d^2 + 4dn}}{2}.
\) Dividing both sides of this inequality with $\sqrt{dn}$ gives
\(
2c \le \sqrt{y+4} - \sqrt{y}.
\) Further, taking \(y = \frac{1}{2}\) yields
\(
2c \le \sqrt{\frac{9}{2}} - \sqrt{\frac{1}{2}} = \sqrt{2},
\)
and hence \(c \le \frac{1}{\sqrt{2}}\). We choose \(c = \frac{1}{\sqrt{2}}\), which gives \(f_{\mathrm{max}} = \sqrt{\frac{nd}{2}}\). {Thus, when \(n \ge 2d\), we have that, as long as \(
N\leq N_{\mathrm{max}} \triangleq {\left\lfloor \sqrt{\frac{nd}{2}} \right\rfloor \choose d}
\), then \(f \in \mathcal{F}_{\mathrm{valid}}\).}

Therefore, for given \(n\), \(d\leq \frac{n}{2}\), and \(N \le {\left\lfloor \sqrt{\frac{nd}{2}} \right\rfloor \choose d}\), and \(s = \left\lfloor \frac{n}{f + d} \right\rfloor + 1\), we have \(0 \le g \le sd\), and hence \(\pi\leq 2sd\).\hfill $\square$

\subsection{Proof of Equations~\eqref{eq:varphi_mindelta_n2} and \eqref{eq:varphi_mindelta_u}}\label{appendix: Discussion u}

We first handle the non-uniform case. We wish to show that under the assumption that $d\leq n/16$, the region defined in~\eqref{eq:varphi_mindelta_n2} lies in the intersection of the regions defined by \eqref{eq:mu-condition 4}  and \eqref{eq: lem, phi_delta,1}. 
It is indeed easy to first see that the region defined by~\eqref{eq:varphi_mindelta_n2}  lies inside the region defined by \eqref{eq: lem, phi_delta,1}, which states that \(
 \varphi\ge\frac{50\log(2n^z)}{\sum_{k=1}^{d}{n \choose k}}\), because \eqref{eq:varphi_mindelta_n2}, which states that \(
 \varphi \geq  \min \left(\frac{54N\log(4Nn^z)}{\sum_{k=1}^{d}\binom{n}{k} - (2^{2d+1})N},1\right)
 \),  has a visibly bigger numerator and smaller denominator, compared to \eqref{eq: lem, phi_delta,1}.

Now let us focus on \eqref{eq:mu-condition 4} which says that 
\[\varphi\;\ge\; 27\big(\log(2N)+\log(2n^z)\big) / m_{\mathrm{min}}. \] To make the comparison, we will bound $m_{\mathrm{min}}$, where $
m_{\mathrm{min}}= \min\limits_{b\in [N]} |\widetilde{\mathbf{\Phi}}_{b}|$ (cf.~\eqref{eq: random, m_min, m_max}). However, for every $b \in [N]$, we have $|\widetilde{\mathbf{\Phi}}_{b}|=\sum_{k=1}^d |\widetilde{\mathbf{\Phi}}_{b}^k|$. Therefore, we have $m_{\mathrm{\min}}= \min_{b\in [N]}\sum_{k=1}^d |\widetilde{\mathbf{\Phi}}_{b}^k|$. To bound $|\widetilde{\mathbf{\Phi}}_{b}^k|$ for each $b\in [N]$, we first look at the case where there are $N'=\binom{f}{d}$ groups, and later we account for the transition from $N'$ to $N$. In this first case where $N=N'=\binom{f}{d}$, we can see that from Lemma~\ref{lem: phin,k, cardinality}, we have     
\[
\min\limits_{\sigma \in \binom{[f]}{d}} |\widetilde{\mathbf{\Phi}}_{\sigma}^{k}| \geq\frac{{n \choose k}}{{f \choose d}}-2^{d}+d
\]
and then, after summing up the above bounds over all $1\le k \le d$, we obtain
\begin{align}   \min\limits_{\sigma \in \binom{[f]}{d}}|\widetilde{\mathbf{\Phi}}_\sigma|= \min\limits_{\sigma \in \binom{[f]}{d}}\sum_{k=1}^d   |\widetilde{\mathbf{\Phi}}_{b}^k|\ge \sum_{k=1}^{d} \big( \frac{\binom{n}{k}}{\binom{f}{d}} -2^{d} + d \big)  &= \frac{\sum_{k=1}^{d} \binom{n}{k}}{\binom{f}{d}} - \sum_{k=1}^{d} (2^{d} - d) \\
     \label{eq: sum bounnd1}
    &\ge \frac{\sum_{k=1}^{d} \binom{n}{k}}{\binom{f}{d}} -2^d d  +d^{2} \\
    \label{eq: sum bounnd2}
    &\ge \frac{\sum_{k=1}^{d} \binom{n}{k}}{\binom{f}{d}} - 2^{2d}+d^{2} 
\end{align}
where the step from \eqref{eq: sum bounnd1} to \eqref{eq: sum bounnd2} follows since \(d\le 2^{d}\) for any \(d\ge 1\).

At this point, let us briefly recall that in Section \ref{subsec: N' to N} (where we extend the partition, from \(N'\) to \(N\) groups), each group is divided into \(q\) or \(p\) parts, and let us recall  (cf.~\eqref{eq: q,p,r}) the definitions of \(q\), \(p\), and \(r\). Now, let us note that  since 
\(q \le p\), we have  
\begin{align}
\label{eq: lem,m_min,0}
   m_{\mathrm{min}}= \min\limits_{b\in [N]} |\widetilde{\mathbf{\Phi}}_{b}| \ge \lfloor\frac{\frac{\sum_{k=1}^{d}{n \choose k}}{N'}- 2^{2d}+d^{2} }{p}\rfloor\ge \frac{\sum_{k=1}^{d}{n \choose k}}{pN'}-\frac{2^{2d}}{p}+\frac{d^{2}}{p}-1\\
   \label{eq: lem,m_min,1}
   \ge \frac{\sum_{k=1}^{d}{n \choose k}}{N+N'}-\frac{2^{2d}}{p}+\frac{d^{2}}{p}-1\\
   \label{eq: lem,m_min,2}
   \ge \frac{\sum_{k=1}^{d}{n \choose k}}{2N}-2^{2d}
\end{align}
where the transition from~\eqref{eq: lem,m_min,0}  to \eqref{eq: lem,m_min,1} follows from the facts that a) if \(r=0\), then \(pN'=N\), and b) if \(r\neq0\), then \(p=q+1\) and thus \(N\le pN'=N+N'-r\le N+N'\le 2N\). Furthermore, in the above, the transition from~\eqref{eq: lem,m_min,1} to \eqref{eq: lem,m_min,2} follows from the fact that \(1\le p\le d+1\le d^{2}\).
At this point, let us substitute \eqref{eq: lem,m_min,2} into condition \eqref{eq:mu-condition 4}, to get a new threshold on $\varphi$, which takes the form
\begin{equation}\label{eq:varphi_mindelta_n}
\varphi\ge \frac{54N\log(4Nn^z)}{\sum_{k=1}^{d}\binom{n}{k} - (2^{2d+1})N}.
\end{equation}
Since \(\varphi\le 1\), we define the threshold as
\begin{equation}
\label{eq: varphi, min, delta, n}
    \varphi_{\mathrm{min},\delta}^{n}=\min\left\{ \frac{54N\log(4Nn^z)}{\sum_{k=1}^{d}\binom{n}{k} - (2^{2d+1})N}, 1\right\}.
\end{equation}
Consequently, we conclude that the $\varphi$ region \(\varphi\ge \varphi_{\mathrm{min},\delta}^{n}\) lies in the intersection of  \eqref{eq:mu-condition 4}  and \eqref{eq: lem, phi_delta,1}. The condition \(d\le n/16\) is used to ensure that the denominator of \eqref{eq: varphi, min, delta, n} remains positive because, firstly,  
\begin{align}
    \sum_{k=1}^{d}\binom{n}{k} - (2^{2d+1})N\ge \binom{n}{d} - 2^{2d+1} N\ge N(\frac{{n \choose d}}{N}-2^{2d+1})\\
    \ge N(\frac{{n \choose d}}{{\lfloor \sqrt{\frac{nd}{2}} \rfloor \choose d}}-2^{2d+1})\ge N ((\frac{2n}{d})^{\frac{d}{2}}-2^{2d+1})
\end{align}
and because, secondly, the last term  \((\frac{2n}{d})^{\frac{d}{2}}-2^{2d+1}\) is non-negative when \(n\ge 16d\).

Similar steps follow for the case of $d$-uniform hypergraphs, where $\mathbf{X} \subseteq \mathbf{A}_{n,d}$. Now, from Lemma \ref{lem: delta, power set}, we have
\begin{equation}
    |\widetilde{\mathbf{\Phi}}|=|\widetilde{\mathbf{\Phi}}_{\sigma}^d| \ge \frac{\binom{n}{d}}{\binom{f}{d}} - 2^{d} + d
\end{equation}
while with steps similar to those  in \eqref{eq: lem,m_min,0}, \eqref{eq: lem,m_min,1},
 \eqref{eq: lem,m_min,2}, we get
\begin{equation}
\label{eq: lem,m_min_u,0}
     m_{\mathrm{min}}\ge \frac{{n \choose d}}{2N}-2^{d}.
\end{equation}
This will yield a new threshold $\varphi$ of the form
\begin{equation}\label{eq:varphi_mindelta_uBB}
\varphi_{\mathrm{min},\delta}^{u}
=\min\left\{\frac{54N\log(4Nn^z)}{\binom{n}{d} - (2^{d+1})N},1\right\}
\end{equation}
and again the \(\varphi\) region \(\varphi\ge \varphi_{\mathrm{min},\delta}^{u}\) lies in the intersection of  \eqref{eq:mu-condition 4}  and \eqref{eq: lem, phi_delta,1,u}. Now we see that the condition \(d\leq n/4\) is sufficient to ensure that the denominator of \eqref{eq:varphi_mindelta_uBB} is positive, because
\begin{align}
    {n \choose d} - (2^{d+1})N\ge \binom{n}{d} - 2^{2d+1} N\ge N(\frac{{n \choose d}}{{\lfloor \sqrt{\frac{nd}{2}} \rfloor \choose d}}-2^{d+1})\ge N ((\frac{2n}{d})^{\frac{d}{2}}-2^{d+1})
\end{align}
where we see that  \((\frac{2n}{d})^{\frac{d}{2}}-2^{d+1}\ge 0\) holds for \(n\ge 4d\).\hfill $\square$

\end{appendices}
\printbibliography

\end{document}